%% file: SODA.tex
\documentclass{article}
\usepackage[ruled]{algorithm2e} 
\usepackage[utf8]{inputenc}
\usepackage{mathtools}
\usepackage{comment}
\usepackage{amsmath}
\usepackage{algpseudocode}
\usepackage{amsfonts,amssymb,amsthm,boxedminipage,color,url,fullpage}
\usepackage{amsfonts}
\usepackage{amssymb}
\usepackage{amsthm}
\usepackage{boxedminipage}
\usepackage{color}
\usepackage{url}
\usepackage{fullpage}
\usepackage{mathtools}
\usepackage[numbers]{natbib}

\usepackage{thmtools} 
\usepackage{thm-restate}

\usepackage{enumitem}
\usepackage{tcolorbox}
\usepackage[colorlinks=true,citecolor=blue,linkcolor=blue,urlcolor=blue]{hyperref}

\usepackage[labelfont=bf]{caption}
\usepackage{aliascnt,cleveref}
\usepackage{authblk}
\usepackage{accents}
\usepackage{tikz}
\usetikzlibrary{positioning,arrows}
\usepackage{pgfplots}
\pgfplotsset{width=10cm,compat=1.9}
\usepgfplotslibrary{external}
\allowdisplaybreaks

\usepackage[]{color-edits}

\input{macros}

\newcommand{\newShare}{strong EEFX share}

\newcommand{\bv}{\bar{v}}

\theoremstyle{plain}
\newtheorem{theorem}{Theorem}[section]
\newtheorem{lemma}[theorem]{Lemma}
\newtheorem{claim}[theorem]{Claim}
\newtheorem{corollary}[theorem]{Corollary}

\newtheorem{proposition}[theorem]{Proposition}

\newtheorem{observation}[theorem]{Observation}

\newtheorem{example}[theorem]{Example}

\theoremstyle{definition}
\newtheorem{definition}[theorem]{Definition}

\theoremstyle{definition}

\theoremstyle{definition}

\allowdisplaybreaks

\title{The Power of Share-Based Notions in Proving \\ Envy-Based Fairness Guarantees}

\author[1,2]{Hannaneh Akrami}
\author[3]{Uriel Feige}
\author[4]{Ryoga Mahara}
\author[1]{Kurt Mehlhorn}
\author[5]{Nidhi Rathi}

\affil[1]{Max Planck Institute Informatics and Universität des Saarlandes, Germany}
\affil[2]{Hertz Chair for Algorithms and Optimization, University of Bonn, Germany}
\affil[3]{Weizmann Institute, Israel}
\affil[4]{The University of Tokyo, Japan}
\affil[5]{University of Warsaw, Poland}

\date{}

\begin{document}

\maketitle
\begingroup
\renewcommand{\thefootnote}{}
\footnotetext{$^*$This version merges our earlier arXiv preprint with the results of \cite{F25}, and includes substantial revisions throughout.}
\endgroup

\begin{abstract}

    We study the fundamental problem of fairly allocating a set of indivisible goods among agents with monotone valuations. We introduce a new share-based fairness notion, the \emph{residual maximin share} (RMMS), and show that it provides a common framework for several existing ``lone-divider'' style techniques in fair division. RMMS enjoys two key properties: it is \emph{feasible} and \emph{self-maximizing}.  
    Using RMMS, we obtain simple proofs of the existence of partial allocations that are both RMMS and envy-free up to any good (EFX), as well as complete allocations that are both RMMS and \emph{envy-free up to one good} (EF1) (even the stronger notion of EFL), thereby unifying and strengthening several previously known results.

    We further demonstrate the power of the share-based approach by studying the compatibility of fairness notions related to the long-standing EFX problem. While allocations satisfying either \emph{epistemic EFX} (EEFX) or \emph{envy-freeness up to one good} (EF1) are known to exist for general monotone valuations, whether they can always be achieved simultaneously has remained open in all settings where the existence of EFX itself is unresolved. For \emph{additive} valuations, we resolve this question positively by proving the existence of allocations that are both EEFX and, in fact, satisfy the stronger notion of EFL. Our approach introduces another share-based fairness notion, the \emph{strong EEFX share}, which implies EEFX feasibility of bundles. We show that the strong EEFX share is upper bounded by RMMS, allowing us to leverage the RMMS framework to establish the existence of EEFX+EFL allocations. This answers the main open question posed by \cite{EpistemicMonotone}.

    Finally, although our algorithm for computing EEFX and EF1 allocations may take exponential time in general, we develop a polynomial-time algorithm for the case of \emph{restricted} additive valuations. This algorithm circumvents the lone-divider approach entirely, and instead exploits the structural properties specific to restricted additive valuations to achieve both EF1 and EEFX guarantees. 

\end{abstract}



\input{new-intro}

\section{Preliminaries}\label{sec:pre}
A \emph{fair division instance} is given by $\instance$, where $N=[n]$, and for any agent $i\in N$, $v_i: 2^M \rightarrow \mathbb{R}_{\ge 0}$ denotes their valuation function over the set of items. 
For all $i \in N$, we assume $v_i$ is \emph{normalized},
i.e., $v_i(\emptyset)=0$, and \emph{monotone}, i.e., for all $i \in N$, $g \in M$ and $S \subseteq M$,
$v_i(S \cup \{g\}) \ge v_i(S)$. 
Moreover, we say a valuation function $v$ is \emph{additive} if for all subsets $S\subseteq M$, we have $v(S) = \sum_{g \in S} v(\{g\})$.

An \emph{allocation} $X=(X_1,X_2,\ldots,X_n)$ of the items among agents is a partition of items into $n$ bundles
such that bundle $X_i$ is allocated to agent $i$. That is, we have $X_i \cap X_j = \emptyset$ for all
$i,j \in N$ with $i \neq j$ and $\bigcup_{i \in [n]} X_i = M$. In case $\bigcup_{i \in [n]} X_i \subset M$, we call the allocation as \emph{partial} and denote the set of unallocated items by $X_0$. Whenever not specified, the allocation is \emph{complete}; i.e., $\bigcup_{i \in [n]} X_i = M$.

\begin{definition}[EFX (Envy-freeness up to any good)~\cite{caragiannis2019unreasonable}]\label{def:efx}
    An allocation $X$ is \emph{envy-free up to any good (EFX)} if for all agents $i,j\in N$,
    for any item $g\in X_j$, we have $v_i(X_i)\ \ge\ v_i(X_j\setminus\{g\}).$
\end{definition}

\begin{definition}[EFX Partition]
    A partition $(P_1, \ldots, P_k)$ of $M$ into $k$ bundles is EFX for agent $i$ if $v_i(P_j) \geq v_i(P_{j'} \setminus \{g\})$ for all $j,j' \in [k]$ and $g \in P_{j'}$. 
\end{definition}
Plaut and Roughgarden~\cite{plaut2020almost} proved that when agents have identical monotone valuations, an EFX allocation exists. 
Equivalently, for any agent $i$, there exists an EFX partition of $M$ into $n$ bundles.
\begin{theorem}[\cite{plaut2020almost}]
    For an agent $i$ with monotone valuation, there exists an EFX partition of $M$ into $n$ bundles for agent $i$. 
\end{theorem}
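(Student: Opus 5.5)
We prove the existence of an EFX partition for a single monotone valuation $v=v_i$ by a potential-function argument over $n$-partitions of $M$, following the leximin++ idea of Plaut and Roughgarden. We allow partitions with empty bundles and compare two $n$-partitions $P$ and $Q$ as follows. Sort each partition's bundles in non-decreasing order of value. Compare the sorted value vectors lexicographically. When bundle values are equal, break ties by bundle size, preferring larger cardinality. Formally, we compare the sorted sequence of pairs $(v(P_j),|P_j|)$ lexicographically; this is the leximin++ order. Since $M$ is finite, there are finitely many partitions, so a maximum $P^*$ under this order exists. The plan is to show that $P^*$ is EFX for $v$.

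Suppose, for contradiction, that $P^*$ is not EFX. Then there are indices $j,j'$ and a good $g\in P^*_{j'}$ with $v(P^*_j)<v(P^*_{j'}\setminus\{g\})$. Replace $j$ by an index $\ell$ of a least-valued bundle, breaking ties by smallest cardinality. Then $v(P^*_\ell)\le v(P^*_j)<v(P^*_{j'}\setminus\{g\})$, which forces $\ell\ne j'$. Define a new partition $Q$ from $P^*$ by setting $Q_{j'}=P^*_{j'}\setminus\{g\}$ and $Q_\ell=P^*_\ell\cup\{g\}$, leaving all other bundles unchanged.

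The key step, and the only delicate one, is to verify that $Q$ is strictly larger than $P^*$ in the leximin++ order. Both new bundles have value at least $v(P^*_\ell)$. For $Q_\ell$ this follows from monotonicity. For $Q_{j'}$ it follows because $v(P^*_{j'}\setminus\{g\})>v(P^*_\ell)$.

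First suppose $P^*_\ell$ is the unique bundle attaining the minimum value. Then all bundles of $Q$ other than possibly $Q_\ell$ have value strictly above $v(P^*_\ell)$, while $v(Q_\ell)\ge v(P^*_\ell)$. If $v(Q_\ell)>v(P^*_\ell)$, the minimum value strictly increases and $Q$ wins. If $v(Q_\ell)=v(P^*_\ell)$, the first entry of the sorted sequence has the same value but larger cardinality, so $Q$ again wins.

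Now suppose there are ties at the minimum value. The removal of $P^*_{j'}$ does not hurt: it had value strictly above the minimum, and it is replaced by a bundle of value still strictly above the minimum. We then argue that in the sorted pair sequence, the pair $(v(P^*_\ell),|P^*_\ell|)$ is replaced by a strictly better pair. This is exactly why $\ell$ was chosen to minimize cardinality among the least-valued bundles, so that it occupies the first position of the sorted sequence. All other entries that precede the changed positions are unchanged, so the first differing entry favors $Q$.

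I expect this sorted-sequence bookkeeping to be the main obstacle. The sorted positions of $Q_{j'}$ and $Q_\ell$ may shift. The cleanest way to handle it is the standard fact that if one multiset element is strictly increased and another, larger element is decreased but stays strictly above the first element's old value, then the sorted vector increases lexicographically. Having $Q$ strictly above $P^*$ contradicts the maximality of $P^*$, so $P^*$ is EFX for $v$.
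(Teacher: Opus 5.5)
Your proposal is correct, and it is essentially the leximin++ argument of Plaut and Roughgarden; the paper gives no proof of its own and simply cites~\cite{plaut2020almost}, so your approach matches the source it relies on. One simplification: the multiset-exchange fact you invoke is valid for the lexicographic order on pairs $(v(P_j),|P_j|)$ and already gives the strict improvement when you move $g$ directly into $P^*_j$, so redirecting to a least-valued bundle $\ell$ and splitting into cases are not needed.
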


\begin{definition}[EF1 (Envy-freeness up to one good)~\cite{budish2011combinatorial}]
    An allocation $X$ is \emph{envy-free up to one good (EF1)} if for all agents $i,j\in N$,
    either $v_i(X_i)\ge v_i(X_j)$, or there exists an item $g\in X_j$ such that $v_i(X_i)\ \ge\ v_i(X_j\setminus\{g\}).$
\end{definition}

\emph{Envy-freeness up to one less-preferred good (EFL)} \cite{barman2018groupwise} is strictly more demanding fairness notion than EF1. 

\begin{definition}[EFL (Envy-freeness up to one less-preferred good)~\cite{barman2018groupwise}]
    An allocation $X$ is \emph{envy-free up to one less-preferred good (EFL)} if for all agents $i,j\in N$,
    at least one of the following conditions holds:
    \begin{enumerate}
    \item $|\{g\in X_j \mid v_i(\{g\})>0\}|\le 1$, 
    \item there exists an item $g\in X_j$ such that
    $v_i(X_i) \ge v_i(X_j\setminus\{g\})$ and
    $v_i(X_i)\ \ge\ v_i(\{g\})$.
    \end{enumerate}
\end{definition}

\begin{definition}[EEFX feasibility]
    A bundle $B \subseteq M$ is \emph{EEFX-feasible} for an agent $i$ iff there exists a partition of $M \setminus B$ into $n-1$ bundles $P_1, \ldots, P_{n-1}$ such that $v_i(B) \geq v_i(P_j \setminus \{g\})$ for all $j \in [n-1]$ and $g \in P_j$. If $B$ is EEFX-feasible for agent $i$, such a partition $P=(P_1, \ldots, P_{n-1})$ is called as an \emph{EEFX certificate} for $B$. Similarly, $B$ is \emph{EEFX-infeasible} for agent $i$ iff no such partition exists. 
\end{definition}

\begin{definition}[EEFX (Epistemic envy-freeness up to any good)~\cite{Caragiannis2023}]
An allocation $X$ is \emph{epistemic EFX (EEFX)} if for every agent $i\in N$, the bundle $X_i$ is EEFX-feasible for $i$.   
\end{definition}

\begin{definition}[Maximin Share (MMS)]
    Let $\Pi$ be the set of all partitions of $M$ into $n$ bundles. Then, $$\text{MMS}_i = \max_{P \in \Pi} \min_{j \in [n]} v_i(P_j).$$    
\end{definition}

For $\alpha \geq 0$, an allocation is called $\alpha$-MMS if every agent values their bundle at least $\alpha$ times their MMS value. When $\alpha = 1$, the allocation is simply called an MMS allocation. 

Another share-based notion that is highly related to the notion of EEFX is the \emph{minimum EFX share} (MXS) introduced by Caragiannis et al. \cite{Caragiannis2023}. 
\begin{definition}[MXS (Minimum EFX share) \cite{Caragiannis2023}]
    For any agent $i \in N$, the minimum EFX share of $i$ is 
    \[  \text{MXS}_i = \min\{v_i(S) \mid S \text{ is EEFX-feasible for $i$} \}. \]
\end{definition}
An allocation is said to be MXS if all agents value their bundle at least at their MXS value. Clearly, the existence of EEFX allocations implies the existence of MXS allocations.

For a share $s$ (such as MMS, MXS), an \emph{$s$-allocation} (\emph{partial $s$-allocation}, respectively) is an allocation (partial allocation, respectively) in which every agent gets a bundle that she values at least as high as her $s$ share value. 
\begin{definition}[Feasiblity]
    A share $s$ is \emph{feasible} for a class $C$ of valuations if in every allocation instance with valuations from the class $C$, there is an $s$-allocation. For a given ratio $0 < \rho < 1$ and a share $s$, $\rho$-$s$ is a share whose value is $\rho$ times the value of the $s$ share.    
\end{definition}
For a given share $s$ and valuation $v$, a partition $(P_1, \ldots , P_n)$ of $M$ will be referred to as a $v$-acceptable $n$-partition with respect to $s$ if $v(P_j) \geq s(M, v, n)$ for every $j$. We omit $s$ when it is clear by context.

\begin{definition}[Self-feasibility]
    A share $s$ is \emph{self-feasible} for class $C$, if every $v \in C$ has a $v$-acceptable $n$-partition.  
\end{definition}
By definition, feasibility implies self-feasibility. We now introduce the notion of \emph{residual self-feasibility} in order to define the notion of \emph{residual maximin share (RMMS)}.
\begin{definition}[Residual Self-feasibility]
    A share $s$ is \emph{residual self-feasible} for class $C$, if for every $v \in C$, every $0 \leq k < n$, and every $k$ bundles each of value strictly less than $s(M, v_i, n)$, after removing these bundles, there is a $v$-acceptable $(n-k)$-partition (of the set of remaining items).
\end{definition}
The MMS is the self-feasible share with the highest possible value. We now introduce
an analogous notion of \emph{RMMS share}, which is the residual self-feasible share with the highest possible value.

\begin{definition}[Residual Maximin Share (RMMS)]\label{def:rmms}
    The value of the residual maximin share (RMMS) for valuation $v$, denoted as RMMS$(M, v, n)$, is the highest value $t$ that is residual self-feasible. That is, for every $0 \leq k < n$, after removing $k$ bundles each of value (under $v$) strictly less than $t$, there is an $(n-k)$ partition of the set of remaining items, where each part has value (under $v$) at least $t$. 
\end{definition} 

By definition, the RMMS is never larger than the MMS. By design, the RMMS is at least as large as the MXS. Figure~\ref{fig:0} illustrates the relationships among the fairness notions introduced so far for additive valuations.

\begin{definition}\label{def:natural}
    We say that valuation $v_i$ dominates valuation $v_j$ if for every bundle $S$, $v_i(S) \ge v_j(S)$. For a given $\varepsilon > 0$, we say that two valuations are $\varepsilon$-close if for every bundle $S$, $|v_i(S) - v_j(S)| \le \varepsilon$. A share is {\em monotone} if whenever $v_i$ dominates $v_j$, the share value for $v_i$ is at least as high as that for $v_j$. A share is {\em 1-Lipschitz} if whenever $v_i$ and $v_j$ are $\varepsilon$-close, their share value differs by at most $\varepsilon$.
\end{definition}

As an example, MMS is both monotone and 1-Lipschitz, whereas $\frac{2}{3}$-MMS is monotone but not 1-Lipschitz (and not even continuous, see~\cite{BF25}). The following example shows that MXS is not monotone

\begin{example}\label{example:mxs}
Consider an instance with $n=2$ agents and five items with additive values $(4,3,3,3,1)$.
The MXS value is $7$. Indeed, every EEFX-feasible bundle has value at least $7$, while the bundle consisting of the items valued $4$ and $3$ has value exactly $7$ and is EEFX-feasible.

Now increase the value of the first item from $4$ to $5$, obtaining the valuation $(5,3,3,3,1)$.
The MXS value drops to $6$: the bundle consisting of the items valued $5$ and $1$ has value $6$ and is EEFX-feasible, while no EEFX-feasible bundle has smaller value.

Thus, MXS is not monotone: increasing the value of an item can decrease the agent's MXS value.
\end{example}

Being {\em self-maximizing} is a property of shares that implies both monotonicity and 1-Lipschitz (see~\cite{BF25}), and also relates to incentives for agents to report their true valuations. (If an agent is risk averse and wishes to maximizes the minimum possible value that she might get by an allocation mechanism that only ensures that she gets at least her share value, she has no incentive to report a valuation $v'$ instead of her true valuation $v$.)

\begin{definition}
    \label{def:selfMaximizing}
    We say that a share $s$ is \emph{self-maximizing} if for every two valuations $v$ and $v'$, there is a bundle $T$ satisfying $v'(T) \ge s(\items, v', n)$ (hence, $T$ is feasible for $v'$ under share $s$) such that $v(T) \le s(\items, v, n)$ (hence, even the worst bundle that is feasible under the share $v$ is at least as valuable as $T$, according to the true valuation $v$ of the agent).
\end{definition}

The MMS is a self-maximizing share, but is not feasible, not even for additive valuations~\cite{KPW18}.  For the special case of additive valuations, the {\em nested share} of~\cite{BF25} is both feasible and self maximizing, and its value is at least $\frac{2n}{3n-1}$-MMS. (Moreover, it can be computed in polynomial time, and so can a feasible allocation.) 

Lipton, Markakis, Mossel and Saberi~\cite{lipton2004approximately} designed an allocation algorithm that produces EF1 allocations. We refer to their algorithm as the LMMS algorithm. The LMMS algorithm starts from the empty allocation, and then allocates items one by one. However, as observed in multiple previous works, it can also be started from any partial allocation. The following lemma (whose proof is omitted, as it follows trivially from~\cite{lipton2004approximately})  summarizes its properties in this case.

\begin{lemma}
    \label{lem:LMMS}
    Let ${\mathcal P} = (P_0, P_1, \ldots, P_n)$ be a partial allocation, and let ${\mathcal A} = (A_1, \ldots, A_n)$ be the final allocation if the LMMS algorithm is executed starting at $P$. Then ${\mathcal A}$ satisfies the following properties:
    \begin{enumerate}
        \item There is a matching $\pi$ between the bundles $A_1, \ldots, A_n$ and $P_1, \ldots, P_n$ such that $P_{\pi(i)} \subset A_i$ for every $i$.
        \item For every agent $a_i$, $v_i(A_i) \ge v_i(P_i)$.
        \item For any $j$, if $A_j \not= P_{\pi(j)}$, let $g_j$ be the last item inserted into $A_j$ by the LMMS algorithm. Then for every agent $i$, $v_i(A_i) \ge v_i(A_j \setminus g_j)$.
    \end{enumerate}
\end{lemma}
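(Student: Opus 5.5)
We prove Lemma~\ref{lem:LMMS} by recalling how the LMMS algorithm runs from a partial allocation and maintaining one invariant throughout. The algorithm keeps a current allocation $(B_1,\dots,B_n)$, initialized to $(P_1,\dots,P_n)$, and the envy graph on agents, with an edge $i\to j$ whenever $v_i(B_j)>v_i(B_i)$. While $P_0$ still has items, it does the following. If the envy graph has a cycle, it rotates bundles along the cycle: each agent on the cycle takes the bundle it envies. This repeats until the graph is acyclic. An acyclic graph has an unenvied agent (a source) $j$. The algorithm then gives some remaining item $g\in P_0$ to $j$, so $B_j\leftarrow B_j\cup\{g\}$. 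We will prove all three items by induction over these elementary steps: cycle rotations and item additions.

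For item~1, attach to each current bundle the index of the original $P$-bundle it grew from. Rotations only permute bundles among agents. Additions only enlarge the bundle they act on. So at every time there is a bijection $\pi$ with $P_{\pi(i)}\subseteq B_i$. Item~2 follows from monotonicity of each $v_i$: each agent's own value never decreases. Under a rotation an agent receives a bundle it strictly prefers. Under an addition the receiving agent's bundle only grows, and everyone else keeps their bundle. Hence $v_i(A_i)\ge v_i(P_i)$.

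Item~3 is the main step, and we prove it with the following invariant. For every bundle $B$ that has received at least one item from $P_0$, let $g_B$ be the last such item. Then every agent $i$ satisfies $v_i(B_i)\ge v_i(B\setminus g_B)$. We check that each step preserves the invariant.

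\emph{Addition step.} When $g$ is added to the source $j$'s bundle, the new bundle $B_j\cup\{g\}$ has last item $g$. Since $j$ was unenvied, $v_i(B_j)\le v_i(B_i)$ for all $i\ne j$, and trivially for $i=j$. So the invariant holds for the new bundle. Other bundles are unchanged, and no agent's own value drops, so their instances of the invariant persist.

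\emph{Rotation step.} Bundles as sets are unchanged, so each $g_B$ is unchanged. Each agent's own value weakly increases. Hence every inequality $v_i(B_i)\ge v_i(B\setminus g_B)$ survives.

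At termination, $A_j\neq P_{\pi(j)}$ means the bundle received an item, and its last inserted item is $g_j$. The invariant then gives item~3. Termination itself holds because each rotation strictly increases the total of the agents' own values. Since there are finitely many allocations, only finitely many rotations occur between additions, and there are at most $|P_0|$ additions.

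The only delicate point is noticing that the invariant refers to the \emph{set} $B$ rather than to the agent holding it. This is what makes it stable under rotations; once that is set up, the remainder is routine.
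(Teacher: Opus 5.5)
Your proof is correct. It is the standard envy-cycle argument of Lipton, Markakis, Mossel and Saberi, run from the partial allocation $\mathcal{P}$, which is exactly what the paper invokes when it omits the proof as following trivially from~\cite{lipton2004approximately}; stating the invariant for bundles as sets, so that it survives rotations, is the right bookkeeping, and your $\subseteq$ in item~1 is the intended reading since item~3 allows $A_j = P_{\pi(j)}$.
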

The \emph{strong EEFX share} is a key new concept introduced in this work, that we define next.

\begin{definition}[Strong EEFX Share]\label{def:stronfEEFXShare}
    For any agent $i \in N$, the \emph{\newShare}~of $i$ is 
    \[  \theta_i = \min\{v_i(S) \mid \text{ any set $T$ with $v_i(T) \ge v_i(S)$ is EEFX-feasible for $i$} \}. \]
\end{definition}

Note that since $M$ is EEFX-feasible for any agent $i$, the \newShare~is well-defined. Moreover, it is easy to observe that $\theta_i \geq \text{MXS}_i$ for any agent $i$. See \Cref{fig:EEFX-share} for better intuition.

\begin{figure}[t]
    \centering
    \includegraphics[width=1\linewidth]{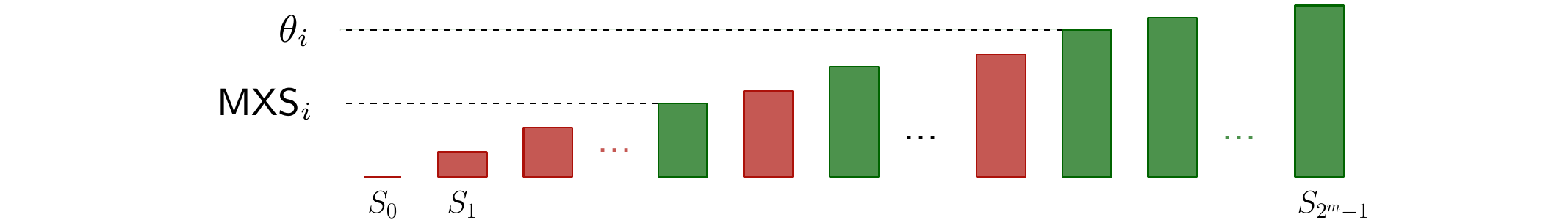}
    \caption{Visualization of the MXS and the strong EEFX share. All subsets of $M$ are sorted in increasing value for agent $i$. Red bundles are EEFX-infeasible, while green bundles are EEFX-feasible.}
    \label{fig:EEFX-share}
\end{figure}

We present an example to highlight the fact that EEFX feasibility is not monotone in bundle-value, and hence, motivating the definition of the strong EEFX share. We further show that $\theta_i$ can be strictly higher than $\text{MXS}_i$, making the former a stronger fairness guarantee than the latter.

\begin{example}\label{example}
Consider a simple instance $\instance$ with $N = [2]$, $M = \{g_1, g_2, \ldots, g_{6}\}$ where, for agent $1$, $g_1$ and $g_2$ are large items of value $4$ and the remaining ones are small items of value $1$. That is, $v_1(g_1)=v_1(g_2)=4$ and $v_1(g_3)=
\dots  = v_1(g_6)=1$. Note that the bundle of value 4 containing only the small items is EEFX-feasible, but the  bundle of value 5 containing one large and one small item is not EEFX-feasible. It is easy to see that $\text{MXS}_1 = 4$ while $\theta_1 = 6$.                      
\end{example}

\begin{definition}\label{def:stronglyEEFX}
    A set $S \subseteq M$ is a \emph{strongly EEFX-feasible} bundle for agent $i$ iff $v_i(S) \geq \theta_i$. 
\end{definition}

The following observation follows from \Cref{def:stronfEEFXShare,def:stronglyEEFX}.

\begin{observation}\label{obs:eefxShareGivesEEFX}
    If $S \subseteq M$ is a \emph{strongly EEFX-feasible} set for agent $i$, then it is also EEFX-feasible for agent $i$. 
\end{observation}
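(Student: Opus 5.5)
The observation follows by unwinding the definition of $\theta_i$ as a minimum. First we check that this minimum is attained, so that some concrete bundle realizes it. The set of candidate bundles $S$ in \Cref{def:stronfEEFXShare} is a subfamily of $2^M$, so it is finite. It is also nonempty, since $S=M$ qualifies: every $T$ with $v_i(T)\ge v_i(M)$ satisfies $v_i(T)=v_i(M)$ by monotonicity, and we need such $T$ to be EEFX-feasible. That last point is the only one needing care. We handle it by noting that $M$ itself is EEFX-feasible, as the paper already remarks, and by taking the candidate condition as the paper intends: it concerns all sets of value at least $v_i(S)$, which the well-definedness remark already uses.

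Fix a minimizing bundle $S^*$ with $v_i(S^*)=\theta_i$. By the defining property of the candidate family, every set $T\subseteq M$ with $v_i(T)\ge v_i(S^*)=\theta_i$ is EEFX-feasible for agent $i$.

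Now let $S$ be strongly EEFX-feasible, so that $v_i(S)\ge\theta_i$ by \Cref{def:stronglyEEFX}. Applying the property of $S^*$ with $T=S$ shows that $S$ is EEFX-feasible for agent $i$.

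The only potential obstacle is the attainment and well-definedness of the minimum. This is settled by finiteness of $2^M$ together with the nonemptiness argument above, so no further work is needed.
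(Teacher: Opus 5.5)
Your proof is correct and takes the same approach as the paper, which regards the observation as immediate from the definitions: since $2^M$ is finite, the minimum defining $\theta_i$ is attained by some $S^*$, and then every $S$ with $v_i(S)\ge\theta_i=v_i(S^*)$ is EEFX-feasible. You can make the nonemptiness step self-contained instead of relying on the paper's remark: if $v_i(T)\ge v_i(M)$, monotonicity gives $v_i(T)\ge v_i(P)$ for every $P\subseteq M$, so any partition of $M\setminus T$ into $n-1$ bundles is an EEFX certificate for $T$, whether or not $T=M$.
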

In our discussion, we focus on strongly EEFX-feasible sets rather than EEFX-feasible sets. Although this is a stronger notion, the fact that strongly EEFX feasibility is defined purely in terms of shares, makes it easier to deal with and connect it with RMMS. This connection in turn gives us a way to prove compatibility of EEFX with EF1.

\paragraph{Non-degeneracy.} 
An instance $\mathcal{I}$ is \emph{non-degenerate} if, for every agent $i\in N$, the valuations of distinct sets are distinct, i.e., $v_i(S) \not= v_i(T)$ whenever $S \not= T$. Chaudhury et al.~\cite{chaudhury2024efx} have shown that the existence of an EFX allocation for non-degenerate instances implies existence in general. We repeat their argument for completeness. 

\begin{lemma}[\cite{chaudhury2024efx}]\label{lem:non-degenrate} If an allocation satisfying both EF1 and EEFX exists for all non-degenerate instances, then such an allocation exists for all instances. \end{lemma}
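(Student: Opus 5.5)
The plan is the standard perturbation argument of Chaudhury et al., adapted so that it preserves both EF1 and EEFX. Fix an arbitrary instance $\instance$ with monotone valuations. Order the items $M=\{g_1,\ldots,g_m\}$ and, for $\varepsilon>0$, define perturbed valuations
\[ v_i^{\varepsilon}(S) = v_i(S) + \varepsilon \sum_{g_\ell \in S} 2^{\ell}. \]
Each $v_i^{\varepsilon}$ is normalized and monotone (indeed strictly monotone). Distinct sets receive distinct perturbation terms, since binary expansions are unique. Let $\delta>0$ be the smallest positive difference $|v_i(S)-v_i(T)|$ over all agents and all pairs of sets; if all such differences are zero, set $\delta=1$. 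Choose $\varepsilon$ with $\varepsilon\, 2^{m+1} < \delta$.

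Then for every $i$ and all $S,T$ the perturbed values satisfy two properties. First, $v_i^\varepsilon(S)\neq v_i^\varepsilon(T)$ for $S\neq T$: if $v_i(S)=v_i(T)$ the perturbation terms differ, and otherwise the value gap is at least $\delta$, which exceeds the total perturbation. Second, $v_i^\varepsilon(S) \ge v_i^\varepsilon(T)$ implies $v_i(S)\ge v_i(T)$: if $v_i(S)<v_i(T)$, then $v_i(T)-v_i(S)\ge\delta > \varepsilon 2^{m+1}$, which exceeds any difference of perturbation terms, so $v_i^\varepsilon(S) < v_i^\varepsilon(T)$. Hence the perturbed instance is non-degenerate. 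By hypothesis it has an allocation $X$ that is EF1 and EEFX with respect to the $v^\varepsilon$.

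Next I transfer each guarantee back to the original valuations using only the implication above.

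\textbf{EF1.} For agents $i,j$, either $v_i^\varepsilon(X_i)\ge v_i^\varepsilon(X_j)$, or for some $g\in X_j$ we have $v_i^\varepsilon(X_i)\ge v_i^\varepsilon(X_j\setminus\{g\})$. In either case the implication gives the same inequality under $v_i$.

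\textbf{EEFX.} Each $X_i$ has a certificate $P_1,\ldots,P_{n-1}$ partitioning $M\setminus X_i$ with $v_i^\varepsilon(X_i)\ge v_i^\varepsilon(P_j\setminus\{g\})$ for all $j$ and $g\in P_j$. By the implication, $v_i(X_i)\ge v_i(P_j\setminus\{g\})$, so the same partition is an EEFX certificate under $v_i$. Thus $X$ is EF1 and EEFX for the original instance.

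The only delicate step is the choice of $\varepsilon$ so that strict inequalities in $v$ survive while ties are broken consistently. Both EF1 and EEFX are expressed purely as weak inequalities between values of sets. Because the perturbation only breaks ties and never reverses a strict order, those inequalities pull back from $v^\varepsilon$ to $v$ directly. The same argument would also cover EFL, since its conditions likewise compare set values, and the positivity of singletons is preserved: $v_i(\{g\})>0$ implies $v_i^\varepsilon(\{g\})>0$.
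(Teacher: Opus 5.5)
Your proof is correct and follows essentially the paper's argument: add a tiny additive, set-distinguishing perturbation to each $v_i$ so that ties are broken and no strict inequality is reversed, then pull the weak inequalities defining EF1 and EEFX back to the original valuations. The only difference is cosmetic. You calibrate $\varepsilon$ against the minimum positive gap $\delta$, so you need no integrality assumption and cover general monotone valuations. The paper instead assumes integer values and uses $\sum_{g_j\in S}\varepsilon^j$ with $\sum_j\varepsilon^j<1$. Since your perturbation is additive, it also keeps additive instances additive, which is the class where the paper applies the lemma.
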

\begin{proof} Let $\{v_1, \dots, v_n\}$ be a set of additive valuation functions, possibly degenerate, and let $M = \{g_1,\ldots,g_m \}$ be the set of items. Without loss of generality, we may assume  $v_i$'s to be integer-valued. For any $i \in N$ and any $S \subseteq M$, let us define the valuation function $\bv_i$ as
\[     \bv_i(S) = v_i(S) + \sum_{g_j \in S} \varepsilon^j,\]
where $\varepsilon > 0$ is small enough such that $\sum_{g_j \in M} \varepsilon^j < 1$ For example, $\varepsilon = \frac{1}{2}$ will do. Then $\bv_i(S) \not= \bv_i(T)$ for $S \not= T$, and $v_i(S) \not= v_i(T)$ implies $v_i(S) < v_i(T)$ iff $\bv_i(S) < \bv_i(T)$. 

Consider now any allocation $(X_1,\ldots,X_n)$ with nice properties with respect to the valuation functions $\bv$. The allocation has the same nice properties with respect to the original valuation function. For example, if $S$ and $T$ are disjoint and $\bv_i(S \setminus \{g\}) < \bv_i(T)$ then we have, $v_i(S \setminus \{g\}) \leq v_i(T)$. 
So, EF1 is preserved. Similarly, if there is a partition $(P_1,\ldots, P_{n-1})$ of $\cup_{j \in [n]\setminus \{i\}} X_j$ such that agent $i$ does not strongly envy any of the $Y_j$ with respect to $\bv_i$, then they do not strongly envy any of them with respect to the $v_i$ as well, implying EEFX is preserved as well. 
\end{proof}

\subsection{Lone Divider Approach} \label{sec:lone}
When the goal is to achieve a share-based notion, typically an approximation of maximin share ($\alpha$-MMS), a celebrated technique is that of \emph{lone divider} \cite{AIGNERHOREV2022164,EpistemicMonotone,amanatidis2021maximum,Hummel2025,akrami2025matroidsequitable}. For an agent $i \in N$, let us write $\alpha_i$ to denote her fair share value. This algorithm runs in rounds, and in each round a subset of agents gets allocated one bundle each that they find fair according to their own fair share. These agents leave the process, and if there are still some agents remaining, the algorithm proceeds to satisfy them. Let $n'$ be the number of remaining agents at the beginning of some round of this algorithm. Let $i$ be one such remaining agent. Then $i$ divides the set of remaining items $M'$ into $n'$ bundles $B_1, \ldots, B_{n'}$ each of value $\alpha_i$ for $i$; we say that agent $i$ has the role of the \emph{lone divider} in this round. Then, a bipartite graph is constructed with one part corresponding to these $n'$ bundles and the other part corresponding to the remaining agents. There is an edge between an agent $j$ and a bundle $B_k$, iff $v_j(B_k) \geq \alpha_j$. A matching in this graph is called \emph{envy-free} if there is no edge between unmatched agents and matched bundles, i.e.,  all the unmatched agents value all the matched bundles below their fair share. It can be proven that an envy-free matching always exists. Then, under one such matching, the matched agents will be allocated their matched bundles and leave the algorithm, and the algorithm proceeds to the next round. See \Cref{alg:lone-divider}.

\begin{algorithm}[t]
\caption{Lone Divider \cite{amanatidis2021maximum}
\\ \textbf{Input:} $n$ additive valuations $\{v_i\}_{i\in N}$ over items in $M$ and fair shares $\{\alpha_i\}_{i\in N}$. 
\\ \textbf{Output:} Allocation $A = (A_1, \ldots, A_n)$}
\label{alg:lone-divider}
\SetAlgoLined
\DontPrintSemicolon
\LinesNumbered

$N' \leftarrow N$ \;

\While{$N' \neq \emptyset$}{
    Let $i \in N'$ \;
    Let $B=(B_1,B_2,\ldots, B_{n'})$ be such that $v_i(B'_j) \geq \alpha_i$ for all $j \in [n']$\; 

Consider the bipartite graph $G=(V=N'\times B,E)$, where $E=\{(j,B_k)\ : \ v_j(B_k)\ge \alpha_j\}$\;

    Find an envy-free matching between bundles in $B$ and agents in $N'$\;
    \ForEach{$j\in N'$ matched to some $B_k\in B$}{
        $A_j\gets B_j$\;
    }
    Update $N'$ to be the set of unallocated agents and $M \leftarrow M\setminus\left(\bigcup_{i\in N\setminus N'}A_i\right)$ to be the set of unallocated items\;
}
\end{algorithm}

In order to prove the correctness of this algorithm, one needs to show that any remaining agent $i$ can play the role of the lone divider. Equivalently, one needs to show that $i$ can divide the set of remaining items into $n'$ many bundles of value at least $\alpha_i$ (to agent $i$), assuming that $n-n'$ many bundles of value strictly less than $\alpha_i$ have been removed in the previous rounds. Clearly, whether this holds or not depends on $\alpha_i$.

\section{Properties of RMMS}\label{sec:RMMS}
In this section, we study the residual maximin share (RMMS) in greater detail and establish several of its key properties. These properties provide further insight into the behavior of RMMS and will play an important role in our subsequent results.

\begin{observation}
    \label{obs:MXS}
    For every monotone valuation, the RMMS is at least as large as the MXS.
\end{observation}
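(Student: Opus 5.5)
Writing $t = \mathrm{MXS}(M,v,n)$, the plan is to prove that $t$ is residual self-feasible. Since RMMS is by Definition~\ref{def:rmms} the largest residual self-feasible value, this gives $\mathrm{RMMS} \ge \mathrm{MXS}$. We will use an EFX partition directly and not go through any intermediate share.

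Fix $0 \le k < n$ and remove $k$ disjoint bundles $B_1,\dots,B_k$, each with $v(B_j) < t$. Let $M' = M \setminus \bigcup_j B_j$. Apply the theorem of Plaut and Roughgarden to $v$ restricted to $M'$ with $n-k$ parts. This gives an EFX partition $(Q_1,\dots,Q_{n-k})$ of $M'$ for $v$. We claim every $Q_\ell$ satisfies $v(Q_\ell) \ge t$, so this is the required $v$-acceptable $(n-k)$-partition.

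To prove the claim, let $Q$ be a part of minimum value and suppose for contradiction that $v(Q) < t$. We show that $Q$ is EEFX-feasible, which contradicts the minimality in the definition of $\mathrm{MXS}$. The certificate is the $n-1$ bundles consisting of $B_1,\dots,B_k$ together with the remaining $n-k-1$ parts $Q_{\ell} \neq Q$. These partition $M\setminus Q$.

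We then check the EFX condition against each certificate bundle.
\begin{itemize}
\item For a part $Q_\ell$: the partition is EFX for $v$, so $v(Q) \ge v(Q_\ell\setminus\{g\})$ for every $g \in Q_\ell$.
\item For a removed bundle $B_j$: here we need $v(Q) \ge v(B_j\setminus\{g\})$ for every $g \in B_j$. We know only $v(B_j) < t$ and $v(Q) < t$, which do not compare directly. This is the main obstacle.
\end{itemize}

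To get around the obstacle, we choose the partition more carefully. Instead of partitioning only $M'$, we run the Plaut--Roughgarden leximin-type argument, or envy-cycle elimination with identical valuations, on $M'$ while treating $B_1,\dots,B_k$ as fixed frozen bundles. Concretely, we start from $(B_1,\dots,B_k,\emptyset,\dots,\emptyset)$. We allocate the items of $M'$ one at a time, always giving the next item to the currently least valuable bundle among the $n-k$ free ones, and repair EFX violations among the free bundles in the usual way.

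The cleaner route, which we commit to, avoids this construction. Suppose some part $Q$ of a leximin-optimal $(n-k)$-partition of $M'$ has $v(Q)<t$. Form the $n$-tuple consisting of the $B_j$'s and the $Q_\ell$'s. Take the bundle of minimum value among all $n$ bundles; call it $S$. Note that $v(S) < t$ since every $B_j$ and $Q$ have value below $t$. Now view the tuple as a partition of $M$ into $n$ bundles. Moving goods from bundles that violate EFX against $S$ into $S$ (the standard Plaut--Roughgarden improvement step) keeps every bundle below $t$ or else raises the minimum. If some bundle ever violates EFX against the minimum, we move an item so as to increase the minimum. Iterating this, by leximin/potential arguments, yields a minimum bundle $S$ with $v(S)<t$ that is EFX-unenvied, hence EEFX-feasible, contradicting $t=\mathrm{MXS}$.

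We expect the delicate step to be showing that this improvement process can be carried out without touching the $Q$'s in a way that invalidates the final partition of $M'$. We would handle it by arguing the contrapositive. If every $(n-k)$-partition of $M'$ has a part below $t$, take a leximin-maximal $n$-partition of $M$ among those that keep the removed structure, i.e., with all $n$ bundles of value below $t$ attainable. Its minimum bundle is EEFX-feasible by the Plaut--Roughgarden argument. It has value below $t$, contradicting the definition of $\mathrm{MXS}$.
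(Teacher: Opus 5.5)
The route you commit to does not prove the claim. The Plaut--Roughgarden step never lowers the minimum, but nothing in your argument stops the minimum from crossing $t$. Giving an item $g$ to the current minimum bundle can push its value to $t$ or above. The hypothesis $v(B_j)<t$ enters only in the starting configuration and plays no role in controlling the iteration.

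Worse, the process cannot end the way you want. It stops only when the current minimum bundle $S$ satisfies $v(S)\ge v(Y\setminus\{g\})$ for every other bundle $Y$ and every $g\in Y$. But then the other $n-1$ bundles form an EEFX certificate for $S$, so $v(S)\ge t$ by the definition of MXS. Hence at termination every bundle is worth at least $t$, and no contradiction is produced.

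The contrapositive variant has the same defect. The family of ``$n$-partitions keeping the removed structure with all bundles below $t$'' is not well defined: the parts $Q_\ell\neq Q$ may be worth far more than $t$, so the family may be empty. Moreover, the improvement step is not closed under a below-$t$ constraint.

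The missing idea is small, and it rescues your first attempt, which had the right setup. Do not insist on certifying $Q$. Instead, certify the most valuable bundle $W$ among $Q,B_1,\dots,B_k$. Then $v(W)<t$, and the other $n-1$ bundles of $(B_1,\dots,B_k,Q_1,\dots,Q_{n-k})$ partition $M\setminus W$ and certify it:
\begin{itemize}
\item for $B_j\neq W$, monotonicity gives $v(B_j\setminus\{g\})\le v(B_j)\le v(W)$;
\item for $Q$ (if $Q\neq W$), $v(Q\setminus\{g\})\le v(Q)\le v(W)$;
\item for $Q_\ell\neq Q$, the EFX property of the partition of $M'$ gives $v(Q_\ell\setminus\{g\})\le v(Q)\le v(W)$.
\end{itemize}
Thus $W$ is EEFX-feasible with value below $\mathrm{MXS}$, a contradiction. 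So every part of the EFX partition of $M'$ is worth at least $t$.

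This handles all $k$ removed bundles in one step. The paper instead iterates the single-bundle statement from Lemma 3.1 of Akrami--Rathi (removing one bundle worth less than the MXS does not lower the MXS of the residual instance). It then finishes with self-feasibility of MXS via $\mathrm{MXS}\le\mathrm{MMS}$.
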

\begin{proof}
    The proof of Lemma 3.1 in~\cite{EpistemicMonotone} shows that removing a single bundle whose value is smaller than the MXS and reducing the number of agents by one, the value of the MXS share cannot decrease.  This in combination with the fact that the MXS is self-feasible (its value is never larger than the MMS, which is self-feasible) implies that the MXS is residual self-feasible. Among all shares that satisfy the residual self-feasibility property, RMMS has the highest value. 
\end{proof}
Being residual self-feasible implies that a share is feasible. This was demonstrated in several special cases. see for example~\cite{amanatidis2017approximation, AkramiRathi2025simultaneous}.

\begin{observation}
\label{obs:RSFfeasible}
    The RMMS is {\em feasible} for the class of monotone valuations. 
\end{observation}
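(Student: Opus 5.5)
We prove the statement by running the lone divider procedure (\Cref{alg:lone-divider}) with fair shares $\alpha_i = \text{RMMS}(M, v_i, n)$. Each $\alpha_i$ is fixed once, with respect to the original instance $M$ and $n$, and is never recomputed. The point of \Cref{def:rmms} is that it states exactly the invariant needed for every remaining agent to act as a lone divider. The algorithm preserves the following invariant: at the start of each round, with $n'$ remaining agents and remaining item set $M'$, the $k = n-n'$ bundles allocated so far are each valued strictly below $\alpha_i$ by every remaining agent $i$.

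\textbf{Step 1 (the invariant gives divisibility).} Assume the invariant holds. Fix a remaining agent $i$. The removed bundles are $k < n$ bundles, each of $v_i$-value strictly less than $\alpha_i$. By residual self-feasibility of $\alpha_i$ (\Cref{def:rmms}), $M'$ admits an $(n-k) = n'$ partition in which every part has $v_i$-value at least $\alpha_i$. So $i$ can serve as lone divider. Initially $k=0$, and the invariant holds vacuously; the residual condition with $k=0$ is just self-feasibility.

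\textbf{Step 2 (envy-free matching and invariant maintenance).} Form the bipartite graph between the $n'$ remaining agents and the divider's bundles $B_1,\dots,B_{n'}$, with an edge $(j,B_k)$ iff $v_j(B_k) \ge \alpha_j$. The divider $i$ is adjacent to all bundles. We need a nonempty envy-free matching. I would use the standard Hall-type lemma from the lone divider literature \cite{AIGNERHOREV2022164}: if some agent is adjacent to all $n'$ bundles (so the bundle side has a perfect-in-neighborhood property), then there is a nonempty envy-free matching. The proof takes a maximum matching, or a minimal Hall violator among the agents other than $i$, and removes it. Such a matching exists and matches at least one agent. In particular we may match $i$ alone if needed, since then no unmatched agent is adjacent to the matched bundle only when the bundle is chosen correctly. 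For a careful argument I would cite the lemma directly.

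Envy-freeness of the matching means every unmatched agent $j$ values every matched bundle strictly below $\alpha_j$. Combined with the old removed bundles, the invariant therefore holds in the next round for every agent still remaining. Each matched agent $j$ receives a bundle with $v_j \ge \alpha_j$.

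\textbf{Step 3 (termination).} At least one agent is matched per round, so there are at most $n$ rounds. Every agent ends with a bundle worth at least her RMMS, which gives an RMMS partial allocation. Any leftover items can be given arbitrarily, and by monotonicity this yields a complete RMMS-allocation.

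The main obstacle is Step 2, namely the existence of a nonempty envy-free matching for general monotone valuations. This is purely combinatorial, however: it depends only on the graph and on the fact that the divider is adjacent to every bundle, not on additivity. So the known lemma applies verbatim.
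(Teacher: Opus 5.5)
Your proof is correct and follows essentially the paper's route. The paper omits a separate proof and derives feasibility from \Cref{obs:EFXRSF}, whose Steps~1 and~4 are exactly your argument: a lone divider with RMMS shares, residual self-feasibility supplying the divider's partition, and a Hall-type envy-free matching (the paper takes a smallest set $\mathcal{T}$ of bundles desired by only $|\mathcal{T}|-1$ poor agents, where you cite \cite{AIGNERHOREV2022164}). Your informal sketch of the matching lemma is muddled: the Hall violator should be taken on the bundle side, and matching the divider alone is not envy-free in general. Nothing is missing, however, because the cited lemma applies verbatim once the divider is adjacent to all $n'$ bundles.
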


The proof of~\Cref{obs:RSFfeasible} is omitted, because in~\Cref{obs:EFXRSF} we will prove an even stronger statement.

Hence like MXS, the RMMS is feasible for all monotone valuations. It enjoys the advantage of being at least as large as the MXS. Here, we establish another aspect in which RMMS is preferable over MXS. MXS is a somewhat unnatural share, as it is not monotone -- increasing the values of some bundles might decrease the MXS share value (see \Cref{example:mxs}). 
In contrast, RMMS is monotone, and moreover, is {\em self-maximizing}~\cite{BF25} (see~\Cref{def:selfMaximizing}), a property that implies several other natural properties (such as monotonicity, and being 1-Lipschitz). 

\begin{proposition}
\label{pro:RSFselfMaximizing}
    The RMMS is a {\em self-maximizing} share. 
\end{proposition}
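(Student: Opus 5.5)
The plan is to prove the proposition directly from the maximality built into Definition~\ref{def:rmms}, by a case analysis on a witness that certifies that no value above the RMMS is residual self-feasible. Fix two valuations $v$ and $v'$, and write $t = \text{RMMS}(M,v,n)$ and $t' = \text{RMMS}(M,v',n)$. We must exhibit a bundle $T$ with $v'(T) \ge t'$ and $v(T) \le t$.

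First I dispose of the trivial case $t \ge v(M)$. Then $T = M$ works: $v(M) \le t$, and $v'(M) \ge t'$ because residual self-feasibility of $t'$ with $k=0$ yields an $n$-partition of $M$ into parts of $v'$-value at least $t'$. By monotonicity, $M$ dominates any such part. Otherwise, let $t^+$ be the smallest value in the finite set $\{v(S) : S \subseteq M\}$ that exceeds $t$. The key point of this choice is that for every bundle $B$, $v(B) < t^+$ holds if and only if $v(B) \le t$.

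Since $t^+ > t$ and $t$ is the highest residual self-feasible value, $t^+$ is not residual self-feasible for $v$. Hence there are some $0 \le k < n$ and disjoint bundles $B_1,\dots,B_k$ with $v(B_j) < t^+$, i.e.\ $v(B_j)\le t$, such that the remaining set $R = M \setminus \bigcup_j B_j$ has no $(n-k)$-partition with all parts of $v$-value at least $t^+$. Equivalently, every $(n-k)$-partition of $R$ contains a part of $v$-value at most $t$.

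Now I switch to $v'$ and split into two cases. If some $B_j$ satisfies $v'(B_j) \ge t'$, take $T = B_j$; then $v(T) \le t$ as required. Otherwise all $k$ removed bundles have $v'$-value strictly less than $t'$. Residual self-feasibility of $t'$ for $v'$ then provides an $(n-k)$-partition of $R$ in which every part has $v'$-value at least $t'$. By the previous paragraph, one of these parts has $v$-value at most $t$, and we take it as $T$.

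The only delicate step is the passage from ``$t$ is maximal'' to a concrete failing witness with the non-strict inequality $v(B_j)\le t$. This is exactly why $t^+$ is chosen as the next attainable bundle value rather than an arbitrary value above $t$. Implicit in this step is that the maximum in Definition~\ref{def:rmms} is attained. I would note that attainment holds because residual self-feasibility of a threshold depends only on its position relative to the finitely many bundle values.
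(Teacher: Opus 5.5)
Your proof is correct, and it rests on the same core fact as the paper's proof. The family of $v'$-acceptable bundles is itself residual self-feasible, which is a purely combinatorial property independent of the valuation. By maximality of $t$, such a family cannot consist solely of bundles of $v$-value above $t$.

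The difference is in how this is carried out. The paper partitions $2^M$ into abstract families $Y$ and $N$ and simply asserts that the RMMS threshold family maximizes the minimum $v$-value among \emph{all} residual self-feasible families. This assertion needs an unstated observation: if $Y$ is residual self-feasible with minimum $v$-value $s$, then so is the threshold family $\{S : v(S) \ge s\}$, because it contains $Y$ and its complement is contained in $N$. You instead unroll exactly this argument for $Y = Y_{v'}$. You take a failing removal witness for $t^+$ under $v$ and transfer it to $v'$; your two cases correspond to whether some removed bundle lies in $Y_{v'}$ or not.

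Choosing $t^+$ as the next attainable bundle value, together with your remark that the maximum is attained, handles the strict versus non-strict boundary that the paper glosses over. Your version is therefore more self-contained. The paper's family-level framing is more conceptual: it makes visible that the argument works for any share defined as the largest threshold whose acceptable family satisfies a valuation-independent, upward-closed condition, as with the MMS and plain self-feasibility.
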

\begin{proof}
Consider an arbitrary partition of the set $2^{\items}$ of all $2^m$ possible bundles into two subsets, $Y$ and $N$. We say that $Y$ is self-feasible if there is an $n$-partition of $\items$ in which all $n$ bundles are in $Y$. We say that $Y$ is residual self-feasible if for ever $0 \le k < n$ and every $k$ bundles in $N$, removing their items from $\items$, the remaining set of items has an $(n-k)$-partition in which all $n-k$ bundles are in $Y$. 

Given a valuation function $v$, the RMMS value partitions the set $2^{\items}$ of all $2^m$ possible bundles into the subset $Y_v$ of those bundles that are acceptable (have $v$ value at least as high as the share value), and the subset $N_v$ of those bundles that are not acceptable. $Y_v$ is required to be residual self-feasible, and among all possible residual self-feasible $Y$, RMMS dictates that we choose the one in which the lowest value bundle has highest possible value (under $v$). Hence no matter what $v'$ an agent with valuation $v$ reports, the corresponding set of acceptable bundles under $v'$ will necessarily contain at least one bundle $T$ of value $v(T) \le \rmms(\items, v, n)$.
\end{proof}

In contrast to RMMS, the following example shows that the notion of the \emph{strong EEFX share} is not self-maximizing.

\begin{example}
    Consider an instance with two agents and four items $\{g_1,g_2,g_3,g_4\}$, where
    $$
    v(g_1)=v(g_2)=4 \quad \text{and} \quad v(g_3)=v(g_4)=2.
    $$
    The strong EEFX share is $\theta=4$, attained by the bundle $\{g_3,g_4\}$.    Now consider a new valuation function $v'$ obtained by decreasing the value of $g_4$ from $2$ to $1$, i.e.,
    $$
    v'(g_1)=v'(g_2)=4,\qquad
    v'(g_3)=2,\qquad
    v'(g_4)=1.
    $$
    Under $v'$, the bundle $\{g_3,g_4\}$ is no longer strong EEFX-feasible, and the strong EEFX share increases to $\theta'=5$. Thus, decreasing the value of an item can increase the strong EEFX share, showing that the strong EEFX share is not self-maximizing. 
\end{example}

We further observe that a proof technique of~\cite{AkramiRathi2025simultaneous} implies feasibility of RMMS simultaneously with satisfying a strong comparison-based fairness notion.

\begin{algorithm}[t]
\caption{Lone Divider + Envy-based guarantees \cite{AkramiRathi2025simultaneous}
\\ \textbf{Input:} $n$ additive valuations $\{v_i\}_{i\in N}$ over items in $M$ and fair shares $\{\alpha_i\}_{i\in N}$. 
\\ \textbf{Output:} Allocation $A = (A_1, \ldots, A_n)$}
\label{alg:lone-divider-efx}
\SetAlgoLined
\DontPrintSemicolon
\LinesNumbered

Set $r=1$, $N_r = N$ \;

\While{$N_r \neq \emptyset$}{
    Let $i \in N_r$ \;
    Let $P=(P_1,P_2,\ldots, P_{n_r})$ be such that $v_i(P_j) \geq \alpha_i$ for all $j \in [n']$\; 
    \ForEach{$k \in [n_r]$}{
        Let $P'_k \subseteq P_k$ be inclusion-wise minimal such that there exists $j \in N_r$ with $v_j(P'_k) \geq \alpha_j$ \;
    }
    \If{there exists $j \in N \setminus N_r$ and $k \in [n_r]$ such that $j$ EFX-envies $P'_k$}{
        Let  $S \subseteq P'_k$ be inclusion-wise minimal such that there exists $j \in N \setminus N_r$ with $v_j(S) > v_i(A_j)$ \;
        $M \leftarrow (M \setminus S) \cup A_j$ \;
        $A_j \leftarrow S$ \;
    }
    \Else{
    Consider the bipartite graph $G=(V=N\times B,E)$, where $E=\{(j,P'_k)\ : \ v_j(P'_k)\ge \alpha_j\}$\;

    Find an envy-free matching between bundles in $P'$ and agents in $N_r$\;
    \ForEach{$j\in N_r$ matched to some $P'_k$}{
        $A_j\gets P'_j$\;
    }
    $r \rightarrow r+1$ \;
    Set $N_r$ to be the set of agents with no bundle and $M \leftarrow M\setminus\left(\bigcup_{i\in N\setminus N_r}A_i\right)$ to be the set of unallocated items\;
    }
    
}
\end{algorithm}
\begin{lemma}
\label{obs:EFXRSF}
    In every allocation instance with monotone valuations, there is a partial RMMS allocation that is EFX.
\end{lemma}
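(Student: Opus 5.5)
We will analyze Algorithm~\ref{alg:lone-divider-efx} with $\alpha_j = \rmms(M, v_j, n)$ for every agent $j$. We will show that it terminates, and that every allocated bundle satisfies two properties throughout: it is RMMS for its owner, and no agent EFX-envies it. Once every agent is allocated, this gives a partial RMMS allocation that is EFX; the unallocated items form $X_0$.

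\textbf{The lone divider is always available.} Fix a round $r$ and an unallocated agent $i \in N_r$. Every agent $j \in N\setminus N_r$ currently holds a bundle $A_j$. Either $A_j$ was a bundle $P'_k$ from an envy-free matching, or $A_j$ came from a swap, in which case it is a subset of some $P'_k$. We claim $v_i(A_j) < \alpha_i$. In the matching case this holds because the matching is envy-free, and $i$ was unmatched in that round. For swaps we use minimality: $P'_k$ is inclusion-wise minimal acceptable for some agent in $N_{r'}$. If $i \in N_{r'}$ and $A_j \subsetneq P'_k$, then minimality forces $v_i(A_j) < \alpha_i$. If $A_j = P'_k$ and $i$ valued it at $\alpha_i$ or more, the bundle would have been in the matching graph with an edge to $i$; this needs care.

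Given the claim, the $n-n_r$ removed bundles each have $v_i$-value below $\alpha_i$. Returning swapped items to $M$ only enlarges the pool, and by monotonicity it suffices to partition a superset. Definition~\ref{def:rmms} then directly yields a $v_i$-acceptable $n_r$-partition of the remaining items. This is exactly why RMMS is the right share here: residual self-feasibility is the precise invariant the lone divider needs.

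\textbf{EFX and RMMS are preserved.} A matched agent gets a bundle $P'_k$ with $v_j(P'_k) \ge \alpha_j$, so it is RMMS. Agents in $N_r$ do not EFX-envy it, by minimality of $P'_k$: removing any good drops it below their share, and we show below that every allocated agent's value stays at least her share. Agents outside $N_r$ do not EFX-envy it, since otherwise the swap branch would have fired. For the swap branch, $S$ is inclusion-wise minimal such that some $j$ strictly prefers $S$ to $A_j$. Hence no allocated agent EFX-envies $S$, and $j$'s value strictly increases, so $j$ stays RMMS and does not newly EFX-envy anyone. Bundles never shrink except through swaps, and agents' values never decrease.

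\textbf{Termination.} Use the potential given by the vector of allocated agents' values, which increases lexicographically on each swap, together with the number of allocated agents, which increases on each matching round. Since there are finitely many bundles, the process terminates.

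\textbf{Main obstacle.} The delicate step is the invariant that every removed bundle is worth less than $\alpha_i$ to each unallocated agent $i$ after swaps. The bundles were reassigned by swaps, and the agent in $N_r$ may not have been the agent witnessing the minimality of $P'_k$. We would first try to strengthen the invariant: every allocated bundle is a subset of some $P'_k$ that is minimal for the set $N_r$ at the time of its creation, so each unallocated agent values every allocated bundle below her share. We would also verify that a swap ($S \subseteq P'_k$) preserves this, because $P'_k$ was built when $N_r$ contained $i$.
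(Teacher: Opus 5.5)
Your plan is the paper's proof: the lone-divider procedure with swaps, run with the RMMS as threshold. Residual self-feasibility supplies the divider's partition, minimality of $P'_k$ and of $S$ gives EFX, and a finite potential gives termination. What is missing is a proof of the invariant you flag as the main obstacle. You leave it hedged (``this needs care'', ``we would first try''), and it is the step that makes the lone divider available. It closes with three observations you did not make.
\begin{enumerate}
\item[(i)] The set of poor agents never grows. A swapping agent receives a bundle of strictly higher value and stays wealthy. So a currently poor $i$ belonged to $N_{r'}$ in every earlier round $r'$ in which a currently held bundle was created.
\item[(ii)] Minimality of $P'_k$ is with respect to all of $N_{r'}$, not to a single witness. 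No strict subset of $P'_k$ is desired by any agent in $N_{r'}$, so it does not matter which agent witnessed $P'_k$.
\item[(iii)] A swap never hands out $P'_k$ itself. It fires only when a wealthy $j$ has $v_j(P'_k\setminus\{g\})>v_j(A_j)$ for some $g$. Then $P'_k\setminus\{g\}$ already qualifies, so every inclusion-wise minimal $S$ is a strict subset of $P'_k$; the paper's step~3 says this explicitly.
\end{enumerate}
Hence every swap bundle is undesirable to each currently poor agent by (i)--(iii). Every matched bundle is undesirable to her by envy-freeness of the matching together with (i). Your ``strengthened invariant'' as worded would not cover matched bundles $A_j=P'_k$; those need the matching argument, not minimality. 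This is exactly the one-line observation the paper makes before invoking residual self-feasibility. No superset/monotonicity step is needed either: the free set is precisely the complement of the $n-n_r$ currently held bundles, which is verbatim the situation of Definition~\ref{def:rmms}.

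For termination you also need the envy-free matching to be nonempty, since the empty matching is trivially envy-free. This uses the fact that every $P'_k$ is desired by some poor agent. If there is no perfect matching, take a smallest set $\mathcal{T}$ of bundles with fewer than $|\mathcal{T}|$ poor neighbours. Then $|\mathcal{T}|\ge 2$, $\mathcal{T}$ has exactly $|\mathcal{T}|-1$ neighbours, and by minimality these can all be matched into $\mathcal{T}$. This is the paper's step~4(b). With these additions your argument is complete and coincides with the paper's.
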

The proof is similar to the proof of Akrami and Rathi~\cite{AkramiRathi2025simultaneous} that for additive valuations, there always are allocations that are $\frac{2}{3}$-MMS and EFX. For completeness, we present the proof.

\begin{proof}
We say that a bundle $S$ is \emph{desirable} for agent $a_i$ if $v_i(S) \ge \rmms(\items, v_i, n)$. 

Observe that the RMMS value of an agent might be~0, if fewer then $n$ items have positive value for the agent. This somewhat complicates our presentation, adding distinctions between empty and non-empty bundles.

We present an allocation algorithm that proceeds in rounds. See \Cref{alg:lone-divider-efx} for the pseudocode.

At the beginning of every round, we have {\em assigned} items (those currently assigned to agents) and {\em free} items (the remaining items). Every agent is either {\em wealthy} (already holds a desired bundle) or {\em poor} (holds no item at all). The algorithm ends when all agents are wealthy. (\Cref{pro:AR24} will show that indeed such a stage is reached.)

Initially, all items are free. We now describe a single round $r$.

\begin{enumerate}
    \item Let $n_r \ge 1$ denote the number of poor agents, and let $i$ be a poor agent. Let $F_r$ denote the set of free items. Construct an arbitrary $v_i$-acceptable $n_r$-partition of $F_r$. This is a collection ${\mathcal P} = (P_1, \ldots, P_{n_r})$ of $n_r$ disjoint bundles such that every $P_j$ is desirable for $i$.  (\Cref{pro:AR24} will show that such a partition must exist.)
    \item For each non-empty bundle $P_j$ in $\mathcal P$, let $P'_j \subseteq P_j$ be a minimal non-empty subset  of $P_j$ (minimal in the sense that no strict subset of $P'_j$ qualifies) so that at least one poor agent desires $P'_j$. Such a $P'_J$ necessarily exists, because $i$ desires $P_j$. If there are several possible candidates for $P'_j$, choose one of them arbitrarily. 
    \item If for some bundle $P'_j$ there is a strict subset $S \subset P'_j$ that some wealthy agent values strictly more than her current bundle, then let $S$ be a minimal such subset. The items currently assigned to the respective wealthy agent become free, and instead $S$ is assigned to that agent. This ends the round. 
    \item Else, construct a bipartite graph with poor agents on one side, bundles $P'_1, \ldots, P'_{n_r}$ on the other side, and edges between agents and bundles that they desire. 
    \begin{enumerate}
        \item If this graph has a perfect matching, allocate the bundles to the poor agents according to any such matching, and the algorithm ends.
        \item Else, let $t_r$ be the smallest integer such that there is a set ${\mathcal T}$ of $t_r$ bundles such that there are only $t_r - 1$ poor agents that desire a bundle from ${\mathcal T}$. By Hall's condition, $t_r \le n$, and by the fact that every bundle is desired by at least one poor agent, $t_r \ge 2$. Match each of these $t_r - 1$ poor agents to some bundle in ${\mathcal T}$ (such a matching, leaving one bundle unmatched, must exist, by Hall's theorem and the minimality of $t_r$). Assign items to agents according to this matching, and end the round.
    \end{enumerate}
\end{enumerate}

We now prove the termination and correctness of the algorithm.

\begin{claim}
\label{pro:AR24}
    The algorithm described above terminates.
\end{claim}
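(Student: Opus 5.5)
We prove the claim through two facts. First, every round is well defined: the $v_i$-acceptable $n_r$-partition in step~1 exists. Second, a potential function strictly increases from round to round and can take only finitely many values.

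\textbf{Well-definedness.} We first show that throughout the algorithm every wealthy agent holds a bundle she desires. Such a bundle is either a $P'_k$ matched to her, which she desires by the definition of the edges, or a set $S$ that she values strictly more than a bundle she already desired, so monotonicity of desirability in value gives the claim. Next, fix a poor agent $i$ at the start of round $r$. The assigned items form $n-n_r$ disjoint bundles, one held by each wealthy agent, and we show that each of these has $v_i$-value strictly less than $\rmms(\items,v_i,n)$. Take a wealthy agent $j$ holding a bundle $B$ that was created as some $P'_k$ (or as a subset of one) in an earlier round. When $B$ was formed, $i$ was poor; this holds because a poor agent has never held items, since wealthy agents never become poor (step~3 replaces a wealthy agent's bundle but keeps her wealthy). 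If $B=P'_k$ came out of a matching in step~4, then $i$ was unmatched at that round. If $i$ desired $B$, the envy-free structure of the matching would be violated: in the perfect matching case no poor agent remains, and in case~4(b) the poor agents who desire bundles of $\mathcal T$ are exactly the matched ones. Hence $i$ does not desire $B$. If instead $B=S$ is a strict subset of some $P'_k$, then minimality of $P'_k$ among subsets desired by some poor agent shows that no poor agent, in particular not $i$, desires $S$. Thus all removed bundles are undesirable for $i$, and the definition of RMMS (residual self-feasibility with $k=n-n_r$) yields the required partition of $F_r$ into $n_r$ desirable bundles. The degenerate case of a zero share value, where empty bundles appear, is handled the same way with non-empty $P_j$ only.

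\textbf{Potential.} Consider the vector of wealthy agents' values $(v_j(A_j))_{j\in N}$, where poor agents count as $-\infty$, or equivalently as $0$ together with a wealth indicator. In step~3 one wealthy agent strictly increases her value and nobody else changes. In step~4 at least one poor agent becomes wealthy, because $t_r\ge 2$ gives $t_r-1\ge1$ matched agents, or everyone becomes wealthy and we stop, while the wealthy agents keep their bundles. Order the states lexicographically: first by the number of wealthy agents, then by the sorted values. This alone is not quite monotone, because step~3 does not change the count. So we instead use the pair (number of wealthy agents, sum over wealthy agents of the rank of $v_j(A_j)$ among the $2^m$ subset values of $v_j$). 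Step~4 increases the first coordinate. Step~3 keeps the first coordinate and strictly increases the second. Both coordinates are bounded, so the algorithm terminates after finitely many rounds.

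The main obstacle is the invariant that every currently held bundle is undesirable for every poor agent. Step~3 can free items and thereby alter the structure, and it must be checked that the new $S$ and the surviving old bundles still satisfy this invariant. For the old bundles this holds because the set of poor agents only shrinks; for $S$ it follows from the minimality of $P'_k$.
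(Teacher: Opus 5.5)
Your proof is correct and follows essentially the same route as the paper: the step-1 partition exists by residual self-feasibility, because no bundle held by a wealthy agent is desirable to any currently poor agent, and each round either strictly raises some wealthy agent's value (step~3) or makes at least one poor agent wealthy (step~4), so only finitely many rounds can occur. The differences are cosmetic: you spell out the invariant in more detail than the paper does, and you package the counting as a lexicographic potential, whereas the paper counts directly (fewer than $n\cdot 2^m$ executions of step~3, since each agent's value only increases, and at most $n$ executions of step~4).
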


    Observe that at every round $r$, for every agent $a_i$ that is poor at the beginning of the round it holds that no bundle that was allocated in a previous round is desirable. The residual self-feasibility property of RMMS then implies that an RMMS partition of $F_r$ exists, as required in step~1 of the algorithm. 

    In every round, either a wealthy agent replaces her bundle by one of higher value (step~3), or at least one poor agent becomes wealthy (step~4). As there are only finitely many different bundles that can be given to an agent (at most $2^m$), step~3 can be executed at most finitely many times (less than $n \cdot 2^m$). Step~4 can be executed at most $n$ times before all agents are wealthy, and then the algorithm ends.

\begin{claim}
    The partial allocation is RMMS and EFX.
\end{claim}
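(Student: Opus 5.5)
The plan is to prove both properties by maintaining invariants across rounds, and then check them at termination.

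\textbf{RMMS.} This part is immediate from the algorithm. At termination every agent is wealthy. An agent becomes wealthy in two ways. In step~4 she receives a bundle $P'_k$ that she desires. In step~3 she receives a set $S$ with $v_j(S) > v_j(A_j)$, where $A_j$ was already desirable for her. So once an agent is wealthy, her bundle stays desirable: it only changes in step~3, and only to something more valuable. Hence the final partial allocation gives every agent value at least $\rmms(\items, v_j, n)$.

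\textbf{EFX.} I will prove by induction over rounds the invariant that no agent EFX-envies any allocated bundle: for all agents $j,\ell$ and every $g \in A_\ell$, $v_j(A_j) \ge v_j(A_\ell \setminus \{g\})$. Poor agents hold nothing, so for a poor agent $j$ the invariant says $v_j(A_\ell \setminus\{g\}) = 0$ for every $g \in A_\ell$.

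The invariant holds initially. It can only break when some agent's bundle changes, so consider the new bundles created in a round.

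Case step~3. A wealthy agent $j$ receives $S$, a minimal subset of some $P'_k$ that some wealthy agent values above her current bundle.
\begin{enumerate}
\item Agent $j$'s value strictly increases, so $j$ envies nobody new.
\item No wealthy agent strictly prefers $S\setminus\{g\}$ to her own bundle, by minimality of $S$.
\item Poor agents do not desire $S\setminus\{g\}$, because $S \subsetneq P'_k$ and $P'_k$ is minimal.
\item Freeing $A_j$ only removes a bundle, so no new envy arises from it.
\end{enumerate}

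Case step~4. The agents matched receive bundles $P'_k$. For a removal $P'_k\setminus\{g\}$:
\begin{enumerate}
\item A wealthy agent does not prefer $P'_k\setminus\{g\}$ to her bundle, because the step-3 condition failed.
\item A poor agent does not desire $P'_k\setminus\{g\}$, by minimality of $P'_k$.
\item Newly wealthy agents had value $0$ before and now hold desirable bundles, so their envy toward existing bundles only decreases.
\end{enumerate}

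\textbf{Main obstacle.} The hard point is that ``not desirable'' for a poor agent is weaker than ``no EFX envy''. A poor agent holds $\emptyset$, so EFX toward her requires $v(P'_k\setminus\{g\}) \le 0$, which is not what minimality gives. This is resolved as follows. Poor agents are not required to satisfy anything at the end, because every agent is wealthy at termination. The invariant I actually need is therefore only: every wealthy agent $j$ and any bundle $A_\ell$ satisfy $v_j(A_j) \ge v_j(A_\ell \setminus\{g\})$. Alongside it I carry the weaker invariant that no poor agent desires $A_\ell\setminus\{g\}$.

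When a poor agent becomes wealthy in round $r$, she holds a desirable bundle, so her value is at least her RMMS. By the weaker invariant, every $A_\ell\setminus\{g\}$ (including bundles assigned in the same round) has value below her RMMS. So she does not EFX-envy any of them, and the strong invariant extends to her.

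Bundles assigned later are handled by the case analysis above. I also need to check that step~3's replacement of a bundle held by $j$ does not make some other agent newly envy $S\setminus\{g\}$; this is item~2 of the step-3 case. Because $S$ is taken as an exact minimal subset rather than all of $P'_k$, minimality covers every wealthy agent simultaneously. That is where the choice of a globally minimal $S$ over all wealthy agents matters.
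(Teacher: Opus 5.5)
Your proof is essentially the paper's argument. Every allocated bundle is minimal when it is created: after deleting any item, no then-poor agent desires it and no then-wealthy agent strictly prefers it. Wealthy agents' values only increase, and a poor agent who later becomes wealthy has value at least her RMMS, which exceeds the value of any such bundle with an item removed. Your two invariants just make this bookkeeping explicit.

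One small point needs fixing, and the paper flags it explicitly. Minimality of $P'_k$ only rules out \emph{non-empty} strict subsets. So if $|P'_k|=1$ and some poor agent has RMMS value $0$, she does desire $P'_k\setminus\{g\}=\emptyset$, and your claim that every $A_\ell\setminus\{g\}$ has value below her RMMS fails. Restrict the weak invariant to bundles with at least two items. EFX toward a single-item bundle holds trivially, since $v(\emptyset)=0$.
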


    When the algorithm ends, all agents are wealthy, implying that the partial allocation is RMMS.

    Every bundle containing at least two items that is allocated by the algorithm is minimal, in the sense that removing any item from it, no poor agent desires it, and no wealthy agent envies it. 
    (The reason for excluding in the above statement bundles with only one item is because there might be agents whose RMMS value is~0. If so, bundles containing a single item might not be considered to be minimal.) As every agent gets a desirable bundle, removing any item from the bundle of any other agent, the remaining bundle is not envied. Hence, the partial allocation is EFX.

This completes the proof of~\Cref{obs:EFXRSF}.
\end{proof}

\propFeige*

The proposition follows immediately from applying the following~\Cref{lem:complete} to the partial allocation of~\Cref{obs:EFXRSF} (and using the fact that EFX implies EFL).

\begin{lemma}
    \label{lem:complete}
    Let ${\mathcal P} = (P_0, P_1, \ldots, P_n)$ be a partial allocation that is EFL. Then there is a full allocation ${\mathcal A} = (A_1, \ldots, A_n)$ that is EFL, and satisfies $v_i(A_i) \ge v_i(P_i)$ for every agent $a_i$. 
\end{lemma}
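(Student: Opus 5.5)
The plan is to run the LMMS (envy-cycle) algorithm starting from the partial allocation $\mathcal P$, and to check that EFL is preserved at every step. \Cref{lem:LMMS} already gives most of what is needed. Each final bundle $A_i$ contains some original bundle $P_{\pi(i)}$. Every agent satisfies $v_i(A_i)\ge v_i(P_i)$, because cycle rotations only increase an agent's value and additions only enlarge bundles. For a bundle $A_j$ that received at least one new item, the last inserted item $g_j$ satisfies $v_i(A_i)\ge v_i(A_j\setminus g_j)$ for every $i$. The value guarantee is therefore immediate, and the work is in proving EFL.

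To prove EFL I will maintain the invariant by induction over the steps of LMMS. In each step the algorithm first eliminates envy cycles by rotating bundles, so the bundle set stays the same and every agent's value weakly increases. It then gives the next free item $g$ to an unenvied bundle, meaning one that no agent values strictly more than her own. An EFL witness for a pair $(i,j)$ consists of conditions on $v_i(X_i)$ from below and on the bundle $X_j$. Condition 1 depends only on the bundle, and condition 2 survives any increase of $v_i(X_i)$. Hence rotations preserve EFL: the pair $(i,\text{bundle }B)$ still has its witness, and agent $i$ now holds a bundle she values at least as much. One caveat is that $i$ may now hold $B$ itself, in which case the pair is trivially fine.

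Now consider an addition, where item $g$ is added to an unenvied bundle $B$ held by agent $j$. Pairs not involving the new bundle are unaffected, since agents' own bundles do not change for the others and $j$'s bundle only grows, which only helps $j$'s value. For a pair $(i, B\cup\{g\})$ with $i\neq j$, unenviedness gives $v_i(X_i)\ge v_i(B)=v_i((B\cup g)\setminus g)$. The remaining requirement of condition 2 is $v_i(X_i)\ge v_i(\{g\})$. This is the \emph{main obstacle}: plain LMMS guarantees EF1 but not the less-preferred condition. I expect to handle it with a case split on $B$. If $B$ contains an item $h$ with $v_i(h)>0$, I will try to use the fact that $v_i(X_i)\ge v_i(B)\ge v_i(\{h\})$ (monotonicity), and I will use $h$ rather than $g$ as the removed good only if $v_i(B\cup g\setminus h)\le v_i(X_i)$. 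This fails in general, so the order in which items are allocated must be controlled.

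My concrete fix is to allocate the free items $P_0$ in a careful order, as in Barman et al.'s EFL algorithm: always give to an unenvied agent her most-valued free item, or more simply process items so that each addition $g$ is valued by the agents at most as much as some good already in $B$ or in $X_i$. Failing that, I will use the standard argument that if $v_i(\{g\})>v_i(X_i)\ge v_i(B)$ then $B$ has no $v_i$-positive item, so $B\cup g$ has at most one positive item for $i$ and condition 1 holds. That dichotomy does in fact settle the case when $B$ has zero value to $i$. In the remaining case $v_i(B)>0$, I need $v_i(g)\le v_i(X_i)$. I would secure this by choosing $g$, among the free items, to be one that the receiving agent (the unenvied source of the envy graph) values most. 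I would also argue via the invariant that every agent values her own bundle at least as much as any free item she could have chosen earlier, extending EFL-style reasoning to the monotone setting. Verifying this last step is where I expect the real difficulty.
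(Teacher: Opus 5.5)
\textbf{There is a genuine gap.} You correctly reduce everything to one inequality: $v_i(X_i)\ge v_i(\{g\})$, where $g$ is the item just added to an unenvied bundle $B$ with $v_i(B)>0$. The proposal never proves it, and no LMMS-style run started from $\mathcal{P}$ itself can.

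The invariant you want is that each agent values her bundle at least as much as every free item. That invariant drives Barman et al.'s argument, but only because there every agent's first item is her own favorite pick from an empty start. An agent who begins with a bundle from $\mathcal{P}$ and stays envied never picks. Letting the \emph{receiver} choose her favorite item says nothing about how other agents value that item.

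Concretely, take three additive agents and items $x,y,z,g$ with $v_1=(2,1,0,10)$, $v_2=(0,1,2,1)$, $v_3=(2,0,1,0)$ on $(x,y,z,g)$. Let $P_0=\{g\}$, $P_1=\{x\}$, $P_2=\{y\}$, $P_3=\{z\}$; this is EFL because all bundles are singletons.
\begin{enumerate}
\item Agent $2$ envies $3$ and agent $3$ envies $1$, with no other envy. So agent $2$ is the only unenvied agent and $g$ the only free item, whatever selection rule you use.
\item After $g$ goes to agent $2$, agent $1$ (own value $2$) faces $\{y,g\}$, which contains two goods she values positively.
\item Removing either good fails condition 2, because $v_1(\{g\})=10>2$.
\item No envy cycle arises, so the final allocation is not EFL.
\end{enumerate}
The example also refutes your claim that $v_i(\{g\})>v_i(X_i)\ge v_i(B)$ forces $B$ to have no $v_i$-positive item: take $B=\{y\}$.

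\textbf{The missing idea} is to modify $\mathcal{P}$ before running LMMS. The lemma allows this, since it only asks for $v_i(A_i)\ge v_i(P_i)$, not $P_i\subseteq A_i$. The paper's pre-processing step works as follows: while some agent strictly prefers a single free item $g$ to her current bundle, she takes $\{g\}$ and returns all her other items to the pool.
\begin{enumerate}
\item It terminates. Each move strictly raises the mover's value, and afterwards her bundle is a singleton, so each agent moves at most $m$ times.
\item It preserves EFL. A singleton bundle satisfies condition 1 for every observer, and the mover's own value only rises.
\item It ends with a partial allocation $\mathcal{P}'$ satisfying $v_i(P'_i)\ge v_i(P_i)$ and $v_i(P'_i)\ge v_i(\{g\})$ for every agent $i$ and every free item $g$.
\end{enumerate}
Plain LMMS from $\mathcal{P}'$ then suffices, with no control over item order.
\begin{itemize}
\item Bundles that receive no new item inherit EFL from $\mathcal{P}'$, because agents' values only increase.
\item For a bundle whose last added item is $g_j$, \Cref{lem:LMMS} gives $v_i(A_i)\ge v_i(A_j\setminus g_j)$. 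Also $v_i(A_i)\ge v_i(P'_i)\ge v_i(\{g_j\})$, which completes condition 2.
\end{itemize}
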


The known approach for proving statements similar to~\Cref{lem:complete} is to use the LMMS algorithm and apply~\Cref{lem:LMMS}. Trivially, this implies a variation on~\Cref{lem:complete}, in which the final allocation is EF1 (but not necessarily EFL). Moreover, if in the initial EFL partial allocation no agent envies $P_0$ (as is the case for example in the partial EFX allocation of~\cite{chaudhury2021little}), then the final allocation will be EFL.

Our proof of~\Cref{lem:LMMS} is designed in a way that will later allow us to prove also~\Cref{pro:complete}. 
We add a pre-processing step, in which the partial allocation ${\mathcal P} = (P_0, P_1, \ldots, P_n)$ of~\Cref{lem:complete} is replaced by a new partial allocation ${\mathcal P'} = (P'_0, P'_1, \ldots, P'_n)$. The partial allocation ${\mathcal P'}$ will have the property that  there is no free item (item in $P'_0$) that some agent strictly prefers over her own bundle. 

We now present the proof of~\Cref{lem:complete}. 

\begin{proof}
We perform a pre-processing step that proceeds in rounds. In every round, if there is a free item that some agent strictly prefers over her own bundle, one such agent (chosen arbitrarily) takes that item, and gives back her other items (making them free). When every agent weakly prefers her current bundle over every single  free item, the pre-processing step ends. 
The number of rounds is less than $n\cdot m$, even in the most naive implementation of the pre-processing step.

Observe that after the pre-processing step, the resulting partial allocation ${\mathcal P'} = (P'_0, P'_1, \ldots, P'_n)$ is still EFL and satisfies $v_i(P'_i) \ge v_i(P_i)$ for every agent $a_i$.
We use ${\mathcal P'}$ instead of ${\mathcal P}$ as the starting point for the LMMS algorithm.

~\Cref{lem:LMMS} implies that after running the LMMS algorithm, the final allocation ${\mathcal A} = (A_1, \ldots, A_n)$ has the EFL property. Indeed, consider any two agents, $a_i$ and $a_j$. If $A_j$ is a bundle that some agent held in ${\mathcal P'}$, 
then by not having EFL envy in ${\mathcal P'}$, $a_i$ has no EFL envy towards $a_j$ in the ${\mathcal A}$. Else, if $A_j$ contains items that were added by the LMMS algorithm, let $g_j$ be the last item to be added to $A_j$. Then $v_i(A_i) \ge v_i(g_j)$ (because of the pre-processing step), and $v_i(A_i) \ge v_i(A_j \setminus g_j)$ (by ~\Cref{lem:LMMS}). Hence, the EFL property holds.
\end{proof}

For some natural restricted classes of valuations, the RMMS offers nontrivial guarantees with respect to the MMS. A valuation $v$ is \emph{unit-demand} if the value of every bundle $S \subseteq M$ is equal to the value of its most valuable item, i.e., $v(S) = \max_{g \in S} v(g)$.

\begin{observation}
\label{obs:RSFdominating}
    For unit-demand valuations, the RMMS equals the MMS, for additive valuations it is at least $\frac{2}{3}$-MMS (the ratio $\frac{2}{3}$ can be replaced by $\frac{2n}{3n-1}$ for odd $n$ and $\frac{2n-2}{3n-4}$ for even $n$), and for subadditive valuations, it is at least $\frac{1}{n}$-MMS. 
\end{observation}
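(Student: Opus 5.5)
The statement has three parts, and I would handle each separately. Throughout I use the same reformulation. Because the RMMS is the largest residual self-feasible value, to show $\rmms(\items,v,n)\ge t$ it suffices to show that $t$ is residual self-feasible. That is, for every $0\le k<n$ and every $k$ disjoint bundles $B_1,\dots,B_k$ with $v(B_l)<t$, the set $R=\items\setminus\bigcup_l B_l$ has an $(n-k)$-partition whose parts all have value at least $t$. The upper bound $\rmms\le$ MMS holds by definition (take $k=0$).

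\emph{Unit-demand.} Let $\mu=\text{MMS}(\items,v,n)$. For unit-demand $v$, $\mu$ equals the value of the $n$-th most valuable item, so at least $n$ items have value $\ge\mu$; call them heavy. A bundle with $v(B_l)<\mu$ contains no heavy item, so all heavy items lie in $R$. I then build the $n-k$ parts of $R$ by placing one heavy item in each part and distributing the remaining items of $R$ arbitrarily. Every part has value $\ge\mu$, so $t=\mu$ is residual self-feasible and $\rmms=\mu$.

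\emph{Subadditive.} Take $t=\mu/n$ and an MMS partition $Q_1,\dots,Q_n$ with $v(Q_j)\ge\mu$. Set $R_j=Q_j\cap R$. Subadditivity followed by monotonicity gives
\[ v(Q_j)\le v(R_j)+\sum_{l=1}^k v(Q_j\cap B_l)\le v(R_j)+\sum_{l=1}^k v(B_l)< v(R_j)+k\mu/n. \]
Hence $v(R_j)>(n-k)\mu/n\ge\mu/n$. I would take $R_1,\dots,R_{n-k-1}$ as parts and merge $R_{n-k},\dots,R_n$ into the last part. By monotonicity every part has value at least $\mu/n$, so $\rmms\ge\frac1n$-MMS.

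\emph{Additive.} This is where the real work lies, and I expect it to be the main obstacle. The residual statement for $t=\frac{2n}{3n-1}\mu$ ($n$ odd), respectively $\frac{2n-2}{3n-4}\mu$ ($n$ even), is exactly the invariant that a correctness proof of the lone divider algorithm (\Cref{alg:lone-divider}) must maintain, as described in \Cref{sec:lone}. Such proofs exist: the analysis of Hummel~\cite{Hummel2025}, building on~\cite{AIGNERHOREV2022164}, shows that every remaining agent can act as lone divider at threshold $t$ after $k$ bundles of value $<t$ to her have been removed. I would therefore invoke that analysis directly, noting that its divider step uses only the hypothesis that the removed bundles are worth less than $t$ to the divider, which is precisely residual self-feasibility.

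If a self-contained argument is wanted, the plan follows that analysis. Normalize $\mu=1$, fix an MMS partition, and apply the standard reductions: an item of value $\ge t$ is set aside as a singleton part. Each removed bundle is worth less than $t$, so the total loss is below $kt$ while at least $n-k$ MMS bundles' worth of value remains. Then bag-fill $R$ greedily. The delicate point is a counting argument showing that greedy bag filling cannot stop with fewer than $n-k$ bags. It uses the facts that each bag overshoots $t$ by less than one item's value, that items are small after the reductions, and that each removed bundle intersects the MMS bundles in total value below $t$. Balancing these is what yields the parity-dependent constants.

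Finally, the simpler constant $\frac23$ follows because both refined ratios are at least $\frac23$.
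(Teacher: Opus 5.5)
The additive part has a genuine gap. The constants $\frac{2n}{3n-1}$ ($n$ odd) and $\frac{2n-2}{3n-4}$ ($n$ even) are the $\frac{2\,\mathrm{odd}(n)}{3\,\mathrm{odd}(n)-1}$ bound of Kurokawa, Procaccia and Wang. Residual self-feasibility at this threshold is established in their analysis~\cite{KPW18}, and that is the result the paper invokes.

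The sources you cite do not contain this result. \cite{AIGNERHOREV2022164} uses envy-free matchings to obtain a $1$-out-of-$(2n-2)$ MMS guarantee, and \cite{Hummel2025} concerns hereditary set-system valuations. So ``invoke that analysis directly'' rests on a theorem that is not in those papers.

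Your fallback sketch does not supply the result either. The only reduction you specify sets aside items of value at least $t$. Items of value just below $t$ remain, so a greedily filled bag can exceed $t$ by almost $t$, and the claim that ``items are small after the reductions'' is not justified. The reductions that would make items small (e.g.\ pairing the $n$-th and $(n+1)$-st largest items) are justified through the MMS partition of the whole item set. The removed bundles may cut across that partition and contain some of those top items, which breaks that justification. Controlling exactly this interaction is the nontrivial content of the KPW analysis. The repair is to cite~\cite{KPW18} (or reproduce its argument), not to appeal to generic bag filling.

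The other two parts are fine. Your unit-demand argument is the paper's, with the residual check written out; the paper only asserts that both shares equal the value of the $n$-th most valuable item.

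Your subadditive argument is correct but takes a different route from the paper's. You verify residual self-feasibility of $\frac1n\mms$ directly: you intersect an MMS partition with the remaining items and charge each removed bundle, worth less than $\mu/n$, against every part. The paper instead proves that every EF1 allocation gives a subadditive agent at least $\frac1n\mms$, using only that agent's own EF1 conditions. It then transfers this to the RMMS through the MXS, relying on $\rmms\ge\text{MXS}$ (\Cref{obs:MXS}).

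Your route is self-contained and does not depend on Lemma~3.1 of~\cite{EpistemicMonotone}. The paper's route additionally shows that EF1 alone already guarantees $\frac1n$-MMS to subadditive agents, a fact it uses in the remarks after \Cref{cor:subadditive}.
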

\begin{proof}
For unit-demand valuations, it is not difficult to see that both the MMS and the RMMS are equal to the value of the $n$th most valuable item.

The fact that for additive valuations, $\frac{2}{3}$-MMS  (more precisely, $\frac{2n}{3n-1}$ for odd $n$ and $\frac{2n-2}{3n-4}$ for even $n$) has the residual self-maximizing property was proved in~\cite{KPW18}.

For subadditive valuations, we prove that every EF1 allocation gives each agent at least $\frac{1}{n}$-MMS. Using \Cref{prop:Feige}, it follows that MXS implies $\frac{1}{n}$-MMS, and consequently that also RMMS implies $\frac{1}{n}$-MMS.

Our proof is basically the same as the proof given for the additive case~\cite{caragiannis2019unreasonable}. 
Consider an arbitrary EF1 allocation $\mathcal{A} = (A_1, \ldots, A_n)$, and suppose that agent $a_n$ has a subadditive valuation $v_n$. For every $1 \le j \le n-1$, if $A_j$ is not empty then there is an item $g_j \in A_j$ such that $v_n(A_n) \ge v_n(A_j \setminus g_j)$. Consider now any MMS partition ${\mathcal P} = (P_1, \ldots, P_n)$. At least one of the parts does not contain any of the $g_j$s. Its value is then at most $v_n(\items \setminus \{g_1, \ldots, g_{n-1}\}) \le A_n + \sum_j v_n(A_j \setminus g_j) \le n\cdot v_n(A_n)$, where the first inequality uses subadditivity of $v_n$.
\end{proof}

In comparison, for additive valuations the value of MXS is at least $\frac{4}{7}$-MMS, but sometimes smaller than $\frac{3}{5}$-MMS~\cite{ABM18}. 

In passing, we note that the combination of~\Cref{obs:RSFfeasible} and~\Cref{obs:RSFdominating} 
implies the following corollary.

\begin{corollary}
    \label{cor:subadditive}
    For agents with subadditive valuations,  $\frac{1}{n}$-MMS allocations always exist.
\end{corollary}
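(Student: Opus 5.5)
The plan is simply to combine the two earlier results, \Cref{obs:RSFfeasible} and \Cref{obs:RSFdominating}. By \Cref{obs:RSFfeasible}, for every instance with monotone valuations there is an allocation $\mathcal{A}=(A_1,\ldots,A_n)$ with $v_i(A_i)\ge \rmms(\items,v_i,n)$ for every agent $i$. If we want a self-contained route, this allocation can be taken from \Cref{obs:EFXRSF} together with \Cref{lem:complete}: start from the partial RMMS and EFX allocation, and complete it while only increasing each agent's value. Subadditive valuations are in particular monotone, since they are normalized and monotone by the standing assumptions of \Cref{sec:pre}. So this existence statement applies verbatim to an instance with subadditive valuations.

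Next, I will invoke \Cref{obs:RSFdominating}. It says that for a subadditive valuation $v_i$ we have $\rmms(\items,v_i,n)\ge \frac{1}{n}\,\text{MMS}_i$. Chaining the two inequalities gives $v_i(A_i)\ge \rmms(\items,v_i,n)\ge \frac1n \text{MMS}_i$ for every agent, so $\mathcal{A}$ is a $\frac1n$-MMS allocation. This works even when agents have different subadditive valuations, because both facts are per-agent statements.

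The only point needing care is the justification inside \Cref{obs:RSFdominating}. The bound there was derived via EF1 allocations and MXS, with RMMS then dominating MXS by \Cref{obs:MXS}. I would double-check that the EF1-to-$\frac1n$-MMS argument really uses only subadditivity and monotonicity. It does: it removes at most $n-1$ items $g_j$, picks an MMS part avoiding them, and bounds that part by $v_n(A_n)+\sum_j v_n(A_j\setminus g_j)\le n\,v_n(A_n)$. Given that, the corollary is immediate, and I expect no real obstacle.
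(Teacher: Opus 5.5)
Your proposal is correct and matches the paper's argument: the paper obtains the corollary by combining \Cref{obs:RSFfeasible}, which says RMMS allocations exist for monotone valuations, with \Cref{obs:RSFdominating}, which says that for subadditive valuations RMMS is at least $\frac{1}{n}$-MMS via the EF1/MXS argument and \Cref{obs:MXS}. As the paper itself remarks, the same EF1 argument shows that any EF1 allocation (for example, the LMMS output) is already $\frac{1}{n}$-MMS, so the route through RMMS is not strictly necessary.
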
 

Our proof for~\Cref{obs:RSFdominating} actually shows that every EF1 allocation gives agents with subadditive valuations at least $\frac{1}{n}$-MMS. Hence,~\Cref{cor:subadditive} is also a consequence of the existence of EF1 allocations~\cite{lipton2004approximately}, or the existence of MXS allocations~\cite{EpistemicMonotone}. We are not aware of the corollary being stated in previous work.
The bound stated in~\Cref{cor:subadditive} holds for all $n$. It is known that for subadditive valuations, $\rho$-MMS is not feasible for $\rho > \frac{1}{2}$~\cite{GhodsiHSSY22}, $\frac{1}{2}$-MMS is feasible for $n \le 4$~\cite{cCMS25}, and $\frac{1}{14 \log n}$-MMS is feasible for all $n$~\cite{FH25}.
(Recent work improves the ratio to $\frac{1}{8 \log \log n}$~\cite{SS25,feige25}.) 

The RMMS satisfies three desirable properties: it is feasible (for the class of all monotone valuations), it is self-maximizing, and it has reasonably high value, in particular, at least $\frac{2}{3}$-MMS for additive valuations. We are not aware of any other share that is known to enjoy such a combination of properties. 

This is not to say that the RMMS has no weaknesses. The next subsections present properties of RMMS that can be regarded as weaknesses.

\subsection{RMMS Beyond Additive Valuations} \label{sec:beyond}

Recall that the maximin share (MMS) is the value of the worst bundle in the best $n$-partition. One may consider a related share, the {\em minimax share}, defined as the value of the best bundle in the worst $n$-partition. For additive valuations, the minimax share is at least as high as the maximin share, but for other classes of valuations this need not hold. In particular, for submodular valuations (satisfying $v(S + e) - v(S) \ge v(T + e) - v(T)$ for every item $e$ and sets $S \subset T \subset \items$), the value of the minimax share might be as low as $\frac{1}{n}$ times the MMS. This happens when there are $n^2$ items that belong to $n$ groups, each containing $n$ items, and the value of a set $S$ is the number of groups that it intersects. The MMS is $n$ (partition $\items$ into $n$ bundles, each containing one item from each group), whereas the value of the minimax share is only~1 (partition $\items$ into $n$ bundles, each composed of a single group).

\begin{proposition}
\label{pro:minimax}
    For every class of valuations, the value of the RMMS is not larger than that of the minimax share.
\end{proposition}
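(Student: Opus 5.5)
Let $Q=(Q_1,\ldots,Q_n)$ be a partition of $\items$ that attains the minimax share, i.e., it minimizes $\max_j v(Q_j)$ over all $n$-partitions. Write $\mu = \max_j v(Q_j)$, and suppose for contradiction that $t := \rmms(\items, v, n) > \mu$. The plan is to apply the residual self-feasibility in the definition of RMMS (\Cref{def:rmms}) with $k = n-1$ and the $n-1$ removed bundles taken to be $Q_1,\ldots,Q_{n-1}$.

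First, each $Q_j$ satisfies $v(Q_j) \le \mu < t$, so these $n-1$ bundles are all legitimately removable: each has value strictly less than $t$. After removing them, the remaining items are exactly $Q_n$. Residual self-feasibility then requires a $1$-partition of $Q_n$ whose single part has value at least $t$. But that part is $Q_n$ itself, and $v(Q_n) \le \mu < t$, which is a contradiction. Hence $t \le \mu$.

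This argument uses no property of $v$ beyond the definitions, so it holds for every class of valuations. It even works with an arbitrary partition in place of the minimax one: for any $n$-partition, the RMMS is at most the value of its best bundle, and taking the worst such partition gives the minimax bound.

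The one point that needs care is the strictness in the definition, since removed bundles must have value \emph{strictly} below $t$. Assuming $t > \mu$ handles this. Alternatively, if $t$ is defined as a maximum over residual self-feasible values, one can argue directly that no value $t > \mu$ is residual self-feasible, so the supremum is at most $\mu$.
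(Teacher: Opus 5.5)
Your proof is correct and is essentially the paper's argument: take an $n$-partition whose most valuable bundle has value $\mu$, and if the RMMS value exceeded $\mu$, removing $n-1$ of its bundles (each of value strictly below the RMMS) would leave a single bundle of value at most $\mu$, contradicting residual self-feasibility with $k=n-1$. Your remark that this works for an arbitrary $n$-partition, not just the minimax one, matches how the paper phrases it.
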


\begin{proof}
    Consider an arbitrary $n$-partition of $\items$ (or specifically, the one that determines the value of the minimax share). The RMMS value cannot be strictly higher than that of the most valuable bundle in the partition. Otherwise, for RMMS, we would be allowed to remove any $n-1$ bundles of that partition (as their value would be strictly smaller than the RMMS), and the remaining bundle would have to have value at least the RMMS.  
\end{proof}

Proposition~\ref{pro:minimax} implies
that beyond additive valuations, the RMMS does not approximate the MMS very well. 
In particular, for submodular valuations RMMS only ensures $\frac{1}{n}$-MMS, whereas $\frac{10}{27}$-MMS allocations are known to exist~\cite{BUF23}.

\subsection{Computational Aspects}\label{sec:comp}

As is standard when considering arbitrary valuation functions, we assume query access to the valuations. A common query model is that of value queries, in which a query specifies a bundle $S$, and the agent $a_i$ replies with its value $v_i(S)$. An even simpler kind of query model is that of comparison queries, a query model introduced in~\cite{bu2024fair}. In a comparison query, two bundles $S$ and $T$ are presented to an agent $a_i$, and the agent only needs to reply with a single bit, indicating whether $v_i(S) \ge v_i(T)$ or not. Observe that two value queries suffice in order to implement a comparison query, but comparison queries cannot implement value queries.

We assume that values of all bundles are integers, in the range $[0, K]$, implying, in particular, that $v_i(\items) \le K$. We refer to an algorithm (working under some query model to the valuations, where replying to a query is assumed to take one unit of time) as strongly polynomial time if it runs in time polynomial in $n$, $m$ (independent of $K$), as polynomial time if it runs in time polynomial in $n$, $m$ and $\log K$, and as pseudo-polynomial time if it runs in time polynomial in $n$, $m$ and $K$. A computational problem that is weakly NP-hard does not have polynomial time algorithms (unless P=NP), but may have pseudo-polynomial time algorithms. 

We now discuss the computational  complexity of tasks related to the current paper.

\begin{proposition}
\label{pro:complete}
    The algorithm proving~\Cref{lem:complete} runs in strongly polynomial time, using only comparison queries.
\end{proposition}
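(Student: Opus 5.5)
The algorithm behind \Cref{lem:complete} has two phases: the pre-processing phase, and then the LMMS algorithm started from ${\mathcal P'}$. I will bound the running time of each phase separately. For each phase I will check two things: that every decision it makes is the outcome of comparing the values of two bundles for a single agent, and that the number of operations is polynomial in $n$ and $m$ with no dependence on $K$.

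For the pre-processing phase, one round asks whether some free item $g \in P_0$ and some agent $a_i$ satisfy $v_i(\{g\}) > v_i(P_i)$. This is the negation of the comparison query ``$v_i(P_i) \ge v_i(\{g\})$'', so a round costs at most $n\cdot m$ comparison queries. The swap itself (agent $a_i$ takes $g$ and frees her other items) needs no queries. The main step is the bound on the number of rounds; the proof only says ``less than $n\cdot m$'', and I would make it precise by a potential argument. After a round in which $a_i$ takes item $g$, her bundle is exactly the singleton $\{g\}$, and its value strictly exceeds her previous value. From then on she can only move to another singleton $\{g'\}$ with $v_i(\{g'\}) > v_i(\{g\})$. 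Hence her successive singletons are strictly increasing in $v_i$ and pairwise distinct, and agent $a_i$ takes part in at most $m$ rounds. This gives at most $n m$ rounds in total, and at most $n^2m^2$ comparison queries for the whole phase. This argument does not depend on additivity, so it holds for monotone valuations.

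For the LMMS phase, one step maintains the envy graph on the current bundles. The edge $(i,j)$ is present iff $v_i(A_j) > v_i(A_i)$, which is again one comparison query, so building the graph costs $O(n^2)$ queries. The algorithm then repeatedly eliminates cycles by rotating bundles along them. Each rotation strictly increases the value of every agent on the cycle and leaves the other agents unchanged, so the number of envy edges strictly decreases. Thus there are at most $n^2$ rotations per item, each found by a graph search that needs no further queries. After that, the algorithm gives the next free item to a source of the graph, which is located without queries. There are at most $m$ items, so this phase uses polynomially many comparison queries and polynomial additional time. Comparison queries are enough here because the LMMS algorithm only ever tests envy, never a numerical value.

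The main obstacle is the round bound for pre-processing. It must be argued without values, only by the strict increase of each agent's bundle and the fact that she then holds a singleton. I also need to make sure that the LMMS cycle-elimination count does not silently depend on $K$. Adding the two phases gives strongly polynomial running time using only comparison queries.
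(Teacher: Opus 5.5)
Your proposal is correct and follows the same route as the paper, which simply inspects the pre-processing step and the LMMS algorithm. You make the round bound (at most $m$ strictly improving singletons per agent) and the edge-decrease bound for cycle rotations explicit, which the paper leaves implicit. One small slip: after a rotation, the out-edges of the agents on the cycle must be recomputed with fresh comparison queries (up to $O(n^2)$ per rotation), so this phase is not query-free between item insertions, but the total still stays polynomial in $n$ and $m$.
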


\begin{proof}
    Inspection shows that the pre-processing step in the proof of~\Cref{lem:complete} runs in polynomial time and can implemented using only comparison queries. The same applies to the LMMS algorithm.
\end{proof}

In contrast, computing the RMMS value is not as easy.

\begin{proposition}
    \label{pro:NPhard}
    Computing the RMMS value for additive valuations is weakly NP-hard. Likewise, computing  feasible $n$-partitions is weakly NP-hard.
\end{proposition}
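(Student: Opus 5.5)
Reduce from \textsc{Partition} with $n=2$ agents. Given positive integers $a_1,\dots,a_m$ with total sum $2B$, build an additive valuation over items of values $a_1,\dots,a_m$, with $n=2$. The plan is to show that the RMMS equals $B$ exactly when a perfect partition exists, and is strictly below $B$ otherwise. Hardness of computing a feasible partition will follow from the same construction.

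First I would pin down the RMMS for $n=2$. By the definition, a threshold $t$ is residual self-feasible iff two conditions hold:
\begin{enumerate}
\item[(i)] for $k=0$, there is a $2$-partition of $\items$ with both parts of value at least $t$;
\item[(ii)] for $k=1$, every bundle $S$ with $v(S)<t$ satisfies $v(\items\setminus S)\ge t$.
\end{enumerate}
Condition (i) says exactly that $t\le \mathrm{MMS}$. For additive valuations with $t\le v(\items)/2=B$, condition (ii) holds automatically, since $v(\items\setminus S)>2B-t\ge t$. Moreover $\mathrm{MMS}\le B$ always, so every $t\le\mathrm{MMS}$ satisfies $t\le B$. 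Hence for $n=2$ and additive valuations, $\rmms=\mathrm{MMS}$.

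The problem therefore becomes computing the $2$-agent MMS, which equals $B$ iff a perfect \textsc{Partition} exists. So even deciding whether $\rmms(\items,v,2)\ge v(\items)/2$ is NP-hard.

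For the second claim, suppose we could compute a $v$-acceptable $2$-partition, i.e., one in which both parts have value at least the RMMS. Summing the part values of the returned partition tells us whether both parts equal $B$, and hence whether the RMMS is $B$. So this task solves \textsc{Partition} as well. Alternatively, if the intended meaning is an RMMS allocation, the same reduction works by taking two identical agents.

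The hardness is only weak, because \textsc{Partition} is only weakly NP-hard. Consistent with this, I would remark that pseudo-polynomial dynamic programming handles $n=2$, though that remark is not needed for the statement.

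The one step needing care is the equivalence $\rmms=\mathrm{MMS}$ for $n=2$, specifically the direction showing that condition (ii) adds no constraint. That direction relies on the RMMS never exceeding $v(\items)/2$, which I would state explicitly.
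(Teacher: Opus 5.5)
Your proof is correct and follows the paper's route: for $n=2$ and additive valuations the RMMS equals the MMS, so computing the value, or an acceptable $2$-partition, is as hard as \textsc{Partition}. Your check that the $k=1$ condition is automatic for $t\le v(M)/2$ spells out what the paper treats as immediate; for the partition claim you read off the answer by testing whether each part has value exactly $B$ (summing the parts always gives $2B$), and the paper additionally extends hardness to every $n\ge 2$ by adding $n-2$ items of exceptionally high value, a strengthening the statement as phrased does not require.
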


\begin{proof}
    For $n=2$ and additive valuations, the RMMS equals the MMS, which is weakly NP-hard to compute (by an immediate reduction from the weakly NP-hard problem of {\em partition}). Likewise, finding a feasible partition is weakly NP-hard. The case of larger $n$ can be reduced to from the case of two agents, by adding $n - 2$ items of exceptionally high value. 
\end{proof}

\section{Existence of EEFX+EFL Allocations for Additive Valuations} \label{sec:eefx-efl}

As is standard in the EFX literature, to simplify exposition, we restrict our attention to non-degenerate instances $\instance$ without loss of generality (by \Cref{lem:non-degenrate}).

We proposed the concept of residual maximin share (RMMS) (\Cref{def:rmms}) and proved that partial allocations that are EFX and RMMS exist as well as complete allocations that are EFL and RMMS (\Cref{prop:Feige}). Hence, in order to prove the existence of complete allocations that are EFL and EEFX at the same time (and therefore also EF1 and EEFX), it suffices to prove that $\theta_i \leq \rmms(M,v_i,n)$ or that $\theta_i$ is \emph{residual self-feasible}. In other words, for every $0 \leq k < n$, removing $k$ bundles each of value (strictly) less than $\theta_i$ for agent $i$, there is an $(n-k)$ partition of the set of remaining items, where each part has value at least $\theta_i$ for agent $i$. We first handle the case of $k=0$ where no bundles have been removed so far (in \Cref{lem:k-is-0}). Then, we extend the argument to the general case of $0<k<n$ where several low-valued bundles have already been removed (in~\Cref{lem:lone-divider}). 

We now state and prove~\Cref{thm:self-feasible}, which identifies the key property used to derive our main result (\Cref{thm:main}).

\begin{restatable}{theorem}{thmSelfFeasible}\label{thm:self-feasible}
    Given a non-degenerate instance $\instance$, for all $i \in N$, $\theta_i$ is residual self-feasible, i.e., $\theta_i \leq \rmms(M,v_i,n)$.
\end{restatable}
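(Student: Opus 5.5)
The plan is to show directly that the value $t=\theta_i$ satisfies the defining condition of RMMS (\Cref{def:rmms}). That is, for every $0\le k<n$ and every $k$ disjoint bundles $B_1,\dots,B_k$ with $v_i(B_j)<\theta_i$, the remaining items $M'=M\setminus\bigcup_j B_j$ admit an $(n-k)$-partition into parts of value at least $\theta_i$. The argument has two stages, as announced in the paper: first $k=0$ (\Cref{lem:k-is-0}), then general $k$ (\Cref{lem:lone-divider}).

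The basic tool is the characterization of $\theta_i$ given by non-degeneracy. Order all subsets of $M$ strictly by $v_i$. Then $\theta_i=v_i(S^*)$, where $S^*$ is the lowest bundle such that every bundle above it is EEFX-feasible. The bundle immediately below $S^*$, call it $S^-$, is EEFX-infeasible, unless $S^*$ is the globally lowest bundle, in which case $\theta_i=0$ and everything is trivial. Consequently, every bundle of value less than $\theta_i$ either is infeasible or lies below an infeasible bundle $S^-$ with $v_i(S^-)<\theta_i$.

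For $k=0$, I would take an EFX partition $(P_1,\dots,P_n)$ of $M$ for agent $i$, which exists by the theorem of Plaut and Roughgarden. I would choose it to lexicographically maximize its sorted values, and let $P_1$ be the minimum part. If $v_i(P_1)\ge\theta_i$ we are done. Otherwise I consider the infeasible bundle $S^-$, which satisfies $v_i(P_1)\le v_i(S^-)<\theta_i$. The aim is to derive a contradiction by building an EEFX certificate for $S^-$ from the EFX partition, transferring items so that no part of $M\setminus S^-$ exceeds $v_i(S^-)$ after removing any single good. A cleaner route is to prove monotonicity of infeasibility downward in a suitable sense: any bundle $T$ with $v_i(T)\le\min_j v_i(P_j)$ in a maximal EFX partition is infeasible implies... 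Instead, I would argue via MXS, which is at most the minimum part of an EFX partition, and show that an infeasible bundle of value at least that minimum contradicts the lexicographic maximality.

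For general $k$, I would use induction on $n$, mirroring the proof of \Cref{obs:MXS} (Lemma 3.1 of \cite{EpistemicMonotone}). Remove one bundle $B$ with $v_i(B)<\theta_i$, and show that the strong EEFX share of the reduced instance $(M\setminus B, n-1)$ is at least $\theta_i$. Equivalently, if $T\subseteq M\setminus B$ has value at least $\theta_i$, then $T$ is EEFX-feasible in the reduced instance; and conversely, a certificate for an infeasible bundle in the reduced instance, together with $B$, yields infeasibility in the original. The key is the contrapositive. Let $T'$ be EEFX-infeasible in the reduced instance with $v_i(T')$ below the reduced threshold. I need to produce an EEFX-infeasible bundle in the original instance with value at least as large, possibly $T'$ itself, using that $B$ is small: any original certificate for $T'$ could be rearranged so that one part absorbs $B$ and is discarded, giving a reduced certificate.

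The main obstacle is exactly that rearrangement. An original certificate for $T'$ partitions $M\setminus T'$ into $n-1$ parts, and $B$ may be spread across several of them. I would first try to handle this by replacing $B$ with a part that contains it, using $v_i(B)<\theta_i$ and additivity to redistribute the leftover goods of the parts $B$ intersects. If that fails, I would use the freedom that $B$ is merely below the threshold, hence below some infeasible bundle, to argue via the $k=0$ case applied on $M'$ with $n-k$ parts. Applying the reduction $k$ times, and then the $k=0$ case in the reduced instance, gives the required $(n-k)$-partition.
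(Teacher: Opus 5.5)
\textbf{Verdict: genuine gap.} The general-$k$ step of your plan rests on a claim that is false, and the argument you sketch for it points in the wrong direction.

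\textbf{The general-$k$ step.} Your induction on $n$ needs the strong EEFX share of the reduced instance $(M\setminus B,n-1)$ to be \emph{at least} $\theta_i$. Only then do bundles below $\theta_i$ remain removable there, and only then do parts of value at least the reduced share have value at least $\theta_i$.

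What you actually try to prove is the opposite inequality. Showing that bundles of value $\ge\theta_i$ stay feasible after the reduction, or that reduced-infeasible bundles are original-infeasible, only bounds the reduced share from above. That gives the induction nothing.

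Worse, the inequality you need is false. Take $n=3$ and goods $a,b,c,d$ of values $10,9,8,7$. Let $B$ consist of many tiny goods of total value $15$, perturbed for non-degeneracy. Let $W=\{a\}\cup D$, where $D\subseteq B$ has total value about $5.4$, so $v_i(W)\approx 15.4$. Consider any $2$-partition of $M\setminus W$:
\begin{itemize}
\item If both parts contain a tiny good, their values minus a tiny good sum to about $24+9.6=33.6>2\cdot 15.4$.
\item Otherwise some part contains only big goods. Either that part is $\{b,c,d\}$, whose value minus $d$ is $17$, or the other part holds a big good together with all $9.6$ of the remaining tiny goods, whose value minus a tiny good is at least $16.6$.
\end{itemize}
So $W$ is EEFX-infeasible, and therefore $\theta_i>15.4>v_i(B)$.

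In the reduced instance $(\{a,b,c,d\},2)$, every singleton is infeasible, since the complement minus its smallest good is worth at least $17$. Every pair is feasible, since the complementary pair minus a good is worth at most $10$. Hence the reduced strong EEFX share is $v_i(\{c,d\})=15<\theta_i$. Unlike the MXS monotonicity you are mirroring from \cite{EpistemicMonotone}, the strong EEFX share can drop when a sub-threshold bundle is removed. So no induction through the reduced instance's share can yield parts of value $\ge\theta_i$.

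\textbf{The $k=0$ case.} Your discussion never becomes an argument, but this case is harmless: it follows from $\theta_i\le\mathrm{MMS}_i$.

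\textbf{What the paper does instead.} It never leaves the original instance. It fixes the largest EEFX-infeasible bundle $T$ and treats all $k$ removed bundles at once.
\begin{enumerate}
\item It repartitions $S=\bigcup_j S_j$ into $Y_1\supseteq T\cap S$ with $v_i(Y_1)<\theta_i$, and $Y_2,\dots,Y_k$ with $v_i(Y_j\setminus\{g\})<\theta_i$. This uses an extremal choice together with the averaging bound $\sum_j v_i(S_j)<k\theta_i$.
\item It takes a partition of $M\setminus S$ with $T\setminus S\subseteq X_1$, every other part of value $\ge\theta_i$, and $v_i(X_1)$ maximal.
\item If $v_i(X_1)<\theta_i$, it shows that $(Y_2,\dots,Y_k,(Y_1\setminus T)\cup(X_1\setminus T),X_2,\dots,X_{n-k})$ is an EEFX certificate for $T$. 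By additivity, the merged part has value $v_i(Y_1)+v_i(X_1)-v_i(T)\le v_i(T)$. This contradicts the infeasibility of $T$.
\end{enumerate}
The idea missing from your plan is to use the infeasibility of this single fixed $T$ as the contradiction target, and to recycle the removed bundles into its certificate.
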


If $\theta_i=0$, then proving $\theta_i$ is residual self-feasible is trivial. Thus, for the rest of this section, fix an agent $i$, assume $\theta_i>0$, and let $T$ be the largest-valued EEFX-infeasible bundle for agent $i$. 
Note that since $\theta_i>0$, $\emptyset$ is EEFX-infeasible and thus $T$ exists.
Moreover, since $\I$ is non-degenerate, $T$ is unique. 
By the definition of $\theta_i$ and $T$, we have $v_i(T) < \theta_i$, and there is no bundle $B$ such that $v_i(T) < v_i(B) < \theta_i$.
We therefore obtain the following observation, which will be useful later.

\begin{observation}\label{obs:obv}
    For all $B \subseteq M$, $v_i(B) > v_i(T) \iff v_i(B) \geq \theta_i$. 
\end{observation}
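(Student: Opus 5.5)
The plan is to prove the two directions of \Cref{obs:obv} separately, using only \Cref{def:stronfEEFXShare} and the choice of $T$ as the unique largest-valued EEFX-infeasible bundle for agent $i$. Nothing from the RMMS machinery is needed.

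First I will record two facts about $\theta_i$. Let $\mathcal{S}$ be the set of values $v_i(S)$ such that every set $T'$ with $v_i(T') \ge v_i(S)$ is EEFX-feasible. This set is finite and nonempty, since $M$ belongs to it (as noted after the definition). Hence the minimum is attained: $\theta_i = v_i(S^*)$ for some $S^*$ with $v_i(S^*)\in\mathcal{S}$. It follows that every bundle $B$ with $v_i(B) \ge \theta_i$ is EEFX-feasible. In particular, $T$ is EEFX-infeasible, so $v_i(T) < \theta_i$.

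For the direction ``$\Leftarrow$'': if $v_i(B) \ge \theta_i$, then $v_i(B) \ge \theta_i > v_i(T)$, and we are done.

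For the direction ``$\Rightarrow$'': suppose $v_i(B) > v_i(T)$. I will show that $v_i(B) \in \mathcal{S}$. Take any set $T'$ with $v_i(T') \ge v_i(B)$. Then $v_i(T') > v_i(T)$. Since $T$ is the largest-valued EEFX-infeasible bundle, $T'$ must be EEFX-feasible. So $v_i(B)$ qualifies in the minimum defining $\theta_i$, and therefore $v_i(B) \ge \theta_i$.

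The only point needing any care is the step ``$v_i(T') > v_i(T)$ implies $T'$ is EEFX-feasible.'' This relies on $T$ having maximum value among all infeasible bundles. That maximum exists because there are finitely many bundles and $\emptyset$ is infeasible when $\theta_i > 0$ (if $\emptyset$ were feasible, then every bundle would be, giving $\theta_i = 0$). Non-degeneracy is used only to make $T$ unique; it is not needed for the equivalence itself. I do not expect any genuine obstacle here.
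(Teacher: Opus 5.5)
Your proof is correct and follows the paper's reasoning: the paper derives the observation from two facts, that $v_i(T) < \theta_i$ and that no bundle has value strictly between $v_i(T)$ and $\theta_i$, which are exactly the two directions you verify from the definition of $\theta_i$ and the maximality of $T$. You make explicit two steps the paper leaves implicit: that the minimum defining $\theta_i$ is attained, and that any bundle worth more than $T$ qualifies in that minimum.
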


\cite{Caragiannis2023} showed that MMS implies EEFX. Therefore,
$\theta_i \leq \mms_i$ for every agent $i$ (as $\theta_i$ is the smallest share implying EEFX). Since MMS is a self-feasible share, it follows that the strong EEFX share is also self-feasible. For completeness, we provide an alternative proof of the self-feasibility of the strong EEFX share in \Cref{lem:k-is-0}.
\begin{lemma}\label{lem:k-is-0}
    There exists a partition of $(X_1, \ldots, X_n)$ of $M$ such that $v_i(X_\ell) \geq \theta_i$ for all $\ell \in [n]$.
\end{lemma}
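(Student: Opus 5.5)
We need a partition of $M$ into $n$ bundles, each of value at least $\theta_i$ for agent $i$, where additive valuations and non-degeneracy are in force. By \Cref{obs:obv}, it suffices to find a partition in which every bundle has value strictly greater than $v_i(T)$, where $T$ is the unique largest-valued EEFX-infeasible bundle. The plan is to start from a bundle attaining the MXS-type structure and use its EEFX certificate.

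Let $X$ be any EEFX-feasible bundle minimizing $v_i(X)$ among EEFX-feasible bundles, and let $(P_1,\ldots,P_{n-1})$ be an EEFX certificate for $X$, so $v_i(X)\ge v_i(P_j\setminus\{g\})$ for all $j$ and $g\in P_j$. A cleaner choice that avoids minimality bookkeeping is to take the partition $(X_1,\ldots,X_n)$ of $M$ that maximizes $\min_\ell v_i(X_\ell)$ (an MMS partition), and among those maximize the value of the second smallest bundle, and so on (leximin). Let $X_1$ be its least valuable bundle.

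First we show that $X_1$ is EEFX-feasible with certificate $(X_2,\ldots,X_n)$. Suppose instead that $v_i(X_1)<v_i(X_j\setminus\{g\})$ for some $j$ and $g\in X_j$. Moving $g$ from $X_j$ to $X_1$ gives $v_i(X_1\cup\{g\})>v_i(X_1)$ by non-degeneracy. The new $j$-th bundle has value $v_i(X_j\setminus\{g\})>v_i(X_1)$, and all other bundles are unchanged and were already at least $v_i(X_1)$. So either the minimum strictly increases, or, if another bundle also had the minimum value, the leximin vector strictly improves. Non-degeneracy actually forbids ties among distinct bundles, so the minimum is unique and strictly increases. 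This contradicts the choice of the partition, so $X_1$ is EEFX-feasible.

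The key step, and the main obstacle, is upgrading this to strong EEFX-feasibility. EEFX-feasibility of $X_1$ alone does not give $v_i(X_1)\ge\theta_i$, since feasibility is non-monotone (see \Cref{example}). We need to show that every $B$ with $v_i(B)\ge v_i(X_1)$ is EEFX-feasible. Then $\theta_i\le v_i(X_1)\le v_i(X_\ell)$ for all $\ell$, which proves the lemma.

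Take such a $B$. We build a certificate for $B$ by running the same improvement argument on the remaining items. Consider partitions of $M\setminus B$ into $n-1$ bundles that leximin-maximize the value vector, and argue as above: if some bundle $Q_j$ has $v_i(Q_j\setminus\{g\})>v_i(B)$, then move $g$ to the smallest part $Q_1$. This is allowed as long as $v_i(Q_1)<v_i(Q_j\setminus\{g\})$. So at optimality, for every $j$ and $g\in Q_j$, either $v_i(Q_j\setminus\{g\})\le v_i(Q_1)$ or the move is blocked.

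What remains is to show $v_i(Q_1)\le v_i(B)$. Suppose not, i.e.\ $v_i(Q_1)>v_i(B)\ge v_i(X_1)$. Then $(B,Q_1,\ldots,Q_{n-1})$ and the MMS partition can be compared. Every part $Q_j$ exceeds $v_i(X_1)$ while $B\ge v_i(X_1)$. If $B$ were the smallest part, this would give an $n$-partition with minimum at least $v_i(X_1)$ that is leximin-better than or equal to $(X_\ell)$, and non-degeneracy forces equality, $B=X_1$. In that case $(X_2,\ldots,X_n)$ is already a certificate. I expect this comparison, ensuring the leximin choice of $(X_\ell)$ is robust enough to bound $v_i(Q_1)$, to be the delicate point. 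If it fails, I would fall back on citing that MMS implies EEFX \cite{Caragiannis2023} applied to every $B$ with $v_i(B)\ge\mms_i$, giving $\theta_i\le\mms_i$, and then take an MMS partition directly.
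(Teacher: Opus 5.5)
Your argument is correct, but it takes a different route from the paper. You prove the stronger fact $\theta_i\le\mms_i$ from scratch. A max-min (leximin) partition has its unique smallest bundle certified by the other parts. For any $B$ with $v_i(B)\ge\mms_i$, a max-min $(n-1)$-partition $(Q_1,\ldots,Q_{n-1})$ of $M\setminus B$ satisfies $v_i(Q_j\setminus\{g\})\le v_i(Q_1)$, so it certifies $B$ unless $v_i(Q_1)>v_i(B)$.

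The step you flagged as delicate closes without hedging. In the case $v_i(Q_1)>v_i(B)$, $B$ is automatically the smallest part of $(B,Q_1,\ldots,Q_{n-1})$. So this $n$-partition has minimum $v_i(B)\ge\mms_i$, which forces $v_i(B)=\mms_i$ and, by non-degeneracy, $B=X_1$. That bundle is certified by $(X_2,\ldots,X_n)$. The ``or the move is blocked'' alternative never occurs, and no fallback citation is needed.

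This is exactly the route the paper mentions just before the lemma (MMS implies EEFX, via \cite{Caragiannis2023}). The paper's own proof avoids MMS altogether. It takes an EFX partition $(Y_1,\ldots,Y_{n-1})$ of $M\setminus T$ and notes that every part must exceed $v_i(T)$, since otherwise the partition would certify $T$. It then moves the item $g^*$ maximizing $v_i(Y_j\setminus\{g\})$ onto $T$ and concludes via \Cref{obs:obv}.

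Your version is self-contained: it needs neither the existence of EFX partitions nor the bundle $T$, and it yields the stronger comparison with $\mms_i$. The paper's version is the template it reuses in \Cref{lem2} and \Cref{lem:lone-divider}: build a would-be certificate for $T$ and derive a contradiction. There, bundles have already been removed and an MMS-based argument is unavailable, since the MMS is not residual self-feasible.
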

\begin{proof}
    Let $(Y_1, \ldots, Y_{n-1})$ be an EFX partition of $M \setminus T$ into $n-1$ bundles for agent $i$. 
    For all $\ell \in [n-1]$, we have
        $$v_i (Y_\ell) \geq \max_{j \in [n-1], g \in Y_j}v_i(Y_j \setminus \{g\}) > v_i(T).$$
    The first inequality follows from $(Y_1, \ldots, Y_{n-1})$ being an EFX partition for agent $i$. The second inequality holds since otherwise $(Y_1, \ldots, Y_{n-1})$ would be an EEFX certificate for $T$, which contradicts the fact that $T$ is EEFX-infeasible. Now, let $j^* = \arg\max_{j \in [n-1], g \in Y_j}v_i(Y_j \setminus \{g\}) \in [n-1]$. Choose $g^* \in Y_{j^*}$ such that $v_i(Y_{j^*} \setminus \{g^*\}) > v_i(T)$. Let $X_1 = T \cup \{g^*\}$, $X_{j^*+1} = Y_{j^*} \setminus \{g^*\}$ and $X_{\ell+1} = Y_\ell$ for all $\ell \in [n-1] \setminus \{j^*\}$. 
    Since the instance is non-degenerate, we have $v_i(X_1) = v_i(T\cup \{g^*\}) > v_i(T)$.
    Thus, for all $\ell \in [n]$, we have $v_i(X_\ell) >  v_i(T)$. By \Cref{obs:obv}, we have for all $\ell \in [n]$, $v_i(X_\ell) \geq \theta_i$. Hence, the lemma follows.
\end{proof}

\begin{lemma}\label{lem:lone-divider}
    Fix an agent $i$ and $k \in [n]$. Let $S_1, \ldots, S_k \subseteq M$ be $k$ bundles such that
    \begin{itemize}
        \item $S_j \cap S_{j'} = \emptyset$ for all $j \neq j'$, and
        \item $v_i(S_j) < \theta_i$ for all $j\in [k]$.
    \end{itemize}
    Then, there exists a partition $(X_1, \ldots, X_{n-k})$ of $M \setminus (\bigcup_{j \in [k]} S_j)$ such that $v_i(X_j) \geq \theta_i$ for all $j \in [n-k]$.
\end{lemma}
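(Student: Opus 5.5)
We mimic the proof of \Cref{lem:k-is-0}, using $T$ (the unique largest-valued EEFX-infeasible bundle) as a pivot and building its certificate-like partition on the complement of the removed bundles. Set $R = M \setminus \bigcup_{j\in[k]} S_j$. Each $S_j$ has $v_i(S_j) < \theta_i$, so by \Cref{obs:obv} $v_i(S_j) \le v_i(T)$. If $k = n$ there is nothing to prove, so assume $k<n$.

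First, take an EFX partition $(Y_1,\ldots,Y_{n-k})$ of $R$ into $n-k$ bundles for agent $i$; it exists by the theorem of~\cite{plaut2020almost} applied to $R$. We then argue that $\max_{\ell, g\in Y_\ell} v_i(Y_\ell\setminus\{g\}) > v_i(T)$. Suppose not, so every $Y_\ell\setminus\{g\}$ has value at most $v_i(T)$. We would like to derive an EEFX certificate for some bundle of value $\ge \theta_i$ or $\ge v_i(T)$, contradicting the definitions. The natural candidate is the $n-1$ bundles $S_2,\ldots,S_k,Y_1,\ldots,Y_{n-k}$ as a partition of $M\setminus S_1$. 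Each $Y_\ell$ minus a good is worth at most $v_i(T)$, and each $S_j$ (even without removing a good) is worth at most $v_i(T)$.

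The difficulty is that the bundle playing the role of ``$B$'' must be EEFX-infeasible to give a contradiction, while $S_1$ itself may be feasible. A cleaner route is to consider $B^* = S_1 \cup \{g\}$, or better, to enlarge a bundle to value exceeding $v_i(T)$. By non-degeneracy and monotonicity, however, the direct fix is this: every bundle $B$ with $v_i(B) > v_i(T)$ is EEFX-feasible, but we want a bundle with value \emph{at most} $v_i(T)$ that the above partition certifies, forcing its infeasibility to fail. So instead we phrase the contradiction against $T$ directly, by showing that the $Y$'s and $S$'s can be rearranged into a certificate for $T$. That is not immediate, since $T$ need not be disjoint from them.

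The plan is therefore to use a different order of quantifiers. First use $S_1$: if $v_i(S_1)$ is large enough that the partition above certifies it, conclude nothing. Instead, prove the lemma by induction on $k$, adding the $S_j$ one at a time, with the EFX partition of $R$ chosen via a potential argument (e.g.\ a leximin-optimal or Plaut--Roughgarden style partition of $R\cup S_1$ merged). This is the main obstacle, and I expect the true argument to establish the inequality $\max_{\ell,g} v_i(Y_\ell\setminus\{g\})>v_i(T)$ by exhibiting a partition of $M\setminus T$ into $n-1$ bundles in which every bundle minus any good has value at most $v_i(T)$, built from the pieces $S_j\setminus T$ and $Y_\ell\setminus T$ (monotonicity only decreases values when intersecting with $M\setminus T$), with pieces of $T$ redistributed suitably.

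Once that inequality holds, we finish exactly as in \Cref{lem:k-is-0}. Pick $j^*,g^*$ attaining it and move $g^*$ so that $Y_{j^*}\setminus\{g^*\}$ keeps value $>v_i(T)$. Since the EFX property gives $v_i(Y_\ell)\ge v_i(Y_{j^*}\setminus\{g^*\})>v_i(T)$ for every $\ell$, all $n-k$ parts exceed $v_i(T)$, and hence are at least $\theta_i$ by \Cref{obs:obv}. (Here $g^*$ can simply be put back into $Y_{j^*}$, so no transfer is even needed.)
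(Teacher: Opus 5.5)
\textbf{There is a genuine gap.} Your argument rests entirely on the inequality $\max_{\ell,\,g\in Y_\ell} v_i(Y_\ell\setminus\{g\})>v_i(T)$ for an EFX partition $(Y_1,\ldots,Y_{n-k})$ of $M\setminus\bigcup_j S_j$, and you never prove it. You call it ``the main obstacle'' and only describe what you expect a proof to look like.

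Worse, for an arbitrary EFX partition the inequality is false. Take $n=3$ and additive values: two items of value $7$ and sixteen unit items. Perturb slightly to make the instance non-degenerate; every comparison below has slack at least $1$.
\begin{itemize}
\item The bundle $\{7,1,1\}$ is EEFX-infeasible. Its complement is the other $7$ plus fourteen unit items. In any $2$-partition of it, the part containing the $7$ holds at most three unit items, since otherwise dropping a unit item leaves at least $10>9$. So the other part holds at least eleven unit items and is still worth at least $10$ after dropping one. Hence $v_i(T)\ge 9$ and $\theta_i>9$.
\item Let $k=1$ and let $S_1$ be eight unit items, so $v_i(S_1)=8<\theta_i$. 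Then (eight unit items, $\{7,7\}$) is an EFX partition of the remaining items, since $8\ge 7$ and $14\ge 7$.
\item Yet its first part is worth $8<\theta_i$, and your maximum equals $7<9\le v_i(T)$. By contrast, two copies of $\{7,1,1,1,1\}$ would have worked.
\end{itemize}
The analogy with \Cref{lem:k-is-0} breaks because there the EFX partition is of $M\setminus T$, so EEFX-infeasibility of $T$ applies to it directly. Here $M\setminus\bigcup_j S_j$ overlaps $T$. Choosing the partition cleverly (say, leximin) would rescue the inequality only because it then becomes equivalent to the lemma itself. So the induction/potential suggestion does not close the gap either.

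Your envisioned repair also uses only monotonicity, whereas the statement genuinely needs additivity. In the paper's submodular example ($n=2$), $S_1=\{a,b,c\}$ and its complement are both worth $1<\theta_1$, so the lemma fails there. The paper supplies the missing ideas in two steps.
\begin{enumerate}
\item (\Cref{lem1}) Re-partition $S=\bigcup_j S_j$ into $(Y_1,\ldots,Y_k)$ with $R=T\cap S\subseteq Y_1$, $v_i(Y_1)<\theta_i$, and $v_i(Y_j\setminus\{g\})<\theta_i$ for $j\ge 2$. Among partitions with $R\subseteq Y_1$, $v_i(Y_1)<\theta_i$ and $(Y_2,\ldots,Y_k)$ EFX, take one minimizing $\max_{j\ge 2,\,g} v_i(Y_j\setminus\{g\})$. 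If this maximum were at least $\theta_i$, moving the maximizing good into $Y_1$ would either improve the objective or produce $k$ parts each worth at least $\theta_i$. The latter contradicts $\sum_j v_i(S_j)<k\theta_i$, which is where additivity enters.
\item Let $U=T\setminus S$. Take a partition $(X_1,\ldots,X_{n-k})$ of $M\setminus S$ with $U\subseteq X_1$ and $v_i(X_j)\ge\theta_i$ for $j\ge 2$, maximizing $v_i(X_1)$; one exists by \Cref{lem2}. If $v_i(X_1)<\theta_i$, maximality gives $v_i(X_j\setminus\{g\})\le v_i(T)$ for all $j\ge 2$. Then $(Y_2,\ldots,Y_k,(Y_1\setminus R)\cup(X_1\setminus U),X_2,\ldots,X_{n-k})$ is an EEFX certificate for $T$, because by additivity the merged part is worth $v_i(Y_1)+v_i(X_1)-v_i(T)\le v_i(T)$.
\end{enumerate}
This merge is the ``redistribution of pieces of $T$'' you were reaching for. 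It works only for this specially chosen partition of $M\setminus S$, not for an arbitrary EFX one.
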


\begin{figure}
    \centering
    \includegraphics[width=0.63\linewidth]{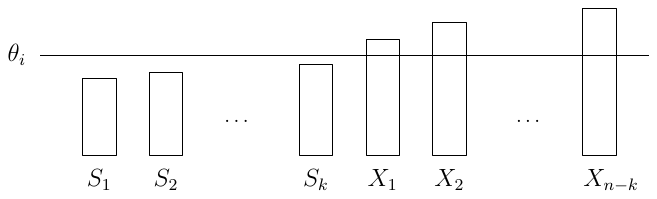}
    \caption{Visualization of the bundles in \Cref{lem:lone-divider}.}
    \label{fig:1}
\end{figure}

See \Cref{fig:1} for a visualization of the bundles in \Cref{lem:lone-divider}. 

To prove \Cref{lem:lone-divider}, we first establish two intermediate results,~\Cref{lem1} and \ref{lem2}. Recall that $T$ is the largest-valued EEFX-infeasible bundle for agent $i$.  Let $S_1, \ldots, S_k \subseteq M$ be defined as in \Cref{lem:lone-divider}, we write 
\[ S = \bigcup_{j \in [k]} S_j, \quad R = T \cap S, \quad U = T \setminus S.\] 
That is, we write $S$ as the union of $k$ bundles $S_1, \dots, S_k$, which we view as removed bundles, $R$ as the part of $T$ already removed, and $U$ as the remaining part of $T$. See \Cref{fig:2} for a visual depiction of the sets $S, R,$ and $U$.

\begin{figure}
    \centering
    \includegraphics[width=0.33\linewidth]{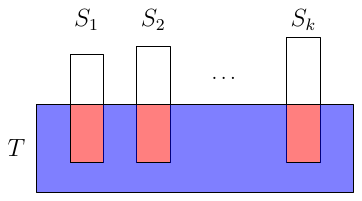}
    \caption{\textcolor{red}{$R = T \cap S$} is the union of the red parts and \textcolor{blue}{$U = T \setminus S$} is the blue part.}
    \label{fig:2}
\end{figure}

\begin{lemma}\label{lem1}
    There exists a partition $(Y_1, \ldots, Y_k)$ of $S$  such that
    \begin{itemize}
        \item $R \subseteq Y_1$, 
        \item $v_i(Y_1) < \theta_i$, and
        \item $v_i(Y_j \setminus \{g\}) < \theta_i$ for all $j \in [k] \setminus \{1\}$ and $g \in Y_j$.
    \end{itemize}
\end{lemma}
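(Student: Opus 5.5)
The plan is to build the partition by a local search over partitions of $S$ into $k$ bundles, in the spirit of the Plaut--Roughgarden EFX argument. Throughout I use additivity of $v_i$ (we are in the additive section) and \Cref{obs:obv}: a bundle has value $<\theta_i$ iff its value is $\le v_i(T)$. So the goal is a partition $(Y_1,\ldots,Y_k)$ of $S$ with $R\subseteq Y_1$, $v_i(Y_1)\le v_i(T)$, and $v_i(Y_j\setminus\{g\})\le v_i(T)$ for all $j\ge 2$ and $g\in Y_j$.

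\textbf{Step 1 (budget).} First record the budget available. Each $S_j$ has $v_i(S_j)\le v_i(T)$, so $v_i(S)\le k\,v_i(T)$. Also $v_i(R)\le v_i(T)$ since $R\subseteq T$. The point to keep in mind is that $R$ may be spread over several of the $S_j$, so the trivial partition $(S_1,\ldots,S_k)$ does not work directly.

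\textbf{Step 2 (initial partition).} Put $Y_1^{0}=R$ and $Y_j^{0}=S_j\setminus R$ for $j\ge 2$. Then hand the items of $S_1\setminus R$ to $Y_1$ one at a time while $v_i(Y_1)\le v_i(T)$ still holds. Any leftover items are placed into the bundle $Y_j$ ($j\ge2$) of currently smallest value, in the style of LMMS.

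\textbf{Step 3 (local search).} Repeat the following. If some $Y_j$ with $j\ge2$ has an item $g$ with $v_i(Y_j\setminus\{g\})>v_i(T)$, so that $Y_j\setminus\{g\}$ is worth at least $\theta_i$, move one item from $Y_j$ to a bundle of strictly smaller value. Only move into $Y_1$ if $v_i(Y_1)$ stays $\le v_i(T)$; otherwise move into the minimum-value bundle among $Y_2,\ldots,Y_k$. Items of $R$ are never moved out of $Y_1$.

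\textbf{Step 4 (termination).} Termination should follow from a potential such as the sum of squares $\sum_j v_i(Y_j)^2$, or the leximin vector. Moving an item from a bundle of value $>v_i(T)+v_i(g)$ to a bundle of smaller value strictly decreases the sum of squares.

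\textbf{Step 5 (a receiving bundle exists).} This is the main obstacle: whenever some $Y_j\setminus\{g\}$ is too valuable, there must be a legal recipient. Suppose none exists. Then $Y_j\setminus\{g\}$ has value $>v_i(T)$, and every other bundle $Y_{j'}$ with $j'\ge2$ has value at least $v_i(Y_j\setminus\{g\})$ by the minimum-recipient rule. Also, $Y_1$ cannot absorb even the smallest item. Summing gives $v_i(S) > (k-1)\,v_i(T) + v_i(Y_1)$.

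I will try to contradict this with the budget $v_i(S)\le k\,v_i(T)$. What is needed is that $Y_1$ can be kept with value close to $v_i(T)$, i.e.\ within the value of one item that could not be added. This will likely require a refined choice in which items are moved in increasing order of value, so that the "gap" left in $Y_1$ is smaller than the item displaced from $Y_j$.

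If the counting falls short by one item, the fallback is to use the maximality of $T$. If $Y_1$ has a gap at least $v_i(g)$ for some small item, then $R\cup\{g\}$-type sets are comparable to $T$. Together with \Cref{obs:obv}, this should force the needed inequality.
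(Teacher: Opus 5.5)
You have not proved the lemma: Step~5, which carries the whole argument, is left open. The repairs you suggest (ordering moves by increasing value, or appealing to the maximality of $T$) are not what is missing.

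\textbf{Step 5.} The concrete error is in your summation. You bound every $Y_{j'}$ with $j'\ge 2$, including $Y_j$ itself, only by $v_i(Y_j\setminus\{g\})>v_i(T)$. This throws away the item $g$ that sits in $Y_j$. On the $Y_1$ side you use that $Y_1$ cannot absorb ``the smallest item'' rather than the item you are actually trying to move.

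The missing idea is to always move the witness item $g$ itself. Then:
\begin{itemize}
\item If $Y_1$ cannot accept $g$, then $v_i(Y_1)>v_i(T)-v_i(g)$.
\item At the same time, $v_i(Y_j)=v_i(Y_j\setminus\{g\})+v_i(g)>v_i(T)+v_i(g)$.
\item If no other bundle $Y_{j'}$ ($j'\ge 2$, $j'\ne j$) has value below $v_i(Y_j\setminus\{g\})$, each of these $k-2$ bundles has value $>v_i(T)$.
\end{itemize}
Adding up gives $v_i(S)>k\,v_i(T)$, contradicting your budget $v_i(S)\le k\,v_i(T)$. The deficit of $Y_1$ is paid for exactly by the item $g$ in $Y_j$, so no refined ordering and no further use of $T$ is needed.

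\textbf{Steps 3--5 are inconsistent.} Moving $g$ from $Y_j$ to a bundle $B$ strictly decreases $\sum_\ell v_i(Y_\ell)^2$ only when $v_i(B)<v_i(Y_j\setminus\{g\})$, not merely when $v_i(B)<v_i(Y_j)$. For example, bundle values $10$ and $9$ with $v_i(g)=5$ raise the potential from $181$ to $221$. ``The minimum-value bundle among $Y_2,\ldots,Y_k$'' need not satisfy this condition. The recipient rule should be:
\begin{itemize}
\item $Y_1$ if $v_i(Y_1)+v_i(g)\le v_i(T)$, which automatically gives $v_i(Y_1)<v_i(Y_j\setminus\{g\})$;
\item otherwise, some $Y_{j'}$ with $v_i(Y_{j'})<v_i(Y_j\setminus\{g\})$.
\end{itemize}
Step~5 must refute precisely the failure of both options, which is what the corrected counting above does.

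With these two repairs, your argument becomes a potential-function version of the paper's proof; Step~2 can also be simplified to $Y_1=R$ plus an arbitrary partition of $S\setminus R$. The paper instead fixes an extremal partition: $R\subseteq Y_1$, $v_i(Y_1)<\theta_i$, $(Y_2,\ldots,Y_k)$ EFX, minimizing $\max_{j\ge 2,\,g\in Y_j} v_i(Y_j\setminus\{g\})$. It then obtains the same $k\theta_i$ counting contradiction by moving the maximizing item $g^*$ into $Y_1$.
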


See \Cref{fig:3} for a visualization of the bundles in \Cref{lem1}. 

\begin{figure}
    \centering
    \includegraphics[width=0.73\linewidth]{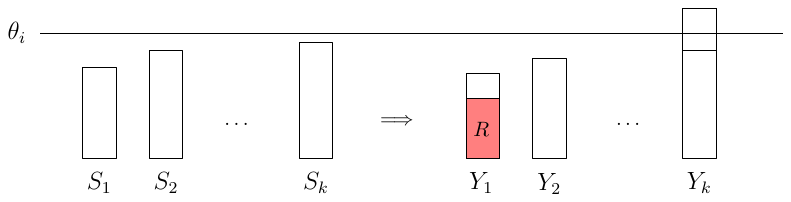}
    \caption{Depiction of $S = \bigcup_{i \in [k]} S_i = \bigcup_{i \in [k]} Y_i$, and $R = S \cap T$ in~\Cref{lem1}.}
    \label{fig:3}
\end{figure}

\begin{proof}
    Let $(Y_1, \ldots, Y_k)$ be such that 
    \begin{enumerate}
        \item $R \subseteq Y_1$, \label{item1}
        \item $v_i(Y_1) < \theta_i$, and \label{item2}
        \item $(Y_2, \ldots, Y_k)$ is an EFX partition of $S \setminus Y_1$ for agent $i$, \label{item3}
    \end{enumerate}
    and subject to \eqref{item1}, \eqref{item2}, and \eqref{item3}, $\max_{j \in [k] \setminus \{1\}, g \in Y_j} v_i(Y_j \setminus \{g\})$ is minimum. Note that such a partition exists because $R$ together with any EFX-partition of $S \setminus R$ for agent $i$ satisfies (1) to (3). This holds true since $R \subseteq T$ and $v_i(T) < \theta_i$. Now, our goal is to show that $(Y_1, \ldots, Y_k)$ satisfies all the conditions stated in \Cref{lem1}.
    Clearly, it satisfies the first two of these conditions. Let $(j^*,g^*) = \arg \max_{j \in [k] \setminus \{1\}, g \in Y_j} v_i(Y_j \setminus \{g\})$.
    It remains to prove $v_i(Y_{j^*} \setminus \{g^*\}) < \theta_i$. 
    If $v_i(Y_{j^*} \setminus \{g^*\}) = 0$, then the claim follows immediately, since  we assume $\theta_i > 0$.
    Hence, we may assume that $v_i(Y_{j^*} \setminus \{g^*\}) > 0$.
    Towards contradiction, assume otherwise:
    \begin{align}
        v_i(Y_{j^*} \setminus \{g^*\}) \geq \theta_i. \label{ineq1}
    \end{align}
    Consider the partition $Y'$ obtained after moving $g^*$ from $Y_{j^*}$ to $Y_1$. That is, $Y'=(Y_1 \cup \{g^*\}, Y_2, \ldots, Y_{j^*} \setminus \{g^*\}, \ldots, Y_k)$. If $v_i(Y_1 \cup \{g^*\}) < \theta_i$, then $Y'$ has properties \eqref{item1}, \eqref{item2}, \eqref{item3}, and we have $\max_{j \in [k] \setminus \{1\}, g \in Y'_j} v_i(Y'_j \setminus \{g\}) < \max_{j \in [k] \setminus \{1\}, g \in Y_j} v_i(Y_j \setminus \{g\})$. The strict inequality follows from non-degeneracy of the instance. This is a contradiction to the choice of $Y$. Hence, 
    \begin{align}
        v_i(Y_1 \cup \{g^*\}) \geq \theta_i. \label{ineq2}     
    \end{align}
    
    \noindent
    Combining inequality~\eqref{ineq1} with the fact that $(Y_2, \ldots, Y_k)$ is an EFX partition of $S \setminus Y_1$ for agent $i$, it follows that for all $j \in [k] \setminus \{1\}$, 
    \begin{align}
        v_i(Y_j) \geq v_i(Y_{j^*} \setminus \{g^*\}) \geq \theta_i. \label{ineq3}
    \end{align}
     We can thus write, 
    \begin{align*}
        k \cdot \theta_i &> \sum_{j \in [k]} v_i(S_j) \tag{$v_i(S_j) < \theta_i$ for all $j \in [k]$}\\
        &= \sum_{j \in [k]} v_i(Y'_j) \tag{$\bigcup_{j \in [k]} S_j = \bigcup_{j \in [k]} Y'_j$}\\
        &= v_i(Y_1 \cup \{g^*\})+ v_i(Y_{j^*} \setminus \{g^*\}) + \sum_{j \in [k] \setminus \{1,j^*\}} v_i(Y_j) \\
        &\geq k \cdot \theta_i, \tag{Inequalities \eqref{ineq1}, \eqref{ineq2}, and \eqref{ineq3}}
    \end{align*}
    which is a contradiction. Therefore, $v_i(Y_{j^*} \setminus \{g^*\}) < \theta_i$, and hence $v_i(Y_j \setminus \{g\}) < \theta_i$ for all $j \in [k] \setminus \{1\}$ and $g \in Y_j$. This completes the proof.
\end{proof}

\begin{lemma}\label{lem2}
    Let $(X_1, \ldots, X_{n-k-1})$ be an EFX partition of $(M \setminus S) \setminus U$ for agent $i$. Then $v_i(X_j) \geq \theta_i$ for all $j \in [n-k-1]$.
\end{lemma}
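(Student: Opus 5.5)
The plan is to argue by contradiction, building an EEFX certificate for $T$ out of the pieces we have. Together with \Cref{obs:obv}, the statement says that no bundle of an EFX partition of $(M\setminus S)\setminus U$ has value at most $v_i(T)$. So suppose some $X_{j_0}$ satisfies $v_i(X_{j_0}) < \theta_i$; by \Cref{obs:obv} this means $v_i(X_{j_0}) \le v_i(T)$. We will use this to produce an $(n-1)$-partition of $M\setminus T$ that certifies $T$ as EEFX-feasible, contradicting the choice of $T$ as EEFX-infeasible.

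First note how the pieces fit together. $M\setminus T$ is the disjoint union of $S\setminus R$ and $(M\setminus S)\setminus U$. By \Cref{lem1}, $S$ has a partition $(Y_1,\ldots,Y_k)$ with $R\subseteq Y_1$ and $v_i(Y_1)<\theta_i$. For $j\ge 2$ it also gives $v_i(Y_j\setminus\{g\})<\theta_i$ for every $g\in Y_j$, so $v_i(Y_j\setminus\{g\})\le v_i(T)$ by \Cref{obs:obv}. Set $Y_1' = Y_1\setminus R$. Since $Y_1'\subseteq Y_1$ and $v_i(Y_1)<\theta_i$, monotonicity gives $v_i(Y_1')\le v_i(T)$. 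Hence $(Y_1', Y_2,\ldots,Y_k)$ is a $k$-partition of $S\setminus R$ in which no bundle minus any item exceeds $v_i(T)$.

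Next, let $Z = (M\setminus S)\setminus U$ with its EFX partition $(X_1,\ldots,X_{n-k-1})$. Take the partition of $M\setminus T$ given by
\[
(Y_1', Y_2,\ldots,Y_k, X_1,\ldots,X_{n-k-1}),
\]
which has exactly $k + (n-k-1) = n-1$ bundles. It remains to check that $v_i(X_j\setminus\{g\})\le v_i(T)$ for every $j$ and every $g\in X_j$. By the EFX property of the $X$-partition, $v_i(X_j\setminus\{g\}) \le v_i(X_{j_0}) \le v_i(T)$ for all $j$ and $g\in X_j$. So every bundle of this $(n-1)$-partition, after removing any single item, has value at most $v_i(T)$, and the partition is an EEFX certificate for $T$. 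This is the desired contradiction, so every $X_j$ has $v_i(X_j)\ge\theta_i$.

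The main obstacles are bookkeeping issues.
\begin{itemize}
\item The certificate needs exactly $n-1$ bundles. Empty bundles are harmless, and when $k = n-1$ there are no $X$'s, so the statement is vacuous.
\item We must confirm $S\setminus R = S\setminus T$ and $(M\setminus S)\setminus U = (M\setminus S)\setminus T$, so that the union of the $Y$'s and $X$'s is exactly $M\setminus T$. This follows directly from $R = T\cap S$ and $U = T\setminus S$.
\item The direction of the EFX inequality must be handled carefully: we need the largest value of $X_j\setminus\{g\}$ to be bounded by the smallest bundle $X_{j_0}$, which is exactly what an EFX partition gives.
\end{itemize}
Non-degeneracy enters only through \Cref{obs:obv}.
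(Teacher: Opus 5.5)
Your proof is correct and follows essentially the same route as the paper. Both arguments assume some $X_{j}$ has value below $\theta_i$ and take the partition of $S$ from \Cref{lem1} with $R$ removed from $Y_1$. Together with the EFX partition of $(M\setminus S)\setminus U$, this gives an $(n-1)$-partition of $M\setminus T$, which \Cref{obs:obv} shows is an EEFX certificate for $T$, contradicting the EEFX-infeasibility of $T$.
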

\begin{proof}
    Towards contradiction, let us assume $v_i(X_{j^*}) < \theta_i$ for some $j^* \in [n-k-1]$. Then, by \Cref{obs:obv}, we have $v_i(X_{j^*}) \leq v_i(T)$. Let $(Y_1, \ldots, Y_k)$ be a partition of $S$, as described in \Cref{lem1}. We will show that $Z = (Y_1 \setminus R, Y_2, \ldots, Y_k, X_1, \ldots, X_{n-k-1})$ is an EEFX certificate for $T$, and hence leading to a contradiction.

    To begin with, note that $Z$ is a partition of $M \setminus T$ into $n-1$ bundles. By \Cref{lem1}, we have $v_i(Y_1 \setminus R)\le v_i(Y_1) < \theta_i$, and hence, by~\Cref{obs:obv}, we have 
    \begin{align}
         v_i(Y_1 \setminus R) \le v_i(T). \label{ineq4}
    \end{align}
    Moreover, for all $j \in [k]\setminus \{1\}$ and $g \in Y_j$, we have $v_i(Y_j \setminus \{g\}) < \theta_i$. Using~\Cref{obs:obv} again, we obtain 
    \begin{align}
        v_i(Y_j \setminus \{g\}) \leq v_i(T) \ \text{for all} \ j \in [k]\setminus \{1\} \ \text{and} \ g \in Y_j. \label{ineq5}
    \end{align}
    Since $(X_1, \ldots, X_{n-k-1})$ is an EFX partition of $(M \setminus S) \setminus U$ for agent $i$,  we have $v_i(X_j \setminus \{g\}) \leq v_i(X_{j^*})$ for all $j \in [n-k-1]$ and $g \in X_j$. Now, using our assumption that $v_i(X_{j^*})< \theta_i$ and~\Cref{obs:obv}, we obtain 
    \begin{align}
         v_i(X_j \setminus \{g\}) \leq v_i(T)  \ \text{for all} \ j \in [n-k-1] \ \text{and} \ g \in X_j. \label{ineq6}
    \end{align}
 Combining \eqref{ineq4}, \eqref{ineq5}, and \eqref{ineq6}, we have that $Z = (Y_1 \setminus R, Y_2, \ldots, Y_k, X_1, \ldots, X_{n-k-1})$ is an EEFX certificate for $T$, contradicting the EEFX infeasibility of $T$. This completes our proof.
\end{proof}

\begin{figure}
    \centering
    \includegraphics[width=0.65\linewidth]{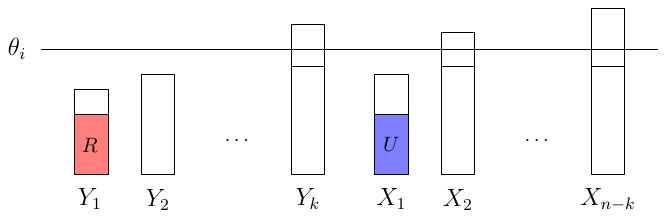}
    \caption{Visualization of the partitions $(Y_1, \ldots , Y_k)$ of $S$ and $(X_1, \ldots , X_{n-k})$ of $M\setminus S$. Here, $R\subseteq Y_1$, $U\subseteq X_1$, and $T=R\cup U$.}
    \label{fig:4}
\end{figure}

We are now ready to prove~\Cref{lem:lone-divider}.

\begin{proof}[Proof of \Cref{lem:lone-divider}.] 
    Let $X=(X_1, \ldots, X_{n-k})$ be a partition of $M \setminus S$ such that
    \begin{enumerate}
        \item $U \subseteq X_1$, 
        \label{prop1}
        \item $v_i(X_j) \geq \theta_i$ for all $j \in [n-k] \setminus \{1\}$, and
        \label{prop2}
    \end{enumerate}
    and subject to these, $v_i(X_1)$ is maximum. Note that, by \Cref{lem2}, such a partition exists. We prove $v_i(X_1) \geq \theta_i$. Towards contradiction, assume 
    $v_i(X_1) < \theta_i$.
    By \Cref{obs:obv}, we have $v_i(X_1) \leq v_i(T)$. Let $(j^*,g^*) = \arg \max_{j \in [n-k]\setminus \{1\}, g \in X_j} v_i(X_j \setminus \{g\})$. If $v_i(X_{j^*} \setminus \{g^*\}) \geq \theta_i$, then moving $g^*$ from $X_{j^*}$ to $X_1$ results in a partition that has properties \eqref{prop1} and \eqref{prop2}, while $v_i(X_1 \cup \{g^*\}) > v_i(X_1)$, which is a contradiction to the choice of $X$. Hence, $v_i(X_{j^*} \setminus \{g^*\}) < \theta_i$, and thus $v_i(X_{j^*} \setminus \{g^*\}) \le v_i(T)$.
    This implies that 
    \begin{align}\label{ineq7}
    v_i(X_j \setminus \{g\}) \le v_i(T)
    \quad \text{for all } j \in [n-k]\setminus\{1\} \text{ and } g \in X_j. 
    \end{align}
    
    Let $(Y_1,\ldots,Y_k)$ be the partition of $S$ as in \Cref{lem1}.
    By the third condition of \Cref{lem1}, we have
    \begin{align}\label{ineq8}
    v_i(Y_j \setminus \{g\}) \le v_i(T)
    \quad \text{for all } j \in [k]\setminus\{1\} \text{ and } g \in Y_j. 
    \end{align}
    Recall that $T = R \cup U$. 
    See \Cref{fig:4} for a visualization of $(Y_1, \ldots, Y_k, X_1, \ldots, X_{n-k})$. Consider the following partition $Z$ of $M \setminus T$ into $n - 1$ bundles: \[ Z = (Y_2,\ldots,Y_k,(Y_1 \setminus R) \cup (X_1 \setminus U), X_2,\ldots,X_{n-k}).\]
    Then, 
    \begin{align}\label{ineq9}
    v_i ((Y_1 \setminus R) \cup (X_1 \setminus U)) = v_i (Y_1) - v_i(R) + v(X_1) - v_i(U) = v_i(Y_1) + v_i(X_1) - v_i(T) \leq v_i(T). 
    \end{align}
    Here, the inequality follows from $v_i(Y_1) < \theta_i$ (thus $v_i(Y_1)\le v_i(T)$) by \Cref{lem1}, and from $v_i(X_1) < \theta_i$ (thus $v_i(X_1) \le v_i(T)$) by our assumption.

    Combining \eqref{ineq7}, \eqref{ineq8}, and \eqref{ineq9}, we conclude that $Z$ is an EEFX certificate for $T$, which contradicts the assumption that $T$ is EEFX-infeasible. Therefore, $v_i(X_1)\ge \theta_i$. This shows that there exists a partition $(X_1, \ldots, X_{n-k})$ of $M \setminus S$ such that $v_i(X_j) \geq \theta_i$ for all $j \in [n-k]$; thereby completing our proof.
    \end{proof}
  
Using~\Cref{lem:k-is-0} and \ref{lem:lone-divider}, we can now  prove the crucial property that connects strong EEFX share and RMMS share.

\thmSelfFeasible*
\begin{proof}
    We need to prove for every $0 \leq k < n$, removing $k$ bundles each of value (under $v_i$) strictly less than $\theta_i$, there is an $(n-k)$ partition of the set of remaining items, where each part has value (under $v$) at least $t$. If $\theta_i = 0$, the claim trivially holds since any partition would do. Otherwise, \Cref{lem:k-is-0} proves the case $k=0$ and \Cref{lem:lone-divider} the case $0<k<n$.
\end{proof}

Finally, using~\Cref{thm:self-feasible}, we establish the main result of this work. 
\thmMain*

\begin{proof}
    By \Cref{lem:non-degenrate}, without loss of generality, we assume the instance is non-degenerate. By \Cref{thm:self-feasible}, for all $i \in N$, $\theta_i \leq RMMS(M, v_i, n)$. Hence, by \Cref{prop:Feige}, there exists a complete allocation that is EFL (and hence EF1) and provides every agent with their strong EEFX share. By \Cref{obs:eefxShareGivesEEFX}, this allocation is EEFX as well. This completes our proof.
\end{proof}

To illustrate the limitations of this approach, we provide an example showing that, for submodular valuations, the strong EEFX share can be strictly larger than the RMMS. Consequently, proving the existence of EEFX and EF1 allocations in this setting requires new techniques.

\begin{example}
    Let $n=2$ and $M=\{a,b,c,d,e,f\}$. Suppose all agents have the same valuation function $v$. For every item $g\in M$, let $v(g)=1$. In addition, let 
    $$
    v(\{a,b,c\}) = v(\{d,e,f\}) = 1.
    $$
    The RMMS value is $1$.
    
    Now consider pairs consisting of one item from $\{a,b,c\}$ and one item from $\{d,e,f\}$. Assign value $2$ to every such pair, except for the pair $\{a,d\}$, which has value $3/2$. For every set $S\subseteq M$ with $|S|\ge 3$, define
    $$
    v(S)=\max_{\substack{T\subseteq S\\|T|=2}} v(T).
    $$
    It is straightforward to verify that $v$ is submodular.
    
    Furthermore, the bundle $\{a,d\}$ is not strong EEFX-feasible. Indeed, if one agent receives $\{a,d\}$, then the other agent receives $M\setminus\{a,d\}$, whose value is $2$. After removing any item from the latter bundle, its value remains $2$, whereas $v(\{a,d\})=3/2$. Therefore, every strong EEFX-feasible bundle must have value strictly greater than $3/2$, implying that
    $$
    \theta_1 > \frac{3}{2} > \rmms_1.
    $$
\end{example}

\section{Polynomial Time Algorithm for EF1 and EEFX Allocations for Restricted Additive Valuations} \label{sec:res-add}
In the previous section, we obtained an EEFX+EFL allocation using the algorithm described in \Cref{obs:EFXRSF}, which computes an allocation satisfying RMMS and EFL. As discussed earlier, computing RMMS is weakly NP-hard. We leave the complexity of computing the strong EEFX share as an open problem. Moreover, even if the strong EEFX share could be computed in polynomial time, it is not clear whether the resulting algorithm would run in polynomial time. The main obstacle is the step in which wealthy agents exchange bundles: while each such exchange strictly increases the utility of some agent, the resulting progress guarantee suffices to prove termination, but not polynomial-time termination.

Motivated by this challenge, we leave the question of whether EEFX+EFL allocations can be computed in polynomial time as an open problem. Instead, in this section, we develop a polynomial-time algorithm (\Cref{alg:eefx+ef1:res-add}) that computes an allocation satisfying both $\ef1$ and $\eefx$ for restricted additive valuations (Theorem~\ref{thm:res}). Our approach completely bypasses the lone-divider framework and instead exploits structural properties specific to restricted additive instances.

\begin{definition}
    A fair division instance $\I = (N,M,V)$ has restricted additive valuations, if for all $i \in N$, $v_i$ is additive and there exists a function $u: M \rightarrow \mathbb{R}_{\geq 0}$ such that $v_i(g) \in \{0,u(g)\}$ for all $g \in M$. 
\end{definition}

Given an instance with restricted additive valuations, for all $i \in N$, let $Z_i = \{g \in M \mid v_i(g)=0\}$ be the set of goods that agent $i$ values at $0$. Let us also sort the goods $M = \{g_1, g_2, \dots, g_m\}$ according to the $u$-value, and assume without loss of generality, $u(g_1) \geq u(g_2) \geq \dots u(g_m)$. 

If for some good $g \in M$ and all agents $i \in N$, $v_i(g)=0$, without loss of generality, we assume $u(g)=0$. 

\begin{definition} \label{def:efxp}
    We say that an allocation $X$ is $\efx^+$ iff for all agents $i,j \in N$ and all $g \in X_j \setminus Z_i$, we have $v_i(X_i) \geq v_i(X_j \setminus g)$. Agent $i$ $\efx^+$-envies bundle $X_j$, iff there exists $g \in X_j \setminus Z_i$ such that $v_i(X_i)< v_i(X_j \setminus g)$.
\end{definition}

By definition, $\efx^+$ implies EFL and EF1.

The following lemma states a sufficient condition for $\eefx$-feasibility that will be useful to prove our main result. Note that this lemma holds even for (unrestricted) additive valuations.

\begin{lemma}\label{lem:eefx-feas}
    Given an allocation $A$, and agent $i$ with restricted additive valuations, if $i$ does not $\efx^+$-envy any other agent and there exists $i' \in N \setminus i$ such that $v_i(A_i) \geq v_i(A_{i'})$, then $A_i$ is $\eefx$-feasible for $i$. 
\end{lemma}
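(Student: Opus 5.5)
The plan is to build an explicit EEFX certificate for $A_i$ out of the other bundles of $A$. The $n-1$ bundles $\{A_j\}_{j \neq i}$ already partition $M \setminus A_i$. The hypothesis that $i$ does not $\efx^+$-envy anyone controls removals of goods outside $Z_i$, but says nothing about removing a good of $Z_i$. Removing such a good leaves the value unchanged, so a bundle $A_j$ that $i$ strictly envies and that contains a good of $Z_i$ would violate the EEFX condition. The fix is to gather all of these zero-valued goods into the bundle $A_{i'}$, which $i$ does not envy.

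Concretely, let $W = (M \setminus A_i) \cap Z_i$ be the goods outside $A_i$ that $i$ values at $0$. Define $P_{i'} = A_{i'} \cup W$ and $P_j = A_j \setminus Z_i$ for every $j \in N \setminus \{i,i'\}$. Then $(P_j)_{j \neq i}$ is a partition of $M \setminus A_i$ into $n-1$ (possibly empty) bundles, and it remains to check the EEFX condition for each part.

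For $j \neq i,i'$, every good of $P_j$ lies outside $Z_i$. Fix any $g \in P_j$; then $g \in A_j \setminus Z_i$. By additivity, $v_i(P_j \setminus \{g\}) = v_i(A_j \setminus \{g\}) - v_i(A_j \cap Z_i) = v_i(A_j \setminus \{g\})$, since the removed goods have value $0$ for $i$. The absence of $\efx^+$-envy then gives $v_i(A_j \setminus \{g\}) \le v_i(A_i)$. If $P_j$ is empty, the condition holds vacuously. For $P_{i'}$ and any $g \in P_{i'}$, monotonicity and additivity give $v_i(P_{i'} \setminus \{g\}) \le v_i(P_{i'}) = v_i(A_{i'}) + v_i(W) = v_i(A_{i'}) \le v_i(A_i)$, using the hypothesis $v_i(A_i) \ge v_i(A_{i'})$. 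Hence $(P_j)_{j\neq i}$ is an EEFX certificate, and $A_i$ is EEFX-feasible for $i$.

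The argument uses only additivity and the definition of $Z_i$ as the goods $i$ values at $0$, which is why the lemma holds for general additive valuations. There is no real obstacle. The only points needing care are that bundles stripped of their zero goods may become empty, which is harmless, and that the comparison with $A_{i'}$ must cover removal of every good, including the moved goods of $W$; both are handled above.
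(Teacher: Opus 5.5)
Your proof is correct and is essentially the paper's argument: remove the goods of $Z_i$ from every bundle $A_j$ with $j \neq i,i'$ and add them to $A_{i'}$. The absence of $\efx^+$-envy then handles the stripped bundles, and $v_i(A_i) \ge v_i(A_{i'})$ handles the enlarged one. Your choice $W = (M \setminus A_i) \cap Z_i$ is slightly more careful than the paper's $A_{i'} \cup Z_i$, which does not exclude zero-valued goods lying in $A_i$ itself.
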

\begin{proof}
    Consider the partition of $M \setminus A_i$ into the following $n-1$ bundles. For all $j \in N \setminus \{i,i'\}$, let $A'_j = A_j \setminus Z_i$ and $A'_{i'} = A_{i'} \cup Z_i$. We prove $(A'_1, \ldots, A'_{i-1}, A'_{i+1}, \ldots, A'_n)$ is an $\eefx$-certificate for $A_i$. 
    
    Agent $i$ does not $\efx^+$-envy any other agent under allocation $A$. Thus, for all $j \in N \setminus \{i,i'\}$ and $g \in A'_j = A_j \setminus Z_i$, $v_i(A_i) \geq v_i(A_j \setminus g) = v_i(A'_j \setminus g)$. Also 
    $$v_i(A'_{i'}) = v_i(A_{i'}) \leq v_i(A_i).$$
    Thus, agent $i$ does not $\efx$-envy bundle $A'_j$ for any $j \in N \setminus i$. Therefore, $(A'_1, \ldots, A'_{i-1}, A'_{i+1}, \ldots, A'_n)$ is an $\eefx$-certificate for $A_i$.
\end{proof}

We begin by describing \Cref{alg:efxp:res-add} that finds an $\efx^+$ allocation $X$ for restricted additive valuations in polynomial time. This allocation $X$ then forms the starting point of our main algorithm (\Cref{alg:eefx+ef1:res-add}). \Cref{alg:efxp:res-add} sorts the goods according to their decreasing $u$-values and then allocates goods one by one to a poorest agent who values this good at its $u$-value.

\begin{algorithm}
    \caption{$\mathrm{ALG}: \efx^+$ for Restricted Additive Valuations } \label{alg:efxp:res-add}
    \DontPrintSemicolon
    \textbf{Input:} A fair division instance $\I = (N, M, V)$ with restricted additive valuations\;
    \textbf{Output:} An $\efx^+$ (and $\ef1$) allocation\;
    \BlankLine
    Sort the goods in $M$ such that $u(g_1) \geq u(g_2) \geq \dots u(g_m)$\;
    \For{$g \leftarrow g_1$ to $g_m$}{
    Let $j^* \leftarrow  \arg \min_{i \in N: v_i(g) = u_i(g)} v_i(X_i) $\;
    $X_{j^*} \leftarrow X_{j^*} \cup g$   \Comment{Allocate $g$ to a poorest agent $j^*$ with $v_{j^*}(g) = u(g)$}
    }
    Order the allocation $X$ such that $u(X_1) \leq \dots \leq u(X_n)$\;
    \Return $X$\;
\end{algorithm}

\begin{lemma}\label{lem:efxp}
    \Cref{alg:efxp:res-add} runs in $O(m\log m + nm)$ time and outputs an allocation that is $\efx^+$ (and EF1) for restricted additive valuations. 
\end{lemma}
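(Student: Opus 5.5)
Running time is immediate. Sorting the goods by $u$-value costs $O(m\log m)$. Each of the $m$ iterations scans the $n$ agents to find a poorest agent who values the current good, which costs $O(n)$ per good if we maintain the current bundle values $v_i(X_i)$ incrementally. The final reordering by $u(X_j)$ costs $O(n\log n)$, which is dominated by $O(nm)$ once $m\ge n$; we may assume this, or note that it is absorbed anyway. So the total is $O(m\log m + nm)$. Since $\efx^+$ implies EF1, the substantive claim is that the output is $\efx^+$. We prove this by induction over the iterations, with the invariant that the current partial allocation is $\efx^+$.

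Fix agents $i,j$ and look at the moment a good $g$ is added to $X_j$. Envy towards bundles other than $X_j$ cannot arise at this step, because $i$'s own bundle only grows. If $v_i(g)=0$, then $g\in Z_i$ and $v_i(X_j)$ is unchanged. Any $\efx^+$ check on $X_j\cup g$ removes some $h\in X_j\setminus Z_i$, and
\[ v_i((X_j\cup g)\setminus h)=v_i(X_j\setminus h)\le v_i(X_i) \]
by the invariant. So assume $v_i(g)=u(g)>0$. Then $i$ was an eligible recipient of $g$, so the greedy rule gives $v_j(X_j)\le v_i(X_i)$ just before the assignment.

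The key step, and the main obstacle, is to pass from $j$'s own value to $i$'s value of $X_j$. Here the restricted structure is essential: each good has value either $0$ or $u(g)$ for every agent, so
\[ v_i(X_j)=u(X_j\setminus Z_i)\le u(X_j)=v_j(X_j)\le v_i(X_i). \]
The equality $u(X_j)=v_j(X_j)$ holds because $j$ only ever receives goods it values at their $u$-value. Now take any $h\in(X_j\cup g)\setminus Z_i$. Goods are processed in non-increasing order of $u$, so $u(h)\ge u(g)$, and hence
\[ v_i((X_j\cup g)\setminus h)=v_i(X_j)+u(g)-u(h)\le v_i(X_j)\le v_i(X_i). \]
This keeps the $\efx^+$ condition for every pair and completes the induction.

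One care point: goods with $u(g)=0$, or goods valued by no agent, need a recipient defined. By the stated convention, every good valued by no agent has $u(g)=0$. Such goods can be given to any agent, and they add zero value for everyone, so they do not affect any inequality above.
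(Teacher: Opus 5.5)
Your proof is correct and follows essentially the same route as the paper: induction over the greedy assignments, using that the recipient $j$ was a poorest eligible agent together with $v_j(X_j)=u(X_j)\ge v_i(X_j)$, and then the non-increasing $u$-order to handle the removal of any good $h$. You also spell out two points the paper leaves implicit: the case $v_i(g)=0$, and the computation $v_i((X_j\cup g)\setminus h)=v_i(X_j)+u(g)-u(h)$.
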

\begin{proof}
    First we need $O(m \log m)$ to sort the goods. Then we allocate goods one by one ($O(m)$ many iterations) to the poorest agent who values it at its $u$-value which can be determined in $O(n)$ time.
    
    We prove the $\efxp$ claim via induction. The base case is trivial. Let us assume that the partial allocation $X^r$ at some round $r$ is $\efx^+$. Consider the next round that allocates good $g_{r+1}$ to a poorest agent $i \in N$ with $v_i(g) = u(g)$. Let this agent be $j^*$. Consider any agent $i \neq j^*$. Note that the only possible $\efx^+$-envy in $X^{r+1}$ could be from such agents $i$ to $j^*$ with $v_i(g) =u(g)$. But, by the choice of $j^*$, we have
    \begin{align*}
        v_i(X^{r+1}_i) &= v_i(X^{r}_i)\\
        &\geq v_{j^*}(X^{r}_{j^*}) \tag{by the choice of $j^*$}\\
        & \geq v_i(X^{r}_{j^*}) \tag{$X^{r}_{j^*} \cap Z_{j^*} = \emptyset$}\\
        &= v_i(X^{r+1}_{j^*} \setminus g_{r+1}).
    \end{align*}

    Since, $u(g_{r+1}) \leq u(g_k)$ for all $k \leq r$, $i$ does not $\efx^+$-envy $j^*$. Hence, $X^{r+1}$ is $\efx^+$ and by induction, the final output allocation $X$ is $\efx^+$ as well. Since $\efx^+$ implies EF1, the allocation is also EF1.
\end{proof}

From now on, we will denote the output allocation of \Cref{alg:efxp:res-add} as $X$; wherein we order the bundles such that $u(X_1) \leq \dots \leq u(X_n)$. We now show several useful properties of the output allocation $X$ of \Cref{alg:efxp:res-add}. We begin with the following observation.

\begin{observation}\label{obs:u=v}
    For all agents $i \in N$, $v_i(X_i)=u(X_i)$ and $v_i(X_i) \geq v_i(X_1)$.
\end{observation}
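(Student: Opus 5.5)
The observation has two parts, and I would prove them in order, working directly from how \Cref{alg:efxp:res-add} builds $X$ together with the reindexing $u(X_1) \le \dots \le u(X_n)$.

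\textbf{First part: $v_i(X_i)=u(X_i)$.} The algorithm only gives a good $g$ to an agent $j^*$ with $v_{j^*}(g)=u(g)$. Hence no agent ever receives a good from her own $Z_i$, unless $u(g)=0$, in which case $v_i(g)=0=u(g)$ anyway. Summing over the goods of $X_i$ by additivity gives $v_i(X_i)=\sum_{g\in X_i} u(g)=u(X_i)$. The reindexing of the bundles only relabels which agent is called $i$ and keeps each agent attached to her own bundle, so the identity survives it.

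\textbf{Second part: $v_i(X_i)\ge v_i(X_1)$.} For every bundle $X_j$ and every agent $i$ we have $v_i(X_j)=u(X_j\setminus Z_i)\le u(X_j)$, because $v_i(g)\in\{0,u(g)\}$. For $j=1$ this gives
\[
v_i(X_1)\le u(X_1)\le u(X_i)=v_i(X_i).
\]
The middle inequality is the chosen ordering of the bundles, and the last equality is the first part.

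There is no real obstacle here. The only point needing care is to confirm that the sorting step permutes agent labels together with their bundles rather than reassigning bundles to different agents. That is how the pseudocode reads, since the ordering is applied to the allocation $X$ itself.
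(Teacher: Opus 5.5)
Your proof is correct and follows the paper's argument essentially verbatim. The paper likewise notes that a good is added to $X_i$ only when $v_i(g)=u(g)$, giving $v_i(X_i)=u(X_i)$, and then chains $v_i(X_i)=u(X_i)\ge u(X_1)\ge v_i(X_1)$ using the sorted order; your remarks on the $u(g)=0$ case and the relabeling step are harmless clarifications that the paper leaves implicit.
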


\begin{proof}
    \Cref{alg:efxp:res-add} adds an item $g$ to $X_i$ only when $v_i(g)=u(g)$. Hence, for all $i \in N$, we have $v_i(X_i)=u(X_i)$. Since $u(X_1) \leq \dots \leq u(X_n)$, we can now write $v_i(X_i) = u(X_i) \geq u(X_1) \geq v_i(X_1)$.  
\end{proof}

\begin{lemma} \label{lem:prop}
     For all agents $i>1$, we either have \\
     (a) $v_1(X_i) \leq v_1(X_1)$, or \\
     (b) for all $g \in X_i \setminus Z_1, z \in X_i \cap Z_1$, we have $u(z) \leq u(g)$.
\end{lemma}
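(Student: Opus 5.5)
We argue by contradiction via the greedy order of \Cref{alg:efxp:res-add}. Fix $i>1$ and suppose (b) fails: there are $g \in X_i \setminus Z_1$ and $z \in X_i \cap Z_1$ with $u(z) > u(g)$. We will show that then (a) holds, i.e., $v_1(X_i) \le v_1(X_1)$. Since items are processed in non-increasing $u$-order, $u(z)>u(g)$ forces $z$ to be processed before $g$, so $g$ was added to $X_i$ at some round $r$ after $z$ already belonged to $X_i$.

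At the round $r$ when $g$ was allocated, agent $1$ was eligible for $g$, because $g \notin Z_1$ means $v_1(g)=u(g)$. Agent $i$ was selected as a poorest eligible agent, so $v_i(X_i^{r}) \le v_1(X_1^{r})$, where $X^r$ is the partial allocation just before $g$ is assigned. By \Cref{obs:u=v} (which holds at every intermediate stage, since every item in an agent's bundle is valued at its $u$-value), this gives $u(X_i^r) \le u(X_1^r)$.

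Next we compare $v_1(X_i)$ with $u(X_i^r)$. Since $z \in X_i^r \cap Z_1$ has $u(z)>u(g)\ge 0$, we get $v_1(X_i^r) \le u(X_i^r) - u(z)$. The items of $X_i$ added at or after round $r$ are $g$ and later items, each of $u$-value at most $u(g) < u(z)$. The obstacle is that there may be several such later items. We therefore intend to use minimality more carefully: choose $g$ to be the \emph{last} item of $X_i \setminus Z_1$ with $u(g) < u(z)$ for some such $z$. An alternative is to argue directly that every item added to $X_i$ after $z$, whether valued by $1$ or not, was added while $i$ was no richer than agent $1$ at the time it was eligible.

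The cleaner route is as follows. Let $g$ be the last item of $X_i\setminus Z_1$ added to $X_i$. All items in $X_i\setminus Z_1$ are then in $X_i^r\cup\{g\}$, so $v_1(X_i) = v_1(X_i^r)+u(g) \le u(X_i^r) - u(z) + u(g) < u(X_i^r) \le u(X_1^r) \le u(X_1) = v_1(X_1)$. This works provided the violating pair can be taken with this last $g$. If (b) fails for some pair $(g',z)$, then it also fails for $(g,z)$ whenever $u(g)\le u(g')$. The last-added element of $X_i\setminus Z_1$ has the smallest $u$ among that set, so the pair $(g,z)$ also violates (b), and $z$ precedes $g$ in the order. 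Hence (a) holds strictly, completing the plan. The main care point is checking that \Cref{obs:u=v}'s identity $v_j(X_j)=u(X_j)$ holds at intermediate rounds, which follows from the same reasoning as in its proof.
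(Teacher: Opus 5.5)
Your final argument is correct and is essentially the paper's proof. Both take $g$ to be the last-added item of $X_i\setminus Z_1$, observe that the violating $z$ is already in $X_i$ when $g$ is assigned, and then combine the poorest-eligible-agent choice at that step with $v_1(z)=0$ and $u(z)>u(g)$ to get $v_1(X_i)<v_1(X_1)$. The only difference is presentational: you prove the contrapositive (not (b) implies (a)) directly, whereas the paper assumes (a) fails and derives a contradiction; the exploratory middle paragraph about the ``obstacle'' can simply be dropped.
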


\begin{proof}
    Fix an agent $i>1$. If statement (a) holds true, the lemma follows. Therefore, let us suppose (a) does not hold true, i.e., $v_1(X_i) > v_1(X_1)$; we will prove that then (b) holds true. Let $g$ be the last item in $X_i \setminus Z_1$ that was added to $X_i$ in \Cref{alg:efxp:res-add}, i.e, $g \in \arg \min_{h \in X_i \setminus Z_1} u(h)$ and all items added after $g$ to $X_i$ are in $Z_1$. Let $Z'_1 \subset Z_1$ be the set of all items added to $X_i$ after $g$. Towards a contradiction, let us assume there exists $z \in X_i \cap Z_1$ such that $u(z) > u(g)$. Hence $z \notin Z'_1$. Let $X'_1$ and $X'_i$ be the bundles of agents $1$ and $i$ respectively, just before \Cref{alg:efxp:res-add} added $g$ to $X'_i$; 
    \begin{align}
        X'_i = X_i \setminus (Z'_1 \cup \{g\}). \label{eq:xi}    
    \end{align}
    Since \Cref{alg:efxp:res-add} allocates goods in decreasing order of $u$-values and $u(z) > u(g)$, $z$ is already allocated to $i$; $z \in X'_i$. \Cref{alg:efxp:res-add} allocates $g$ to an agent with minimum utility who values $g$ at $u(g)$. Since $v_1(g) = u(g)$, we must have 
    \begin{align}
        v_1(X'_1) \geq v_i(X'_i). \label{eq:xprime}    
    \end{align}
    We have,
    \begin{align*}
        v_1(X_1) &\geq v_1(X'_1) \tag{$X'_1 \subseteq X_1$} \\
        &\geq v_i(X'_i) \tag{Equation \eqref{eq:xprime}}\\
        &= v_i(X_i \setminus (Z'_1 \cup g)) \tag{Equation \eqref{eq:xi}} \\
        &= u(X_i \setminus (Z'_1 \cup g)) \tag{\Cref{obs:u=v}} \\
        & \geq u(X_i \setminus (Z'_1 \cup z)) \tag{$u(z) \geq u(g)$ and additivity of $u$}\\
        &\geq v_1(X_i) \tag{$v_1(Z'_1 \cup z)=0$}\\
        &> v_1(X_1), \tag{by our assumption}    
    \end{align*}
    which is a contradiction. Thus, if (a) does not hold true, (b) does.
\end{proof}

\begin{observation}\label{obs:eefx-feas}
    For all agents $i>1$, $X_i$ is $\eefx$-feasible for $i$.
\end{observation}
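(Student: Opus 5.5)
The plan is to apply \Cref{lem:eefx-feas} directly to the allocation $X$ returned by \Cref{alg:efxp:res-add}, with agent $i>1$ playing the role of $i$ and agent $1$ playing the role of $i'$. That lemma needs two hypotheses.

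\begin{enumerate}
\item Agent $i$ does not $\efx^+$-envy any other agent under $X$.
\item Some other agent's bundle is weakly less valuable to $i$ than $X_i$.
\end{enumerate}

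The first hypothesis comes for free. By \Cref{lem:efxp}, the output $X$ is $\efx^+$, so for every $j\neq i$ and every $g\in X_j\setminus Z_i$ we have $v_i(X_i)\ge v_i(X_j\setminus g)$.

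The second hypothesis comes from \Cref{obs:u=v}, which gives $v_i(X_i)\ge v_i(X_1)$. Since $i>1$, agent $1$ is a legitimate choice of $i'\in N\setminus\{i\}$. Invoking \Cref{lem:eefx-feas} with $i'=1$ then shows that $X_i$ is $\eefx$-feasible for $i$.

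For concreteness, the certificate is as follows. Agent $1$'s bundle absorbs all of $Z_i$ outside $X_i$, and every other bundle $X_j$ is stripped of its items in $Z_i$. The bundle $X_1\cup Z_i$ has $v_i$-value $v_i(X_1)\le v_i(X_i)$, so it causes no envy at all. The stripped bundles $X_j\setminus Z_i$ cause no EFX envy by the $\efx^+$ property.

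The only point to check carefully is a degenerate case: the lemma's partition should cover exactly $M\setminus X_i$. So $Z_i$ must be taken restricted to items outside $X_i$. Since \Cref{alg:efxp:res-add} only gives $i$ items it values at $u$-value, we have $X_i\cap Z_i=\emptyset$, apart from items with $u(g)=0$, which are harmless either way. With that, the partition is well-defined.

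I expect no real obstacle here. The observation is essentially immediate, and the substantive difficulty (agent $1$, handled via \Cref{lem:prop}) is deferred to later.
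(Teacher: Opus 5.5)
Your proposal is correct and is the paper's proof: you apply \Cref{lem:eefx-feas} with $i'=1$, using \Cref{lem:efxp} for the absence of $\efx^+$-envy and \Cref{obs:u=v} for $v_i(X_i)\ge v_i(X_1)$. Your remark about restricting $Z_i$ to $Z_i\setminus X_i$ is a harmless tidying-up of a detail the paper's lemma glosses over, since $X_i\cap Z_i$ can only contain zero-$u$ items.
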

\begin{proof}
    Fix an agent $i>1$. Using~\Cref{obs:u=v}, we have $v_i(X_i)\geq v_i(X_1)$. Now,~\Cref{lem:efxp} implies that $i$ does not $\efx^+$-envy any other agent. Hence, by~\Cref{lem:eefx-feas}, $X_i$ is $\eefx$-feasible for $i$.
\end{proof}

We now prove the following lemma that describes necessary conditions for $X_1 \cup$ some goods from $Z_1$ to not be $\eefx$-feasible for agent $1$.
\begin{lemma}\label{lem:X1}
    Let $Z'_1 \subseteq Z_1$, $Y_1 = X_1 \cup Z'_1$ and $Y_i = X_i \setminus Z'_1$ for all $i > 1$. If $Y_1$ is not $\eefx$-feasible for agent $1$, then for all $i>1$ the following holds:
    \begin{enumerate}
        \item $v_1(Y_i)>v_1(Y_1)$.
        \item $Y_i \setminus Z_1$ is $\eefx$-feasible for agent $1$.
        \item For all $g \in Y_i \setminus Z_1$ and $z \in Y_i \cap Z_1$, $u(z) \leq u(g)$.
        \item $Y_1 \cup Z_1$ is $\eefx$-feasible for agent $1$. \label{point-4}
    \end{enumerate}
\end{lemma}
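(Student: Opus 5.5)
The plan is to exploit two facts throughout. First, removing or adding items of $Z_1$ does not change agent~$1$'s value of any bundle. Second, since $X$ is $\efx^+$ by \Cref{lem:efxp}, agent~$1$ does not $\efx^+$-envy any $X_j$. In particular $v_1(Y_1)=v_1(X_1)$ and $v_1(Y_j)=v_1(X_j)$ for all $j>1$. Moreover, for every $j>1$ and $g\in Y_j\setminus Z_1$ we have $v_1(Y_j\setminus g)=v_1(X_j\setminus g)\le v_1(X_1)=v_1(Y_1)$, so agent~$1$ does not $\efx^+$-envy anyone in the allocation $Y$ either. I would prove the four items in the order 1, 3, 2, 4.

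\emph{Item 1.} Suppose for contradiction that $v_1(Y_{i'})\le v_1(Y_1)$ for some $i'>1$. Agent~$1$ has no $\efx^+$-envy in $Y$, so \Cref{lem:eefx-feas} applied to the allocation $Y$ with this $i'$ shows that $Y_1$ is $\eefx$-feasible for agent~$1$. This contradicts the hypothesis.

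\emph{Item 3.} By item~1, $v_1(X_i)=v_1(Y_i)>v_1(Y_1)=v_1(X_1)$, so alternative~(a) of \Cref{lem:prop} fails for $X_i$ and alternative~(b) holds. Since $Y_i\subseteq X_i$, the ordering property in~(b) is inherited by $Y_i$.

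\emph{Item 2.} Build an explicit $\eefx$ certificate for $B=Y_i\setminus Z_1$, whose value is $v_1(B)=v_1(Y_i)$. Use the $n-1$ bundles $Y_j\setminus Z_1$ for $j\notin\{1,i\}$, together with the bundle $Y_1\cup Z_1$ (more precisely $Y_1\cup (Z_1\setminus Y_1)$, i.e.\ all of $Z_1$ goes here). These bundles partition $M\setminus B$. The bundle $Y_1\cup Z_1$ has value $v_1(Y_1)<v_1(Y_i)$ by item~1, so it is not envied even before removing an item. Every item of $Y_j\setminus Z_1$ has positive value for agent~$1$, and for such $g$ we have $v_1((Y_j\setminus Z_1)\setminus g)\le v_1(Y_1)<v_1(Y_i)$. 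The one point requiring care is precisely this relocation of all $Z_1$-items into the low bundle. Without it, removing a zero-valued item from some $Y_j$ would leave a bundle of value $v_1(Y_j)$, which need not be below $v_1(Y_i)$.

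\emph{Item 4.} The certificate for $Y_1\cup Z_1$ is the partition $(Y_j\setminus Z_1)_{j>1}$ of the remaining items. The bundle under test has value $v_1(X_1)$. For any $g\in Y_j\setminus Z_1$, the no-$\efx^+$-envy property of $X$ gives $v_1((Y_j\setminus Z_1)\setminus g)=v_1(X_j\setminus g)\le v_1(X_1)$, which is exactly the required condition. This item in fact does not even use the infeasibility hypothesis.

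The only genuine obstacle is the handling of zero-valued items in item~2, resolved as described by dumping them into the bundle that is already worth less than the bundle being certified.
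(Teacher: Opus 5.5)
Your proof is correct and follows the paper's argument item by item: item 1 applies \Cref{lem:eefx-feas} to $Y$, item 3 uses \Cref{lem:prop}, and item 4 uses the certificate $(Y_j\setminus Z_1)_{j>1}$. For item 2 you write out directly the certificate that the paper obtains by applying \Cref{lem:eefx-feas} to the allocation with $Y_i\setminus Z_1$ and $Y_1\cup Z_1$ exchanged, so it is the same construction. Your version also strips $Z_1$ from the remaining bundles, so the sets genuinely partition $M\setminus(Y_i\setminus Z_1)$, whereas the paper's $Y'_j=Y_j$ as literally written overlaps with $Y_1\cup Z_1$.
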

\begin{proof}
Note that, by~\Cref{lem:efxp}, agent $1$ does not $\efxp$ envy any other agent in $X$. Hence $1$ does not $\efxp$ envy anyone in $Y$ either.
We prove the points one by one.
\begin{enumerate}
    \item Otherwise, by \Cref{lem:eefx-feas}, $Y_1$ is $\eefx$-feasible for agent $1$.
    \item Agent $1$ does not $\efx^+$-envy any other agent in $Y$. Let $Y'_j = Y_j$ for all $j \in N \setminus\{1,i\}$, $Y'_1 = Y_i \setminus Z_1$, and $Y'_i = Y_1 \cup Z_1$. Since $v_1(Y_i \setminus Z_1)=v_1(Y_i)>v_1(Y_1)=v_1(Y_1 \cup Z_1)$, agent $1$ does not $\efx^+$-envy any other agent in $Y'$. By \Cref{lem:eefx-feas}, $Y'_1 = Y_i \setminus Z_1$ is $\eefx$-feasible for agent $1$.
    \item By the first point, $v_1(Y_i)>v_1(Y_1)$ for all $i \in N$. Note that $v_1(Y_i) = v_1(X_i)$, since the all items outside $Z_1$ are allocated similarly in both allocations. Hence, $v_1(X_i)>v_1(X_1)$. By~\Cref{lem:prop}, for all $g \in X_i \setminus Z_1, z \in X_i \cap Z_1$, we have $u(z) \leq u(g)$. We have $Y_i \setminus Z_1= X_i \setminus Z_1$ and $Y_i \cap Z_1 \subseteq X_i \cap Z_1$. Therefore, also for all $g \in Y_i \setminus Z_1$ and $z \in Y_i \cap Z_1$, $u(z) \leq u(g)$.
    \item By~\Cref{lem:efxp}, for all $i>1$ and $g \in Y_i \setminus Z_1$, $v_1(Y_1) \geq v_1(Y_i \setminus g)$. Therefore $(Y_2 \setminus Z_1, \ldots, Y_n \setminus Z_1)$ is an EEFX certificate for $Y_1 \cup Z_1$. \qedhere
\end{enumerate}
\end{proof}

Let us now describe our main algorithm (\Cref{alg:eefx+ef1:res-add}) that computes an allocation that is $\eefx$ and EF1 in polynomial time. For the corner case of $m \leq n$, it allocates exactly one item to the first $m$ agents. This allocation is EFX and hence both $\efxp$ and EF1. Also in this case, the running time is $O(n)$. From now on, we assume $m>n$.

\Cref{alg:eefx+ef1:res-add} begins with the $\efx^+$ allocation $X$ (computed by~\Cref{alg:efxp:res-add}) and renames it as $Y$. Then, as long as $Y_1$ is not $\eefx$-feasible for agent $1$, we find a \emph{suitable} agent $i^*$ to transfer a good $g \in Z_1 \cap Y_{i^*}$ to $Y_1$. If this transfer creates envy from some agent $j$ (while giving priority to $i^*$) towards $Y_1$ then  agents $1$ and $j$ swap their bundles.

\begin{algorithm}
\LinesNumbered
    \caption{$\eefx + \ef1$ for Restricted Additive Valuations; } \label{alg:eefx+ef1:res-add}
    \DontPrintSemicolon
    \textbf{Input:} A fair division instance $\I = (N, M, V)$ with restricted additive valuations\;
    \textbf{Output:} An $\eefx +\ef1$ allocation\;
    \BlankLine
    \If{$m \leq n$}{
        \Return $(\{g_1\}, \ldots, \{g_m\}, \emptyset, \ldots, \emptyset)$
    }
    Let $X \leftarrow \mathrm{ALG}(I)$ \Comment{\Cref{alg:efxp:res-add}} \; 
    $Y \leftarrow X$ \;
    \While{
    $Z_1 \setminus Y_1 \neq \emptyset$}{
        Let $(i^*,g) \leftarrow \arg \max_{g \in Z_1 \cap Y_{i^*}} u(Y_{i^*} \setminus g)$ \;
        $Y_{i^*} \leftarrow Y_{i^*} \setminus g$ \;
        $Y_1 \leftarrow Y_1 \cup g$ \;
        \If{$\exists j>1$ who envies $Y_1$}{ \label{line}
            \If{$i^*$ envies $Y_1$}{
                $j \leftarrow i^*$ \;
            }
            Swap bundles of agent $1$ and agent $j$ \; \label{line:12}
            \Return $Y$ \;
        }
    }
    \Return $Y$ \;
\end{algorithm}

 Given any specific time during~\Cref{alg:eefx+ef1:res-add}, we assume the allocation is $(Y_1, \ldots, Y_n)$.

\begin{observation}\label{obs:alg-inv}
    Throughout~\Cref{alg:eefx+ef1:res-add}, until the if-condition in Line \eqref{line} is satisfied, the following holds:
    \begin{itemize}
        \item $v_i(Y_i) = u(Y_i)$ for all $i>1$, and 
        \item $v_1(Y_i) = v_1(X_i)$ for all $i \in N$.
    \end{itemize}
\end{observation}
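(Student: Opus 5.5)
The plan is an induction over the iterations of the while loop of \Cref{alg:eefx+ef1:res-add}, starting from the allocation $X$ returned by \Cref{alg:efxp:res-add}. I check that both invariants hold at initialization, and that a single iteration that does not trigger the if-condition in Line~\eqref{line} preserves them. The only operation in such an iteration is moving one good $g \in Z_1 \cap Y_{i^*}$ from $Y_{i^*}$ to $Y_1$. The induction stops at the first iteration where Line~\eqref{line} fires, since the swap happens only after that condition holds and the statement covers nothing beyond it.

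\textbf{Base case.} Initially $Y = X$. By \Cref{obs:u=v}, $v_i(X_i) = u(X_i)$ for every $i$, in particular for $i > 1$. The second invariant $v_1(Y_i) = v_1(X_i)$ holds trivially.

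\textbf{Second invariant in the inductive step.} The moved good satisfies $g \in Z_1$, so $v_1(g) = 0$. By additivity,
\[ v_1(Y_{i^*} \setminus g) = v_1(Y_{i^*}) \quad \text{and} \quad v_1(Y_1 \cup g) = v_1(Y_1). \]
All other bundles are unchanged, so $v_1(Y_i) = v_1(X_i)$ continues to hold for every $i \in N$.

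\textbf{First invariant in the inductive step.} Only $Y_{i^*}$ changes among the bundles of agents $i > 1$. Note that $i^* \neq 1$ because $g \in Z_1 \setminus Y_1$. Before the removal, $v_{i^*}(Y_{i^*}) = u(Y_{i^*})$ by the inductive hypothesis. Since $v_{i^*}(h) \le u(h)$ for every good $h$, this equality forces $v_{i^*}(h) = u(h)$ for every $h \in Y_{i^*}$. Removing $g$ keeps this property for the remaining goods, so additivity gives $v_{i^*}(Y_{i^*} \setminus g) = u(Y_{i^*} \setminus g)$. The bundles of the other agents $i > 1$ are untouched, so their equalities persist.

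\textbf{Main obstacle.} There is no substantive obstacle. The only subtle point is the scope of the statement: $Y_1$ changes, and agent $1$ may later swap bundles with another agent, but neither affects the invariants. The first invariant concerns only agents $i > 1$, the second only agent $1$'s valuation, and the swap occurs only after the if-condition is satisfied, which lies outside the range of the claim.
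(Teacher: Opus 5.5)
Your proof is correct and matches the argument the paper leaves implicit, since it states this observation without proof: the invariants hold initially by \Cref{obs:u=v}, and each iteration before Line~\ref{line} fires only moves a good of $Z_1$ out of some $Y_{i^*}$ with $i^*>1$ into $Y_1$, which changes no $v_1$-value and only shrinks a bundle whose goods its owner values at their $u$-value. Your justification of $i^*\neq 1$ relies on the intended reading of the $\arg\max$, because the pseudocode does not literally exclude $i^*=1$; that case would be a no-op and preserve both invariants anyway, so nothing is lost.
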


\begin{lemma}\label{lem:ef1-inv}
    Throughout~\Cref{alg:eefx+ef1:res-add}, until Line \eqref{line:12} is executed, the allocation $(Y_2, \ldots, Y_n)$ is $\efx^+$ for the instance where the set of agents is $\{2, \ldots, n\}$ and the set of items is $M \setminus Y_1$.
\end{lemma}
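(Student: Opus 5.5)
Plan: induction over the iterations of the while loop. The base case is the moment just before the first iteration, when $Y=X$. By \Cref{lem:efxp}, $X$ is $\efx^+$ for all $n$ agents. Restricting attention to agents $2,\ldots,n$ and to the items $M\setminus X_1$ preserves every pairwise $\efx^+$ condition among these agents. Those conditions only involve $v_i(Y_i)$, $v_i(Y_j)$ and the items of $Y_j$, and none of these change when agent $1$'s bundle is dropped.

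For the inductive step, consider one iteration that does not reach Line~\eqref{line:12}. Suppose $(Y_2,\ldots,Y_n)$ is $\efx^+$ before the iteration. The only change is that $Y_{i^*}$ loses a single good $g\in Z_1\cap Y_{i^*}$, which moves to $Y_1$ and so leaves the restricted instance. All other bundles $Y_j$ with $j\neq 1,i^*$ are unchanged.

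\textbf{Envy towards $Y_{i^*}$.} Let $j\notin\{1,i^*\}$. Since $Y_{i^*}$ shrank, $v_j(Y_{i^*}\setminus h)$ can only decrease for every $h$, so no new $\efx^+$-envy from $j$ towards $Y_{i^*}$ appears.

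\textbf{Envy from $i^*$.} The real issue is that $v_{i^*}(Y_{i^*})$ may drop, so $i^*$ might now $\efx^+$-envy some $Y_j$. I would handle this through the arg max choice of $(i^*,g)$, combined with \Cref{obs:alg-inv} and the shape of the current $Y_j$ as produced by \Cref{alg:efxp:res-add} minus removed $Z_1$-goods. Fix $j>1$ and $h\in Y_j\setminus Z_{i^*}$. We need
\[
v_{i^*}(Y_{i^*}\setminus g)\ \ge\ v_{i^*}(Y_j\setminus h).
\]
The argument splits into two cases.

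\textbf{Case 1: $Y_j\cap Z_1\neq\emptyset$.} The maximality of $(i^*,g)$ gives $u(Y_{i^*}\setminus g)\ge u(Y_j\setminus z)$ for every $z\in Y_j\cap Z_1$. By \Cref{obs:alg-inv}, $v_{i^*}(Y_{i^*}\setminus g)=u(Y_{i^*}\setminus g)$. Now pick the $z\in Y_j\cap Z_1$ of largest $u$-value. If $u(z)\ge u(h)$, then
\[
v_{i^*}(Y_j\setminus h)\ \le\ u(Y_j\setminus h)\ \le\ u(Y_j\setminus z),
\]
and the bound follows. Getting $u(z)\ge u(h)$ for the relevant $h$ is where the structure of $X$ is needed; I would use \Cref{lem:prop} and \Cref{lem:X1}(3). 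Those results say that in each $Y_j$ the $Z_1$-goods are the cheapest goods. When that ordering is the wrong way round, I would instead fall back on the pre-iteration $\efx^+$ inequality. This applies when $i^*$ had slack, namely
\[
u(Y_{i^*})-u(g)\ \ge\ u(Y_j\setminus h).
\]
To get that slack I would use the greedy order of \Cref{alg:efxp:res-add}. Goods are allocated in decreasing $u$, so $g$, being a late good, is small compared with the last good that $i^*$ could have been compared against. I expect this case analysis to be the main obstacle.

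\textbf{Case 2: $Y_j\cap Z_1=\emptyset$.} Here $Y_j=X_j$ is untouched. I would compare with the last good added to $X_j$ in \Cref{alg:efxp:res-add} and argue as in the proof of \Cref{lem:efxp}. When that good was allocated, $i^*$'s bundle was already at least $u$ of the current $Y_{i^*}\setminus g$, up to goods of $u$-value at most that of $g$. Formally, I would redo the induction of \Cref{lem:efxp} with $g$ deleted, using that $g\in Z_1$ is of minimum $u$-value among the $Z_1$-goods of $Y_{i^*}$, which follows from the arg max choice.
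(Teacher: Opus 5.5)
Your case split (whether $Y_j$ contains a $Z_1$-good) is the paper's, but neither case is closed as written. In Case 1 the inequality is reversed. Since $u(Y_j\setminus h)-u(Y_j\setminus z)=u(z)-u(h)$, the bound $u(Y_j\setminus h)\le u(Y_j\setminus z)$ needs $u(z)\le u(h)$, not $u(z)\ge u(h)$. That is exactly what \Cref{lem:X1}(3) supplies, so the case is immediate and your unproved ``slack'' fallback is unnecessary. If $h\in Z_1$, apply the arg max to the pair $(j,h)$. If $h\notin Z_1$, take any $z\in Y_j\cap Z_1$; then $u(z)\le u(h)$, so $v_{i^*}(Y_j\setminus h)\le u(Y_j\setminus h)\le u(Y_j\setminus z)\le u(Y_{i^*}\setminus g)=v_{i^*}(Y_{i^*}\setminus g)$. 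The paper packages this via $h'=\arg\min_{\ell\in Y_j}u(\ell)$.

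The genuine gap is Case 2, where incidentally only $Y_j=X_j\setminus Z_1$ holds, not $Y_j=X_j$. The tools you list there are the greedy order of \Cref{alg:efxp:res-add}, the arg max choice, and $g$ being $u$-minimal in $Y_{i^*}$. These do not imply the inequality. Take three agents and goods $e,a,b,g,c$ with $u$-values $11,10.5,10,9.2,9$. Agent~1 values every good except $g$, and agents $2,3$ value every good except $e$. \Cref{alg:efxp:res-add}, breaking the tie for $a$ towards agent~3, outputs $X_1=\{e\}$, $X_2=\{b,g\}$, $X_3=\{a,c\}$, so $Z_1=\{g\}$. The loop picks $(i^*,g)=(2,g)$, nobody envies $Y_1=\{e,g\}$, and $Y_3\cap Z_1=\emptyset$. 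Yet $v_2(Y_2\setminus g)=10<10.5=v_2(Y_3\setminus c)$.

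What excludes this instance is that $v_1(X_2)=10<11=v_1(X_1)$. So $Y_1$ is already EEFX-feasible for agent~1 and \Cref{lem:X1} does not apply. Case 2 must therefore use \Cref{lem:X1}(1), with agent~1 as a witness. Since $Y_j$ contains no $Z_1$-good, $h\notin Z_1$ and $u=v_1$ on $Y_j$, which gives
\[
v_{i^*}(Y_j\setminus h)\le u(Y_j\setminus h)=v_1(X_j\setminus h)\le v_1(X_1)<v_1(X_{i^*})=v_1(Y_{i^*}\setminus g)\le u(Y_{i^*}\setminus g)=v_{i^*}(Y_{i^*}\setminus g).
\]
The steps use, in order: $\efx^+$ of $X$ for agent~1 (\Cref{lem:efxp}), \Cref{lem:X1}(1), $g\in Z_1$, and \Cref{obs:alg-inv}. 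This is the paper's argument, and it never looks at $i^*$'s allocation history.
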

\begin{proof}
    The proof is by induction. By~\Cref{lem:efxp}, the claim holds in the beginning of the algorithm. Assume it holds until the $t$-th iteration of the while loop. Let $(i^*,g) = \arg \max_{g \in Z_1 \cap Y_{i^*}} u(Y_{i^*} \setminus g)$ in the $(t+1)$-st iteration of the while loop. By the condition of the while-loop, $Z_1 \setminus Y_1 \neq \emptyset$. 
    It suffices to prove $i^*$ does not $\efx^+$-envy $Y_j$ for all $j \in N \setminus \{1,i^*\}$. Towards contradiction, assume there exists $j \in N \setminus \{1,i^*\}$ and $h \in Y_j$, such that $v_{i^*}(Y_{i^*}  \setminus g) < v_{i^*} (Y_j \setminus h)$. By~\Cref{obs:alg-inv}, we have
    $$u(Y_{i^*} \setminus g) =  v_{i^*}(Y_{i^*}  \setminus g) < v_{i^*} (Y_j \setminus h) = u(Y_j \setminus h).$$
    Now let $h' = \arg \min_{\ell \in Y_j} u(\ell)$. We have 
    \begin{align}\label{eq1}
        u(Y_{i^*} \setminus g) < u(Y_j \setminus h) \leq u(Y_j \setminus h').
    \end{align}
    By the choice of $(i^*,g)$, $h' \notin Z_1$. By the third point in \Cref{lem:X1}, $Y_j \cap Z_1 = \emptyset$. We get

    \begin{align}
        u(Y_{i^*} \setminus g) &\geq v_1(Y_{i^*} \setminus g) \notag\\
        &= v_1(X_{i^*}) \tag{\Cref{obs:alg-inv}}\\
        &> v_1(X_1) \tag{the first point of \Cref{lem:X1}}\\
        &\geq v_1(X_j \setminus h') \tag{\Cref{lem:efxp}}\\
        &= v_1(Y_j \setminus h') \tag{\Cref{obs:alg-inv}}\\
        &= u(Y_j \setminus h'). \label{eq2}
    \end{align}
    Inequalities \eqref{eq1} and \eqref{eq2} contradict each other. Thus, $i^*$ does not $\efx^+$-envy $Y_j$ for any $j \in N \setminus \{1,i^*\}$.
\end{proof}

\begin{lemma}\label{lemma:eefx-inv}
    During~\Cref{alg:eefx+ef1:res-add}, after moving $g$ from $Y_{i^*}$ to $Y_1$, if $v_{i^*}(Y_{i^*} \setminus g) \geq v_{i^*}(Y_1 \cup g)$, then $Y_{i^*} \setminus g$ is $\eefx$-feasible for $i^*$. If $v_{i^*}(Y_{i^*} \setminus g) < v_{i^*}(Y_1 \cup g)$, then $Y_1 \cup g$ is $\eefx$-feasible for ${i^*}$. Also if $v_j(Y_j) < v_j(Y_1 \cup g)$ for some $j > 1$, then then $Y_1 \cup g$ is $\eefx$-feasible for $j$.
\end{lemma}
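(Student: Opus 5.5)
The plan is to reduce all three claims to \Cref{lem:eefx-feas}. That lemma only needs some complete allocation of $M$ in which the agent in question receives the bundle under consideration, has no $\efx^+$-envy towards any other bundle, and weakly prefers her bundle to at least one other bundle. The other agents' bundles play no role beyond forming such a partition, so we are free to swap bundles in a hypothetical allocation.

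Write $Y'$ for the allocation just after the move: $Y'_1 = Y_1 \cup g$, $Y'_{i^*} = Y_{i^*} \setminus g$, and $Y'_k = Y_k$ otherwise. The key input is \Cref{lem:ef1-inv}, which says that $(Y'_2,\ldots,Y'_n)$ is $\efx^+$ among agents $2,\ldots,n$. Its inductive step establishes precisely this for the allocation after the move, before any swap. So for every $j>1$, every $k \notin \{1,j\}$ and every $h \in Y'_k \setminus Z_j$, we have $v_j(Y'_j) \ge v_j(Y'_k \setminus h)$.

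\emph{Case 1:} $v_{i^*}(Y'_{i^*}) \ge v_{i^*}(Y'_1)$. Apply \Cref{lem:eefx-feas} to $Y'$ with $i=i^*$ and $i'=1$. Agent $i^*$ has no $\efx^+$-envy towards $Y'_k$ for $k \neq 1$ by \Cref{lem:ef1-inv}. She does not envy $Y'_1$ even without removing a good, so she certainly has no $\efx^+$-envy towards it. Hence $Y_{i^*}\setminus g$ is $\eefx$-feasible for $i^*$.

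\emph{Case 2:} $v_{i^*}(Y'_{i^*}) < v_{i^*}(Y'_1)$. Consider the allocation obtained from $Y'$ by swapping the bundles of agents $1$ and $i^*$, so that $i^*$ holds $Y_1\cup g$ and agent $1$ holds $Y_{i^*}\setminus g$. Agent $i^*$ weakly (indeed strictly) prefers her bundle to agent $1$'s, so we take $i'=1$. For any other $k$ and any $h \in Y'_k \setminus Z_{i^*}$, we have $v_{i^*}(Y_1\cup g) > v_{i^*}(Y'_{i^*}) \ge v_{i^*}(Y'_k\setminus h)$ by \Cref{lem:ef1-inv}. Also $v_{i^*}(Y_1\cup g) > v_{i^*}(Y_{i^*}\setminus g)$, so there is no envy towards agent $1$'s new bundle. \Cref{lem:eefx-feas} then gives that $Y_1\cup g$ is $\eefx$-feasible for $i^*$.

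\emph{Case 3:} some $j>1$ has $v_j(Y'_j) < v_j(Y'_1)$. This case is identical to Case 2, with $j$ in place of $i^*$. Swap the bundles of $1$ and $j$ in $Y'$. Agent $j$ strictly prefers $Y_1\cup g$ to the bundle $Y'_j$ now held by agent $1$. For every other $k$, including $k=i^*$ whose bundle is $Y_{i^*}\setminus g$, the chain $v_j(Y_1 \cup g) > v_j(Y'_j) \ge v_j(Y'_k \setminus h)$ holds by \Cref{lem:ef1-inv}. \Cref{lem:eefx-feas} yields the claim.

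The only delicate point I expect is justifying that \Cref{lem:ef1-inv} applies to the post-move allocation $Y'$ rather than only to the allocation at the start of an iteration. I would handle this by noting that its induction step proves exactly that agent $i^*$, after losing $g$, does not $\efx^+$-envy any $Y_j$ with $j\neq 1$. The bundles of the other agents $2,\ldots,n$ are unchanged, and a bundle only shrank, so the invariant holds for $Y'$ as required.
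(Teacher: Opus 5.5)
Your proposal is correct and follows essentially the same approach as the paper. Both use \Cref{lem:ef1-inv} to rule out $\efx^+$-envy towards the bundles of agents other than $1$, and then apply \Cref{lem:eefx-feas} with $i'=1$; your explicit swap of the bundles of agent $1$ and $i^*$ (or $j$) is exactly what the paper means by ``by having $Y_1 \cup g$'', and your check that \Cref{lem:ef1-inv} covers the allocation after the move but before the swap is a correct detail the paper leaves implicit.
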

\begin{proof}
    By \Cref{lem:ef1-inv}, agent $i^*$ does not $\efx^+$-envy any other agent $j>1$ by having $Y_{i^*} \setminus g$. If $v_{i^*}(Y_{i^*} \setminus g) \geq v_{i^*}(Y_1 \cup g)$, by \Cref{lem:eefx-feas}, $Y_{i^*} \setminus g$ is $\eefx$-feasible for $i^*$.
    Also if $v_{i^*}(Y_{i^*} \setminus g) < v_{i^*}(Y_1 \cup g)$, then by having $Y_1 \cup g$, $i^*$ does not $\efx^+$-envy any other agent and the claim again follows from \Cref{lem:eefx-feas}. By the exact same argument, if $v_j(Y_j) < v_j(Y_1 \cup g)$ for some $j > 1$, then then $Y_1 \cup g$ is $\eefx$-feasible for $j$.
\end{proof}

\thmThree*

\begin{proof}
If $m \leq n$, \Cref{alg:eefx+ef1:res-add} terminates in $O(n)$ time and outputs an EFX (and thus $\efxp$ and EF1) allocation. Now assume $m>n$. \Cref{alg:efxp:res-add} runs in $O(m\log m + nm)$ by \Cref{lem:efxp}. The while-loop can iterate for at most $|Z_1| = O(m)$ many times. Finding $(i^*, g)$ takes $O(m)$ times. Checking the if-conditions takes $O(n)$ time. Hence, overall, the algorithm terminates in $O(m \log m + nm + m^2) = O(m^2)$. 

For analyzing the correctness, we prove the following three claims.

    \begin{claim}
        At the end of the algorithm, agent $1$ receives an $\eefx$-feasible bundle. 
    \end{claim}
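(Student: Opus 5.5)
The plan is a case analysis on how \Cref{alg:eefx+ef1:res-add} terminates. There are three exits: the while loop ends because $Z_1 \subseteq Y_1$; the swap in Line~\eqref{line:12} happens with $j = i^*$; or the swap happens with some other $j>1$. I use \Cref{lem:X1} as an invariant. Throughout the loop, $Y_1 = X_1 \cup Z'_1$ for the set $Z'_1 \subseteq Z_1$ of goods moved so far, and $Y_i = X_i \setminus Z'_1$ for $i>1$. This holds because each iteration moves a good of $Z_1$ from some $Y_{i^*}$ to $Y_1$, which is exactly the setting of \Cref{lem:X1}.

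\textbf{Case 1: the loop exits normally.} Either $Y_1$ became $\eefx$-feasible earlier, or not. If at some iteration $Y_1$ is not $\eefx$-feasible, point~\ref{point-4} of \Cref{lem:X1} says $Y_1 \cup Z_1$ is $\eefx$-feasible for agent $1$. At exit $Z_1 \subseteq Y_1$, so $Y_1 = X_1 \cup Z_1 = Y_1 \cup Z_1$, and it is feasible.

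If instead $Y_1$ is already $\eefx$-feasible at some intermediate point, I need feasibility to persist when further goods of $Z_1$ are added. I would argue directly. Adding a good $z \in Z_1$ to $Y_1$ leaves $v_1(Y_1)$ unchanged. Take any certificate for $Y_1$, delete $z$ from the bundle containing it, and the result is still a certificate, since $v_1$ does not change on that bundle and removing a good only makes the EFX conditions easier. (The algorithm as written does not stop when $Y_1$ becomes feasible, so this monotonicity step is needed.) Hence the final $Y_1$ is $\eefx$-feasible in both subcases.

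\textbf{Case 2: a swap occurs at Line~\eqref{line:12}.} After the swap, agent $1$ holds the old $Y_j$, where $j = i^*$ if $i^*$ envies the new $Y_1$, and otherwise $j$ is some other agent who envies $Y_1$. The goal is that $Y_j$ is $\eefx$-feasible for agent $1$.

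If $j \neq i^*$, then by \Cref{obs:alg-inv} we have $Y_j = X_j \setminus Z'_1$ with $v_1(Y_j) = v_1(X_j)$. I will show $Y_j$ is feasible using point~2 of \Cref{lem:X1}, applied to the allocation just before this iteration's move, when $Y_1$ was not yet feasible. That point gives that $Y_j \setminus Z_1$ is $\eefx$-feasible. Adding back goods of value $0$ to agent $1$ preserves feasibility by the monotonicity argument from Case 1, so $Y_j$ is feasible.

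If $j = i^*$, agent $1$ receives $Y_{i^*} \setminus g$. The same reasoning applies, since point~2 holds for the configuration after moving $g$ as well, provided $Y_1 \cup g$ was not yet feasible. If $Y_1 \cup g$ already was feasible, I would instead apply \Cref{lem:eefx-feas} directly to agent $1$ holding the new bundle. This needs two facts: agent $1$ does not $\efx^+$-envy anyone, including the bundle $Y_1 \cup g$ now held by $j$, and some bundle is weakly worse for agent $1$. The second fact is the swapped-away $Y_1 \cup g$, whose $v_1$-value equals $v_1(X_1) < v_1(X_j)$ by point~1. For the first fact, $\efx^+$-freeness from agent $1$ toward the other $Y_i$ follows from $\efx^+$ of $X$, and the $Z_1$-goods are irrelevant to agent $1$.

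\textbf{Main obstacle.} The main difficulty is this last bookkeeping step: checking that $\efx^+$-freeness for agent $1$ survives the swap. Agent $1$ now holds the larger bundle $X_j \setminus Z'_1$. Its envy toward the other bundles is controlled because $v_1(X_j) > v_1(X_1) \ge v_1(X_k \setminus h)$ by \Cref{lem:efxp}, and toward the swapped-away bundle because that bundle's $v_1$-value is only $v_1(X_1)$. I would carry out this direct \Cref{lem:eefx-feas} argument first, since it handles both subcases uniformly and avoids the case split.
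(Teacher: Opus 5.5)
\textbf{Gap: Case~2 assumes agent~1's pre-move bundle is infeasible, but never establishes it.}
Outside the sub-branch where $Y_1\cup g$ is itself infeasible, every branch of your Case~2 invokes point~1 ($v_1(X_j)>v_1(X_1)$) or point~2 ($X_j\setminus Z_1$ is EEFX-feasible) of \Cref{lem:X1} for the configuration \emph{before} $g$ is moved. This covers the case $j\neq i^*$, the case \emph{$Y_1\cup g$ already feasible}, and the uniform argument in your last paragraph. Both points hold only if that pre-move $Y_1$ is EEFX-infeasible for agent~1. You simply assert this (\emph{when $Y_1$ was not yet feasible}).

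This contradicts your own Case~1, where you read the loop literally as running until $Z_1\subseteq Y_1$ and stress that it does not stop once $Y_1$ is feasible. Under that reading both the hypothesis and the claim fail. Take two agents P and Q and goods $x,y,w$ with $u$-values $5,4,3$; P values only $x$, and Q values all three.
\begin{itemize}
\item Breaking the first tie of \Cref{alg:efxp:res-add} toward P gives $X_1=\{x\}$ (so agent~1 is P), $X_2=\{y,w\}$ and $Z_1=\{y,w\}$. Here $X_1$ is already EEFX-feasible.
\item The literal loop moves $w$, since it maximizes $u(Y_2\setminus g)$. Q now envies $\{x,w\}$ because $8>4$.
\item The swap hands P the bundle $\{y\}$, of value $0$ to P, while $v_P(\{x,w\}\setminus\{w\})=5$.
\end{itemize}
So $v_1(X_2)<v_1(X_1)$, your step \emph{$v_1(X_1)<v_1(X_j)$ by point~1} is false, and P ends with an EEFX-infeasible bundle.

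\textbf{The missing invariant.} The loop body must be executed only while $Y_1$ is EEFX-infeasible for agent~1. This is how the paper's prose describes the loop and how its proof uses it (\emph{the algorithm terminates when $Y_1$ is EEFX-feasible}), even though the pseudocode's test reads $Z_1\setminus Y_1\neq\emptyset$. Under this invariant a good $g$ can always be chosen: by point~4 of \Cref{lem:X1}, infeasibility of $Y_1$ already forces $Z_1\setminus Y_1\neq\emptyset$.

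With the invariant, your approach reduces to the paper's short argument:
\begin{itemize}
\item \Cref{lem:X1}, applied to the pre-move configuration, shows that $X_j\setminus Z_1$ is EEFX-feasible for agent~1 for every $j>1$.
\item Agent~1's new bundle is $Y_j$ or $Y_{i^*}\setminus g$. It contains $X_j\setminus Z_1$, respectively $X_{i^*}\setminus Z_1$ (because $g\in Z_1$).
\item EEFX-feasibility is upward closed, by your deletion argument.
\end{itemize}
The separate sub-cases for $j=i^*$ then become unnecessary. Your Case~1 argument is fine as it stands.
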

    If the condition in Line \ref{line} is never met, the algorithm terminates when $Y_1$ is $\eefx$-feasible for agent $1$. Otherwise, agent $1$ receives some bundle $Y_j$. Note that $X_j \setminus Z_1 \subseteq Y_j$. By \Cref{lem:X1}, $X_j \setminus Z_1$ is $\eefx$-feasible for agent $1$ and therefore $Y_j$ is $\eefx$-feasible too.

    \begin{claim}
        At the end of the algorithm, agent $i$ receives an $\eefx$-feasible bundle for all $i>1$. 
    \end{claim}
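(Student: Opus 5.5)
We split into cases according to whether the if-condition in Line~\ref{line} is ever triggered, and handle the final iteration separately.

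\emph{Case 1: the condition in Line~\ref{line} is never met.} The loop ends with $Z_1 \subseteq Y_1$. Agents $i>1$ then hold bundles obtained from $X_i$ by removing goods of $Z_1$. By \Cref{obs:alg-inv}, $v_i(Y_i)=u(Y_i)$. By \Cref{lem:ef1-inv}, agent $i$ does not $\efx^+$-envy any $Y_j$ with $j>1$. Since the condition never fired, agent $i$ does not envy $Y_1$, so $v_i(Y_i)\ge v_i(Y_1)$. \Cref{lem:eefx-feas} with $i'=1$ then gives that $Y_i$ is $\eefx$-feasible for $i$. The only point to check is that the $\efx^+$ condition towards $Y_1$ is not needed, because $i'=1$ is the agent compared against by value. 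The certificate in that lemma only uses non-$\efx^+$-envy towards the other agents $j\neq i,1$. If $n=2$ there are no such agents, and the claim follows directly from $v_i(Y_i)\ge v_i(Y_1)$.

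\emph{Case 2: the condition fires in some iteration where $g$ moves from $Y_{i^*}$ to $Y_1$.} Agents $k>1$ whose bundles are unchanged and who are not $j$ keep $Y_k$. Such a $k$ does not envy the new $Y_1\cup g$ unless it is a candidate for $j$; if it is not chosen as $j$, we still need feasibility. So we argue differently. If $k$ does not envy $Y_1\cup g$, we apply \Cref{lem:eefx-feas} as in Case 1, using \Cref{lem:ef1-inv} applied to the allocation after the move. \Cref{lem:ef1-inv} is stated up to Line~\ref{line:12}, so it holds for $(Y_2,\dots,Y_n)$ with $Y_{i^*}\setminus g$. If $k\ne j$ envies $Y_1\cup g$, we apply \Cref{lem:eefx-feas} with $i'$ equal to the agent now holding the old $Y_1$. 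That agent is agent~$j$ after the swap, and $k$ must weakly prefer its own bundle to it, since it did not envy $Y_1$ before $g$ was added; the pre-move $Y_1$ is compared via the previous iteration. Agent $j$ itself receives $Y_1\cup g$, which is $\eefx$-feasible for $j$ by the last statement of \Cref{lemma:eefx-inv}, since $j$ envies it.

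For agent $i^*$ when $j\neq i^*$: by the priority rule, $i^*$ does not envy $Y_1\cup g$, so $v_{i^*}(Y_{i^*}\setminus g)\ge v_{i^*}(Y_1\cup g)$. The first statement of \Cref{lemma:eefx-inv} then gives feasibility of $Y_{i^*}\setminus g$. If $j=i^*$, the second statement applies.

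\emph{Main obstacle.} The hardest step is the non-$j$ envious agents $k$ in Case 2. After the swap the comparison bundle for $k$ is the old $Y_1$, now held by $j$, so we must verify $v_k(Y_k)\ge v_k(\text{old }Y_1)$. We also need that $k$ has no $\efx^+$-envy towards the bundle $j$ now holds, or else choose $i'=j$ and rely on the certificate not needing a condition on $i'$'s bundle beyond the value comparison. The latter holds by construction in \Cref{lem:eefx-feas}, since only $v_k(A_{i'})\le v_k(A_k)$ is used there. So we use $i'=j$, and establish the inequality from the loop invariant that no $k>1$ envied $Y_1$ before this iteration.
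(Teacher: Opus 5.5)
Your handling of Case~1, of agent $j$, and of agent $i^*$ is correct. In Case~1, if the while loop is never entered, $v_i(X_i)\ge v_i(X_1)$ comes from \Cref{obs:u=v} rather than from the check in Line~\ref{line}.

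The gap is in Case~2, in the subcase where some agent $k\notin\{1,j,i^*\}$ envies $Y_1\cup g$ but is not chosen. After the swap, agent $j$ holds $Y_1\cup g$, not the old $Y_1$. The old $Y_1$ is no longer a bundle of the allocation, since $g$ has been added to it. So applying \Cref{lem:eefx-feas} to the final allocation with $i'=j$ requires $v_k(Y_k)\ge v_k(Y_1\cup g)$, which is exactly what fails in this subcase. The inequality $v_k(Y_k)\ge v_k(Y_1)$ you take from the previous iteration concerns a bundle nobody holds. Taking $i'=1$ instead would require a value comparison with the bundle agent~$1$ now holds, and absence of $\efx^+$-envy towards $Y_1\cup g$; you have established neither. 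As written, this step does not go through.

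The missing observation is that $\eefx$-feasibility of a bundle $B$ for an agent is a property of $B$ alone. A certificate is any partition of $M\setminus B$ into $n-1$ parts, regardless of how those goods are currently allocated. So you may apply \Cref{lem:eefx-feas} to the allocation as it stood \emph{before} $g$ was moved:
\begin{itemize}
\item there $v_k(Y_k)\ge v_k(Y_1)$, by the non-firing check of the previous iteration, or by \Cref{obs:u=v} in the first iteration;
\item $k$ has no $\efx^+$-envy towards the other agents $>1$, by \Cref{lem:ef1-inv}.
\end{itemize}
Hence $Y_k$ is $\eefx$-feasible for $k$, and it stays so because $Y_k$ never changes.

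With this observation your case split for unchanged agents is unnecessary, and the argument reduces to the paper's. Maintain the invariant that every agent $i>1$ holds an $\eefx$-feasible bundle, which holds initially by \Cref{obs:eefx-feas}. In an iteration where Line~\ref{line} does not fire, only $Y_{i^*}$ changes, and the first part of \Cref{lemma:eefx-inv} covers it. In the iteration where it fires, only $i^*$ and $j$ change bundles among agents $>1$, and the remaining parts of that lemma cover them.
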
  
    By \Cref{obs:eefx-feas}, in the beginning of the algorithm, for all $i>1$, $X_i$ is $\eefx$-feasible for $i$. If the condition on Line \eqref{line} is never met, by \Cref{lemma:eefx-inv}, for all $i>1$, $Y_i$ it remains $\eefx$-feasible for its owner until the end of the algorithm. If the condition in Line \eqref{line} is met and ${i^*}$ envies $Y_1 \cup g$, by \Cref{lemma:eefx-inv}, $Y_1 \cup g$ is $\eefx$-feasible for agent ${i^*}$ while the other agents $j>1$ keep their $\eefx$-feasible bundle. If the condition on line \eqref{line} is met and ${i^*}$ does not envy $Y_1 \cup g$ but $j \neq {i^*}$ does, by \Cref{lemma:eefx-inv}, $Y_{i^*} \setminus g$ is $\eefx$-feasible for agent ${i^*}$, $Y_1 \cup g$ is $\eefx$-feasible for $j$, while the other agents $k>1$ keep their bundle $\eefx$-feasible.

    \begin{claim}
        Throughout the algorithm, $Y$ is  EF1.
    \end{claim}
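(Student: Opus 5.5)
We prove the claim by induction over the iterations of the while loop. The invariant is: at the start of every iteration, $Y$ is EF1 and no agent $k>1$ envies $Y_1$. The swap in Line~\ref{line:12} is handled separately, because it ends the algorithm.

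\emph{Base case.} Initially $Y=X$, which is EF1 by \Cref{lem:efxp}. No agent $k>1$ envies $X_1$, since $v_k(X_k)=u(X_k)\ge u(X_1)\ge v_k(X_1)$ by \Cref{obs:u=v} and the ordering $u(X_1)\le\dots\le u(X_n)$.

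\emph{An iteration that does not trigger Line~\ref{line}.} Suppose the loop moves $g\in Z_1\cap Y_{i^*}$ into $Y_1$ and the if-condition fails. We check the three kinds of pairs.
\begin{itemize}
\item Among agents $2,\dots,n$, the allocation $(Y_2,\dots,Y_n)$ is $\efx^+$ by \Cref{lem:ef1-inv}, and $\efx^+$ implies EF1.
\item No agent $k>1$ envies $Y_1$, because the if-condition failed; this also re-establishes the second part of the invariant.
\item Agent $1$ is unaffected, because every item of $Z_1$ has value $0$ for her. Hence $v_1(Y_1)=v_1(X_1)$, and \Cref{obs:alg-inv} gives $v_1(Y_j)=v_1(X_j)$. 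Moreover $Y_j\setminus Z_1=X_j\setminus Z_1$. Since $X$ is $\efx^+$, for any $h\in Y_j\setminus Z_1$ we get $v_1(Y_1)=v_1(X_1)\ge v_1(X_j\setminus h)=v_1(Y_j\setminus h)$. If $Y_j\subseteq Z_1$, agent $1$ does not envy $Y_j$ at all.
\end{itemize}
If the while condition fails instead, the algorithm returns an allocation satisfying the invariant.

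\emph{The final swap.} Now suppose the if-condition holds. Agent $j$ (equal to $i^*$ whenever $i^*$ envies $Y_1\cup g$) receives $Y_1\cup g$, and agent $1$ receives $Y_j$; here $Y_j$ denotes the current bundle, i.e.\ $Y_{i^*}\setminus g$ if $j=i^*$.
\begin{itemize}
\item \emph{Envy towards $Y_1\cup g$.} By the invariant, each $k\notin\{1,j\}$ did not envy $Y_1$ before the transfer. So $v_k(Y_k)\ge v_k((Y_1\cup g)\setminus g)$, and EF1 holds towards this bundle.
\item \emph{Agent $i^*$, when $j\neq i^*$.} The priority rule guarantees that $i^*$ does not envy $Y_1\cup g$ at all. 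This is exactly the place where the priority given to $i^*$ is used.
\item \emph{Envy towards agent $1$'s new bundle $Y_j$.} Every $k\ge 2$ with $k\ne j$ had no EF1-envy towards $Y_j$ by \Cref{lem:ef1-inv}.
\item \emph{Agent $j$.} Its bundle strictly increased in value, so its previous EF1 guarantees towards the bundles of agents $k\ne 1,j$ persist. Towards agent $1$'s new bundle $Y_j$ (its own former bundle) it has no envy, since it strictly prefers $Y_1\cup g$.
\item \emph{Agent $1$.} By the first point of \Cref{lem:X1} and \Cref{obs:alg-inv}, $v_1(Y_j)=v_1(X_j)>v_1(X_1)$. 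Combining this with $\efx^+$ of $X$ as in the inductive step gives $v_1(Y_j)\ge v_1(X_k\setminus h)=v_1(Y_k\setminus h)$ for $h\in Y_k\setminus Z_1$. Agent $1$ does not envy $Y_1\cup g$, since $v_1(Y_1\cup g)=v_1(X_1)<v_1(Y_j)$.
\end{itemize}

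The main obstacle is this last step: the swap changes two bundles at once, and the case analysis has to use both the priority of $i^*$ and \Cref{lem:X1}, which is applicable because $Y_1$ was still not $\eefx$-feasible when the loop continued.
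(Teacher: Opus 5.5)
Your proof is correct and follows the paper's route. It uses EF1 of $X$ from \Cref{lem:efxp}, \Cref{lem:ef1-inv} for pairs among agents $2,\dots,n$, the fact that after the transfer only $i^*$ can EF1-envy $Y_1\cup g$, and the first point of \Cref{lem:X1} to settle agent~$1$ after the swap. You are more careful than the paper's terse argument: you make explicit the invariant that no agent $k>1$ envies $Y_1$, and you handle the case $j\neq i^*$ through the priority rule. Like the paper, you need \Cref{lem:X1} to apply, i.e.\ the loop body must run only while $Y_1$ is not EEFX-feasible for agent~$1$; this matches the paper's prose description of the loop, but the literal guard $Z_1\setminus Y_1\neq\emptyset$ does not guarantee it.
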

    By~\Cref{lem:efxp}, in the beginning of the algorithm $X$ is $\ef1$. After moving $g$ from $Y_{i^*}$ to $Y_1$, only agent ${i^*}$ can $\ef1$-envy agent $1$, but in that case agent ${i^*}$ receives $Y_1 \cup g$ and agent $1$ receives $Y_{i^*} \setminus g$ and both agents are better off (by the first point of~\Cref{lem:X1}). Also, by~\Cref{lem:ef1-inv}, ${i^*}$ does not $\efx^+$-envy any other bundle and hence does not $\ef1$-envy any other bundle either.    
\end{proof}

\section{Future Directions} \label{sec:conclusion}

We introduced a unifying share-based framework for fair division through the residual maximin share (RMMS), showing that it captures and strengthens several existing lone-divider techniques. Using RMMS, we obtained simple proofs for the existence of RMMS+EFX partial allocations and RMMS+EFL complete allocations. Several questions concerning RMMS remain open.

\begin{enumerate}
    \item Can the RMMS value be computed in pseudopolynomial time? For valuation classes beyond additive valuations, the answer may depend on the available query model.

    \item Can an RMMS-feasible $n$-partition be computed in pseudopolynomial time? This task is not necessarily harder than computing the RMMS value itself. For example, for MXS and additive valuations, an MXS-feasible $n$-partition can be computed in polynomial time by a simple modification of the LMMS algorithm~\cite{lipton2004approximately} combined with the $2/3$-approximation algorithm for MMS~\cite{barman2020approximation}. In contrast, computing the MXS value is weakly NP-hard~\cite{Caragiannis2023}, even for $n=2$.

    \item What is the complexity of computing an RMMS allocation? Observe that the algorithm underlying~\Cref{obs:RSFfeasible} (namely, Steps~1 and~4 of the algorithm used to prove~\Cref{obs:EFXRSF}) runs in polynomial time provided that both the RMMS value and an RMMS-feasible $n$-partition can be computed efficiently. Thus, under sufficiently strong query models, RMMS allocations can be found in polynomial time.
\end{enumerate}

We also demonstrated the versatility of the share-based approach by resolving the compatibility of EEFX and EF1 for additive valuations, establishing the existence of allocations satisfying the stronger combination of EEFX and EFL. Our proof relies on a new share notion, the \emph{strong EEFX share}, which may be of independent interest. Finally, we complemented our existential result with a polynomial-time algorithm for restricted additive valuations.

More broadly, our work highlights the power of share-based fairness notions as a unifying framework for understanding relaxations of EFX and suggests several directions for future research.

\begin{enumerate}
    \setcounter{enumi}{4}

    \item While EEFX and EF1 allocations are known to exist individually for monotone valuations, it remains open whether an allocation satisfying both notions always exists for any valuation class strictly more general than additive valuations.

    \item An analogous question can be posed in the setting of indivisible chores, where agents have non-positive utilities for items.

    \item Our proof of the existence of EEFX+EFL allocations is existential and yields only an exponential-time algorithm. Although EF1 and EEFX allocations can each be computed in polynomial time for additive valuations, the complexity of computing an allocation satisfying both guarantees simultaneously remains open.
\end{enumerate}

\section*{Acknowledgements}
Hannaneh Akrami was supported by a Minerva Fellowship of the Max Planck Society. Uriel Feige was supported in part by the Israel Science Foundation (grant No. 1122/22). Ryoga Mahara was partially supported by the joint project of Kyoto University and Toyota Motor Corporation, titled “Advanced Mathematical Science for Mobility Society”, by JST ERATO Grant Number JPMJER2301, and by JSPS KAKENHI Grant Number JP23K19956, Japan. 

\bibliographystyle{alpha}
\bibliography{ref}
\end{document}

%% file: macros.tex
\newcommand{\efx}{\text{EFX}}
\newcommand{\efxp}{\text{EFX}^+}
\newcommand{\eefx}{\text{EEFX}}
\newcommand{\ef}{\text{EF}}

\newcommand{\mms}{\text{MMS}}

\newcommand{\items}{\mathcal{M}}

\newcommand{\I}{\mathcal{I}}
\newcommand{\instance}{\mathcal{I} = (N,M,\{v_i\}_{i \in N})}
\newcommand{\rmms}{\text{RMMS}}

%% file: new-intro.tex
\section{Introduction}
Fair division studies how to allocate resources among agents in a fair manner. It is a fundamental problem at the intersection of computer science, economics, and social choice theory, with a rich history dating back to the seminal work of Steinhaus 
\cite{S48problem}. Beyond its theoretical significance, fairness considerations are central to the design of many social institutions and arise naturally in diverse allocation settings, with applications ranging from inheritance disputes, course allocation, cloud computing, and public resource allocation \cite{moulin2004fair}. While classical notions such as \emph{envy-freeness}~\cite{foley1966resource} can always be achieved for divisible resources \cite{Su1999rental, stromquist1980cut}, indivisible goods fundamentally change the picture: exact fairness is often impossible, even in very simple instances.\footnote{Consider an instance with two agents and a single item that both agents value positively.} This has led to a rich theory of fairness notions that relax envy-freeness while preserving as much of its appeal as possible.

In this work, we study the classic problem of \emph{fairly} allocating a set of $m$ indivisible items among a set of $n$ agents with \emph{monotone} valuations. Broadly speaking, two complementary approaches have emerged for studying fairness with indivisible goods. \emph{Envy-based} notions evaluate an allocation by comparing an agent’s bundle with those received by other agents. In contrast, \emph{share-based} notions compare an agent’s allocation against a threshold or benchmark. 
Both viewpoints capture natural aspects of fairness, yet neither is entirely satisfactory on its own. An allocation may satisfy an envy-based criterion while providing an agent with far less than what she would consider her fair share, or it may guarantee every agent an appropriate share while leaving substantial envy between agents. Consequently, an important direction in recent years has been to understand the relationship between these two paradigms and to identify allocations that satisfy both envy-based and share-based fairness guarantees simultaneously (see Section~\ref{sec:relate-work}).

\subsection{Share-based Notions and the Residual Maximin Share}
The share-based fairness evaluates an allocation against a threshold defined for each agent. 
The canonical benchmark is the \emph{maximin share (MMS)}, introduced by Budish~\cite{budish2011combinatorial}. Informally, the MMS value of an agent $i$, denoted by $\text{MMS}_i$,  
is the highest value she can guarantee by partitioning the goods into $n$ bundles and receiving the least valuable bundle. Although exact MMS allocations need not exist~\cite{KPW18}, MMS has become one of the central fairness benchmarks for indivisible goods, and a long line of work has established increasingly stronger approximation guarantees (see Section~\ref{sec:relate-work}). 

A common technique used to achieve share-based guarantees is the celebrated \emph{lone-divider} paradigm~\cite{AIGNERHOREV2022164,amanatidis2021maximum,Hummel2025,akrami2025matroidsequitable,EpistemicMonotone}. At a high level, lone-divider algorithms recursively allocate bundles to some agents while ensuring that every remaining agent can still partition the remaining goods into sufficiently valuable bundles. This simple idea has proved remarkably powerful, leading to simultaneous guarantees combining share-based and envy-based fairness \cite{AkramiRathi2025simultaneous}. Despite its success, the structural reason why lone-divider arguments work has remained unclear. This observation motivates the central question underlying our work:
\begin{quote}
  \emph{``What is the intrinsic property of a share notion that makes lone-divider arguments succeed?''}
\end{quote}
The difficulty in answering the above question is that existing share notions do not explicitly capture the invariant maintained by lone-divider algorithms.

Our first conceptual contribution identifies a canonical share notion that directly captures the idea of \emph{residual feasibility} required by lone-divider algorithms, through a new share notion, the \emph{Residual Maximin Share (RMMS)}. Informally, RMMS is the largest share that remains attainable throughout the execution of a lone-divider algorithm: even after previously allocated bundles of smaller value have been removed, every remaining agent can still certify a sufficiently good partition of the remaining goods.

 We show that RMMS satisfies two fundamental properties, even for general monotone valuations. First, it is \emph{feasible}: every fair division instance admits an allocation in which every agent receives at least her RMMS value. Second, it is \emph{self-maximizing}: an agent cannot increase her RMMS value by misreporting her valuation. These properties identify RMMS as the natural share underlying lone-divider arguments, rather than merely another fairness benchmark. In particular, RMMS can be achieved simultaneously with \emph{envy-freeness up to one less-preferred good (EFL)}, and it strictly strengthens several existing guarantees: it dominates the \emph{minimum EFX share (MXS)}~\cite{Caragiannis2023} for all monotone valuations and achieves the $2/3+\Omega(1/n)$-MMS guarantee for additive valuations (see Section~\ref{sec:pre} for the definitions of EFL and MXS).

This perspective yields a unified framework for analyzing lone-divider algorithms. Instead of establishing residual feasibility separately for MMS approximations or other share notions, one can reason directly about RMMS and derive these guarantees as corollaries. 

\color{black}
\subsection{Envy-based Notions}
The notion of \emph{envy-freeness (EF)} is the classical benchmark for fair division \cite{foley1966resource}. An allocation is \emph{envy-free} if every agent weakly prefers her own bundle to that of every other agent. 
Due to the non-existence of EF allocations even in the simplest instances with indivisible items, much of the recent literature has focused on identifying meaningful relaxations that preserve the spirit of envy-freeness while guaranteeing existence.

Among these relaxations, \emph{envy-freeness up to any good (EFX)}, introduced by Caragiannis et al.~\cite{caragiannis2019unreasonable}, has emerged as one of the central notions in the area. Informally, an allocation is EFX if any envy that an agent has towards another disappears after removing \emph{any} single item from the other agent’s bundle. Despite considerable effort over the past several years (see \Cref{sec:relate-work}), whether EFX allocations always exist for additive valuations remains one of the major open problems in fair division. On the other hand, recent impossibility results demonstrate that EFX allocations do not always exist for general monotone (indeed, submodular) valuations~\cite{akrami2026counterexample,mackenzie2026}, making it increasingly important to understand weaker fairness notions that remain achievable in broader domains.

Two relaxations of EFX have received particular attention. The first is envy-freeness up to one good (EF1)~\cite{budish2011combinatorial}, which requires that any envy that an agent has towards another disappears after removing \emph{some} appropriately chosen good from the other bundle. Unlike EFX, EF1 allocations always exist under monotone valuations and can be computed in polynomial time~\cite{lipton2004approximately}. A stronger variant of EF1 is \emph{envy-freeness up to one less-preferred good (EFL)}~\cite{barman2018groupwise}. EFL strengthens EF1 by additionally requiring that the removed good is not more valuable than the envying agent’s own bundle unless there is only one good in the envied bundle that is positively valued by the envious agent. For additive valuations, Barman et al. \cite{barman2018groupwise} established the existence of EFL allocations, and their arguments naturally extend to monotone valuations. 

A conceptually different relaxation of EFX is \emph{epistemic envy-freeness up to any good (EEFX)}, introduced by Caragiannis et al.~\cite{Caragiannis2023}. Informally, a bundle is EEFX-feasible for an agent if the remaining goods can be partitioned so that the resulting allocation would be EFX from that agent’s perspective. An allocation is EEFX if every agent receives an EEFX-feasible bundle. Caragiannis et al.~\cite{Caragiannis2023} proved that EEFX allocations always exist for additive valuations, while Akrami and Rathi \cite{EpistemicMonotone} extended this result to general monotone valuations.

While the existence of allocations satisfying EF1 or EEFX individually is known even for general monotone valuations \cite{lipton2004approximately,EpistemicMonotone}, the existence of an allocation satisfying both notions simultaneously remains open for all settings in which the existence of EFX is itself open. As suggested by Akrami and Rathi \cite{EpistemicMonotone}, the question
\begin{quote}
\emph{``Even for instances with additive valuations, does there always exist an allocation that is simultaneously EEFX and EF1?''}
\end{quote}
is therefore natural and intriguing. This question lies at the intersection of the two principal relaxations of EFX. On the one hand, every EFX allocation is simultaneously EF1 and EEFX. On the other hand, any counterexample to the compatibility of EF1 and EEFX would immediately imply that EFX allocations do not always exist. Hence, resolving this compatibility question provides new structural insight into the elusive EFX problem. We answer the above open compatibility question in \cite{EpistemicMonotone} affirmatively thereby providing new structural insight into the elusive EFX problem.  

\subsection{Technical Contribution}
The paper has two main themes, which we show to be deeply connected. The first is the introduction and study of the \emph{residual maximin share} (RMMS). We show that RMMS satisfies two fundamental properties that one would naturally expect from a share-based fairness notion, even for the broad class of monotone valuations: \emph{feasibility} and \emph{self-maximization}. These properties provide a conceptual justification for RMMS, beyond its role as the natural share underlying the lone-divider arguments. We further show that RMMS strengthens several existing share guarantees: in particular, we show that RMMS dominates the \emph{minimum EFX share (MXS)} for monotone valuations and achieves the $2/3+\Omega(1/n)$-MMS guarantee for additive valuations. In addition, we prove that RMMS is compatible with \emph{envy-freeness up to one less-preferred good (EFL)}: every instance admits an allocation satisfying both RMMS and EFL.

The second theme concerns the compatibility of two of the most prominent relaxations of EFX: EFL (and its weaker variant EF1) and EEFX. We show that these guarantees can be achieved simultaneously for additive valuations. To this end, we introduce a new share-based notion that implies EEFX-feasibility and show that it is upper bounded by RMMS. Consequently, every simultaneous guarantee established for RMMS also extends to EEFX. More broadly, this result establishes a connection between fairness notions that have largely been studied independently and yields, to the best of our knowledge, the strongest relaxation of EFX currently known.

Finally, concerning the second theme, we develop a polynomial-time algorithm for computing allocations that simultaneously satisfy EEFX and EF1 under restricted additive valuations. In contrast to our existence results, this algorithm does not rely on the lone-divider framework. Instead, it exploits the structural properties of restricted additive valuations to efficiently construct an EEFX+EF1 allocation.
\color{black}
\paragraph{Residual Maximin Share.} In this work, we first formalize the argument  of Akrami and Rathi \cite{AkramiRathi2025simultaneous} (described in Section~\ref{sec:RMMS}) more generally and provide necessary and sufficient conditions for the lone-divider algorithm to work. In particular, for a given fair division instance, we define the \emph{residual maximin share} $\rmms_i$ as the maximum value for which agent $i$ can act as the lone divider (see~\Cref{sec:pre} for a formal definition). We obtain $\text{MXS}_i\leq\rmms_i \leq \text{MMS}_i$, where the first inequality is discussed in~\Cref{obs:MXS}, while the second inequality follows by definition. An allocation is RMMS if all agents value their bundle at least as much as their RMMS value. Building on this, the extension suggested by Akrami and Rathi \cite{AkramiRathi2025simultaneous} proves the existence of partial allocations that are EFX and RMMS and complete allocations that are EF1 and RMMS. Moreover, we observe that EF1 can be strengthened to EFL, yielding a slightly stronger guarantee.

\begin{restatable}{proposition}{propFeige}\label{prop:Feige}
    For every fair division instance with monotone valuations,
    \begin{enumerate}
        \item there exists a partial allocation that is simultaneously RMMS and EFX, and
        \item there exists a complete allocation that is simultaneously RMMS and EFL (and hence EF1).
    \end{enumerate}    
\end{restatable}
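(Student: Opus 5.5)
The plan is to derive both parts from results already in the paper: the first part from \Cref{obs:EFXRSF}, and the second by completing that partial allocation with \Cref{lem:complete}. The only genuine work is checking that the ingredients fit together under monotone valuations.

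\textbf{Part 1.} This is exactly \Cref{obs:EFXRSF}. The argument is the lone-divider algorithm run with RMMS as the threshold, with extra care in how bundles are chosen.
\begin{itemize}
\item In each round, a poor agent $i$ partitions the free items $F_r$ into $n_r$ bundles, each of value at least $\rmms(M,v_i,n)$. Such a partition exists by residual self-feasibility (\Cref{def:rmms}): every bundle allocated earlier either went to a wealthy agent, or was removed while $i$ did not desire it, so each has value strictly less than $i$'s RMMS.
\item Each bundle is trimmed to a minimal subset still desired by some poor agent.
\item A wealthy agent may swap her bundle for a strictly better minimal subset.
\item Otherwise, an envy-free matching (via Hall's theorem) allocates some trimmed bundles.
\end{itemize}
Termination follows because every wealthy swap strictly increases that agent's value, and she has at most $2^m$ possible bundles. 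Each non-poor round makes at least one poor agent wealthy.

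Minimality gives EFX: no agent envies any allocated bundle after removing a single item. I would point out one subtlety. If some agent's RMMS is $0$, singleton bundles need separate handling, but removing the only item from a singleton leaves the empty set, so they are harmless.

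\textbf{Part 2.} Take the partial allocation $\mathcal P$ from Part 1. Since EFX implies EFL, $\mathcal P$ is EFL, and \Cref{lem:complete} applies. It yields a complete EFL allocation $\mathcal A$ with $v_i(A_i)\ge v_i(P_i)\ge \rmms(M,v_i,n)$ for every agent $i$, so $\mathcal A$ is RMMS. Finally, EFL implies EF1.

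The step I expect to be the main obstacle is inside \Cref{lem:complete}: the free items $P_0$ may be envied. I would handle this with the pre-processing step already used there. Whenever an agent strictly prefers a single free item to her own bundle, she takes that item and releases her bundle. This step must be checked to preserve three things:
\begin{itemize}
\item \textbf{EFL.} The new bundle is a singleton, which trivially satisfies the EFL condition from the other agents' point of view.
\item \textbf{Monotone improvement.} Values only increase, so the RMMS guarantee is kept.
\item \textbf{Termination.} This also follows from the values only increasing.
\end{itemize}
After pre-processing, I would run the LMMS algorithm and invoke \Cref{lem:LMMS}. For a bundle $A_j$ that received new items, let $g_j$ be the last item added. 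Then $v_i(A_i)\ge v_i(A_j\setminus g_j)$ by \Cref{lem:LMMS}, and $v_i(A_i)\ge v_i(g_j)$ by the pre-processing. Together these give the EFL condition for $A_j$. Bundles that received no new items inherit EFL from $\mathcal P'$.
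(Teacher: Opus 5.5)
Your proposal is correct and follows the paper's proof: Part 1 is exactly \Cref{obs:EFXRSF}, and Part 2 is \Cref{lem:complete} applied to that partial allocation using EFX $\Rightarrow$ EFL, via the same pre-processing plus LMMS argument you describe. One small point: when checking that pre-processing preserves EFL, you covered only the other agents' view of the new singleton. You should also note that the swapping agent keeps her own EFL conditions, because her value only increases and her released bundle leaves the comparison set.
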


This proposition unifies and simplifies several previous lone-divider analyses. In particular, it shows that the correctness of these algorithms depends only on residual self-feasibility, rather than on the specific share notion being used.

In \Cref{sec:RMMS}, we establish several useful properties of RMMS, showing that it is a well-motivated share-based notion in its own right. We further place RMMS in the broader landscape of share guarantees by comparing it to several prominent notions from the literature like MMS and its approximations and MXS.

\paragraph{Existence of EEFX+EFL Allocations for Additive Valuations.} 
 Our second contribution demonstrates the power of the RMMS framework by applying it to the compatibility of fairness notions related to the long-standing EFX problem. Although both EEFX and EF1 are individually guaranteed to exist, whether they can always be achieved simultaneously has remained open in every setting where the existence of EFX itself is unresolved. 

 At first glance, a direct application of the RMMS framework technique might not seem helpful in proving EF1 and EEFX allocations, since EEFX is not a share-based notion. 
It seems not so clear how exactly  one should categorize EEFX. 
Unlike standard envy-based notions, EEFX feasibility of a bundle is independent of \emph{how} the remaining items are allocated, a feature typically associated with share-based concepts. On the other hand, unlike classical share-based notions, EEFX feasibility is not determined solely by the \emph{value} of a bundle. In particular, there may exist bundles $A$ and $B$ such that agent $i$ values $A$ more than $B$, yet $B$ is EEFX feasible for $i$ while $A$ is not (see \Cref{example}). This shows that while MXS can be guaranteed by the lone divider approach and is compatible with EF1, it does not imply that the final allocation is EEFX.

To overcome this obstacle, we introduce another share notion, the strong EEFX share, which captures the threshold beyond which EEFX feasibility becomes monotone. Our main structural theorem proves that for additive valuations, the strong EEFX share is always upper bounded by RMMS. This bridge allows us to leverage the RMMS framework to establish the following existence theorem.

\begin{restatable}{theorem}{thmMain}\label{thm:main}
    Every fair division instance with additive valuations admits an allocation that is simultaneously EEFX and EFL.
\end{restatable}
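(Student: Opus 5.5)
The plan is to reduce \Cref{thm:main} to results already established, in three steps. First, by \Cref{lem:non-degenrate} (whose argument carries over verbatim with EFL in place of EF1), we may assume the instance is non-degenerate. The perturbation $\bv_i$ preserves strict preferences between bundles. Hence EFL conditions transfer back: a strict inequality $\bv_i(X_i) \ge \bv_i(X_j\setminus\{g\})$ yields the weak version under $v_i$. The same holds for $\bv_i(X_i)\ge \bv_i(\{g\})$. The positive-value count in condition 1 of EFL could change, since zero-valued items become slightly positive. But zero-valued items under $v_i$ have perturbed value $<1$, and integrality keeps the relevant comparisons intact. Some care is needed here; I would check it explicitly.

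Second, in the non-degenerate instance, I would invoke \Cref{thm:self-feasible}: for every agent $i$, the strong EEFX share satisfies $\theta_i \le \rmms(M,v_i,n)$. This is the substantive step. It rests on \Cref{lem:k-is-0} for $k=0$ and on \Cref{lem:lone-divider} (via \Cref{lem1} and \Cref{lem2}) for $0<k<n$. In both cases the argument is that a failure of residual self-feasibility would yield an EEFX certificate for $T$, the largest-valued EEFX-infeasible bundle, which is a contradiction. Since the theorem is already proved, I simply cite it.

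Third, I would apply part 2 of \Cref{prop:Feige} to obtain a complete allocation $A$ that is EFL and in which $v_i(A_i)\ge \rmms(M,v_i,n)\ge \theta_i$ for every $i$. Each $A_i$ is then strongly EEFX-feasible by \Cref{def:stronglyEEFX}, and hence EEFX-feasible by \Cref{obs:eefxShareGivesEEFX}. So $A$ is EEFX and EFL.

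The main obstacle is conceptually \Cref{thm:self-feasible}, but that is already available. In this proof the only delicate point is making sure the non-degeneracy reduction preserves EFL, not just EF1. If condition 1 of EFL causes trouble, the fallback is to choose the perturbation so that items with $v_i(g)=0$ stay at value $0$. Concretely, add $\varepsilon^j$ only to items with positive value. Distinctness then fails only between sets differing in zero-valued items, and those ties are irrelevant for EEFX and EFL. Alternatively, one can redo the arguments of \Cref{sec:eefx-efl} under a fixed tie-breaking order.
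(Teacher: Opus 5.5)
Your proposal is correct and matches the paper's proof step for step: reduce to non-degenerate instances via \Cref{lem:non-degenrate}, invoke \Cref{thm:self-feasible} to get $\theta_i \le \rmms(M,v_i,n)$, and conclude with part 2 of \Cref{prop:Feige} together with \Cref{obs:eefxShareGivesEEFX}. The EFL transfer you flag (which the paper leaves implicit, since \Cref{lem:non-degenrate} is stated only for EF1) is immediate: every item has strictly positive $\bv_i$-value, so condition 1 under $\bv_i$ forces $|X_j| \le 1$ and hence condition 1 under $v_i$, while condition 2 transfers by the same integrality argument as for EF1. You should drop the fallback of perturbing only positive-valued items, because it gives up exactly the non-degeneracy that the proofs of \Cref{lem:k-is-0}, \Cref{lem1} and \Cref{lem:lone-divider} invoke.
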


Since every EFL allocation is also EF1, this immediately implies the existence of allocations satisfying both EEFX and EF1, thereby resolving the open problem of Akrami and Rathi \cite{EpistemicMonotone}. We emphasize that our proof is existential and naturally translates into an exponential-time algorithm. Designing a polynomial-time algorithm for computing EEFX+EFL allocations remains open for additive valuations. 

To prove the above theorem, we introduce a new share-based notion, termed as \emph{strong EEFX share}, which may be of independent interest. While EEFX itself is not a share-based notion, the strong EEFX share captures the threshold beyond which EEFX feasibility becomes \emph{monotone}. For an agent~$i$, we define her \emph{strong EEFX share}, denoted by $\theta_i$, as the minimum threshold value such that every bundle with value at least this threshold is EEFX feasible for $i$. Since MMS implies EEFX feasibility \cite{Caragiannis2023}, the strong EEFX share is upper-bounded by MMS value. By definition of MXS, we also know that strong EEFX share is lower-bounded by MXS value. Crucially, we show that the strong EEFX share has the necessary conditions for the lone divider approach to work, i.e., the strong EEFX share is at most the RMMS value for an agent (see~\Cref{thm:self-feasible}). 
Overall, for an agent $i$, we establish the following relation between her strong EEFX share and other share-based fair guarantees, 
$$\text{MXS}_i \leq \theta_i \leq \rmms_i \leq \text{MMS}_i.$$

Figure~\ref{fig:0} depicts the relations between RMMS and strong EEFX share with other known fairness notions.

\begin{figure}[t]
    \centering
\includegraphics[width=0.90\linewidth]{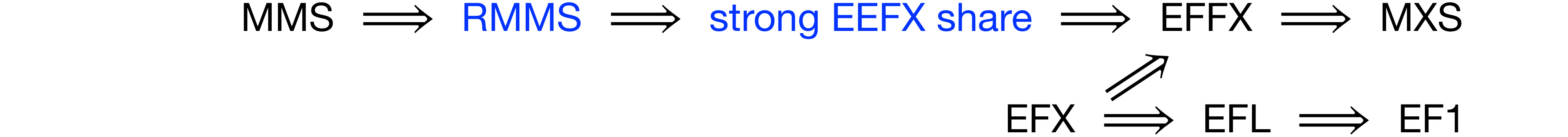}
    \caption{Relationships among fairness notions under additive valuations.}
    \label{fig:0}
\end{figure}

Using~\Cref{thm:self-feasible}, we obtain the existence of a partial EFX allocation, which can be extended to a complete EFL allocation, wherein every agent receives a bundle whose value is at least their strong EEFX share. By definition, this implies that the allocation is EEFX, thereby establishing the existence of allocations that are simultaneously EEFX and EFL (and hence, EF1) for additive valuations. 

Furthermore, \Cref{example} shows that the strong EEFX share is strictly stronger than the MXS share, thereby strengthening the result of \cite{Ashuri2025}, who established the existence of EFL allocations guaranteeing every agent their MXS share.

\paragraph{Efficient Computation for Restricted Additive Valuations.} A natural question is whether allocations satisfying both EEFX and EFL can be computed in polynomial time. While our RMMS-based framework establishes existence, it does not yield a polynomial-time algorithm. Our third contribution complements this existential result with an efficient algorithm for an important special class of valuations - \emph{restricted} additive valuations.

\begin{restatable}{theorem}{thmThree}\label{thm:res}
    There exists an $O(m^2)$-time algorithm that computes an allocation satisfying both EEFX and EF1 for restricted additive valuations.
\end{restatable}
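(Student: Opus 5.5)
Note on placement: \Cref{thm:res} appears in the introduction, but its proof draws on the machinery of \Cref{sec:res-add}. The plan is to analyze \Cref{alg:eefx+ef1:res-add} directly, establishing running time and correctness separately.

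\textbf{Running time.} The corner case $m\le n$ hands each item to a distinct agent. The resulting allocation is EFX, hence EF1 and EEFX, and costs $O(n)$.

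Otherwise, \Cref{alg:efxp:res-add} runs in $O(m\log m+nm)$ by \Cref{lem:efxp}. Each while-iteration moves one item of $Z_1$ into $Y_1$, so there are at most $|Z_1|\le m$ iterations. Each iteration costs $O(m)$ to find $(i^*,g)$ and $O(n)$ for the envy checks. Since $m>n$, the total is $O(m^2)$.

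\textbf{Agent $1$.} I would split into two cases.
\begin{itemize}
\item If Line~\ref{line} never fires, the loop exits with $Z_1\subseteq Y_1$. Then $Y_1\supseteq X_1\cup Z_1$. I would apply point~4 of \Cref{lem:X1}, taking $Z'_1$ to be all of $Z_1$ and the first $Y_1$ that fails to be EEFX-feasible. Alternatively, I would argue directly that $(Y_2\setminus Z_1,\dots)$ certifies $Y_1$, using \Cref{lem:efxp}.
\item If a swap occurs, agent $1$ receives some $Y_j$ with $X_j\setminus Z_1\subseteq Y_j$ (items outside $Z_1$ never move). By point~2 of \Cref{lem:X1}, $X_j\setminus Z_1$ is EEFX-feasible for agent~$1$. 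This is valid because the hypothesis that the current $Y_1$ is infeasible is exactly why the loop continued.
\end{itemize}
Here I must check that EEFX-feasibility is preserved under adding items that are zero-valued for agent~$1$. This holds: removing those items from certificate bundles only helps.

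\textbf{Agents $i>1$.} Initially each $X_i$ is EEFX-feasible by \Cref{obs:eefx-feas}. Before any swap, only $Y_{i^*}$ changes, losing $g$. \Cref{lem:ef1-inv} keeps $(Y_2,\dots,Y_n)$ $\efx^+$ among agents $2,\dots,n$. Since no agent $j>1$ envies $Y_1$ at that point, $v_{i^*}(Y_{i^*})\ge v_{i^*}(Y_1)$, and \Cref{lem:eefx-feas} (equivalently \Cref{lemma:eefx-inv}) gives feasibility. When a swap happens, the recipient $j$ envies $Y_1\cup g$ (with priority to $i^*$), so \Cref{lemma:eefx-inv} applies to $j$. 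If $j\ne i^*$, then $i^*$ does not envy $Y_1\cup g$, and the first case of \Cref{lemma:eefx-inv} handles $i^*$.

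\textbf{EF1.} I would keep the invariant that $Y$ is EF1.
\begin{itemize}
\item Agents $2,\dots,n$ among themselves are covered by \Cref{lem:ef1-inv}.
\item Agent~$1$ toward others: $v_1$ of the other bundles is unchanged (\Cref{obs:alg-inv}), while $Y_1$ only grows.
\item Others toward agent~$1$: after adding $g$, any agent who envies $Y_1\cup g$ triggers the swap. The swapped-in agent is then envy-free toward bundle $1$ and not EF1-envious of others.
\item The new holder of the bundle from the swap: agent~$1$ holds $Y_j$, which by point~1 of \Cref{lem:X1} it values above its old bundle, so it is EF1 toward everyone. The other agents $k$ did not envy $Y_1\cup g$, and $v_k(Y_j)\ge$ what they previously compared against.
\end{itemize}

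\textbf{Main obstacle.} The delicate step is this post-swap EF1/EEFX bookkeeping when $j\ne i^*$. Agent $i^*$ holds $Y_{i^*}\setminus g$, and others must not EF1-envy agent~$1$'s new bundle $Y_j$. Agent $j$ itself left $Y_j$ and moved to the larger bundle. For every other agent $k$ comparing against $Y_j$, I would use \Cref{lem:ef1-inv}: $k$ did not $\efx^+$-envy $Y_j$ before the swap and $Y_j$ did not change, so EF1 persists.
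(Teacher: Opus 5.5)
Your outline (running time, EEFX for agent $1$, EEFX for agents $i>1$, then EF1) matches the paper's proof and uses the same lemmas. However, two steps do not hold as you wrote them.

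\textbf{Agent $1$ in the swap case.} You justify applying \Cref{lem:X1} by saying that the current $Y_1$ being infeasible ``is exactly why the loop continued.'' The guard of \Cref{alg:eefx+ef1:res-add} is $Z_1\setminus Y_1\neq\emptyset$, which is the reading you yourself use in the no-swap case. So this hypothesis is not guaranteed, and the conclusion can actually fail. Take two agents and items $a,z,b$ with $u$-values $5,4.5,1$, where agent $1$ values $a,b$ and agent $2$ values $z,b$. \Cref{alg:efxp:res-add} outputs $X_1=\{a\}$ and $X_2=\{z,b\}$. The loop moves $z$, agent $2$ envies $\{a,z\}$, and after the swap agent $1$ holds $\{b\}$. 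This bundle is not EEFX-feasible, because $v_1(\{a,z\}\setminus\{z\})=5>1$.

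The repair is to note that everything you use from \Cref{lem:X1} (points 1--3, and through them \Cref{lem:ef1-inv}) follows from the single condition $v_1(X_i)>v_1(X_1)$ for all $i>1$. This condition depends only on $X$, since $v_1(Y_i)=v_1(X_i)$ throughout. So check it first. If $v_1(X_{i'})\le v_1(X_1)$ for some $i'>1$, return $X$: it is EEFX by \Cref{lem:eefx-feas} and \Cref{obs:eefx-feas}, and EF1 by \Cref{lem:efxp}. Otherwise the hypothesis you need holds at every iteration. The paper's proof makes the same implicit assumption when it says the loop stops once $Y_1$ is EEFX-feasible, so you should state the precheck explicitly rather than inherit the gap.

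\textbf{EF1 after a swap.} You assert that ``the other agents $k$ did not envy $Y_1\cup g$.'' This is false in general: several agents may envy $Y_1\cup g$, and only one of them receives it. Your argument also never explains why the algorithm gives priority to $i^*$. The missing observation, which the paper records as ``only agent $i^*$ can EF1-envy agent $1$,'' runs as follows.
\begin{enumerate}
\item At the start of every iteration, no agent $k>1$ envies $Y_1$. Initially this holds by \Cref{obs:u=v}, and afterwards because the swap test failed.
\item Every $k\neq i^*$ keeps its bundle, so $v_k(Y_k)\ge v_k\bigl((Y_1\cup g)\setminus g\bigr)$. Hence $k$ is EF1 toward $Y_1\cup g$, no matter who ends up holding it.
\item Only $i^*$, whose bundle just shrank, can EF1-envy $Y_1\cup g$, and exactly in that case the algorithm hands that bundle to $i^*$.
\end{enumerate}
With this in place, your remaining checks complete the EF1 claim: agent $1$ gains by point 1 of \Cref{lem:X1}, and comparisons against the unchanged $Y_j$ go through \Cref{lem:ef1-inv}.
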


Interestingly, this algorithm is conceptually unrelated to our existential proof. Rather than relying on the lone-divider paradigm, it exploits structural properties specific to restricted additive valuations to iteratively modify an initial allocation while preserving EF1 and progressively establishing EEFX feasibility. 

\paragraph{Novelty and significance.} Taken together, our results illustrate the broader usefulness of share-based reasoning in understanding envy-based fairness. 
At a conceptual level, the notion of RMMS identifies the residual feasibility invariant underlying lone-divider algorithms and provides a reusable framework for analyzing recursive fair allocation techniques. At the technical level, our framework resolves a natural open problem and proves the existence of EEFX+EF1 allocations for additive valuations and develops a polynomial-time algorithm for restricted additive valuations. 
We believe that both RMMS and the strong EEFX share may prove useful beyond the specific applications considered in this paper.

\subsection{Further Related Work}\label{sec:relate-work}
After \cite{KPW18} showed that MMS allocations are not guaranteed to exist, a sequence of works established the existence of $\alpha$-MMS allocations for additive valuations \cite{amanatidis2017approximation,ghodsi2018fair,barman2020approximation,garg2020improved,FST21,simple,akrami2024breaking,heidari2025improved1013}, culminating in the current state of the art of $\alpha = 7/9$ \cite{huang2025fptas79}.

Beyond MMS, several share-based fairness notions have been studied in discrete fair division. Proportionality up to one item (PROP1) \cite{conitzer2017fair} can be guaranteed simultaneously with Pareto optimality \citep{barman2019proximity}, whereas proportionality up to any item (PROPx) is known not to be guaranteed to exist in the goods setting \cite{aziz2020polynomial}.

The existence of EFX allocations is known only in very restricted settings; e.g., for a limited number of agents \cite{plaut2020almost,chaudhury2020efx,berger2021almost,AACGMM25}, limited number of items \cite{amanatidis2020multiple,mahara2024extension}, and for special valuation functions \cite{plaut2020almost,halpern2020fair,amanatidis2021maximum}. Very recently, \cite{akrami2026counterexample} constructed a counterexample demonstrating the non-existence of EFX allocations for submodular valuations when $n \geq 3$ agents and $m \geq n+5$ goods, by formulating the problem as a satisfiability instance. Subsequently, \cite{mackenzie2026} gave a human-verifiable counterexample in the same regime.

Analogous to $\alpha$-MMS, one can define $\alpha$-EFX, which requires that
$v_i(X_i) \ge \alpha \cdot v_i(X_j \setminus {g})$ for all agents $i,j$ and all items $g \in X_j$. The existence of $\alpha$-EFX allocations has been established in a number of works \cite{chaudhury2021little,amanatidis2020multiple,chan2019maximin,farhadi2021almost}. While the existence of EFX allocations for four agents with additive valuations remains open, \cite{VishwaEFX} proved the existence of EFX allocations for three classes of valuations. Furthermore, EF2X—a relaxation of EFX requiring that the removal of any two items from an envied bundle eliminates any envy—is guaranteed to exist for restricted additive valuations \cite{ef2x-restricted} and for four agents with cancelable valuations, a class that strictly generalizes additive valuations \cite{ef2x-4-agents}. 

Initiated by Amanatidis et al. \cite{amanatidis2020multiple}, subsequent work has studied the compatibility of various share-based notions with different envy-based notions \cite{chaudhury2021little,AkramiRathi2025simultaneous}.

For a comprehensive discussion of these results and related literature, we refer the reader to the survey by~\cite{survey2022}. A closely related and extensively studied line of work concerns the allocation of \emph{chores} rather than goods; see~\cite{guo2023survey} for a comprehensive overview.

\subsection{Organization}
In Section~\ref{sec:pre}, we introduce the fair division model, relevant notions of fairness, and introduce key new concepts of residual self-feasibility, self-maximization, RMMS, and strong EEFX share. Section~\ref{sec:lone} describes the lone-divider approach. Section~\ref{sec:RMMS} proves the important properties of RMMS for monotone valuations, establishes Proposition~\ref{prop:Feige}, and discusses the implications of RMMS to other fairness notions. It then discusses the computational aspects of RMMS. Section~\ref{sec:eefx-efl} shows the existence of allocations that are simultaneously EEFX and EFL for additive valuations (Theorem~\ref{thm:main}). Section~\ref{sec:res-add} develops a polynomial-time algorithm for finding EEFX+EF1 allocations for restricted additive valuations (Theorem~\ref{thm:res}). Finally, Section~\ref{sec:conclusion} discusses the conclusions and directions for future work.

%% file: ref.bib
@article{F25,
  title={The residual maximin share},
  author={Feige, Uriel},
  journal={arXiv preprint arXiv:2505.19961},
  year={2025}
}

@article{ABM18,
  title={Comparing approximate relaxations of envy-freeness},
  author={Amanatidis, Georgios and Birmpas, Georgios and Markakis, Evangelos},
  journal={arXiv preprint arXiv:1806.03114},
  year={2018}
}

@article{barman2020approximation,
  title={Approximation algorithms for maximin fair division},
  author={Barman, Siddharth and Krishnamurthy, Sanath Kumar},
  journal={ACM Transactions on Economics and Computation (TEAC)},
  volume={8},
  number={1},
  pages={1--28},
  year={2020},
  publisher={ACM New York, NY, USA}
}

@article{bu2024fair,
  title={Fair division of indivisible goods with comparison-based queries},
  author={Bu, Xiaolin and Li, Zihao and Liu, Shengxin and Song, Jiaxin and Tao, Biaoshuai},
  journal={arXiv e-prints},
  pages={arXiv--2404},
  year={2024}
}

@article{BUF23,
  title={On fair allocation of indivisible goods to submodular agents},
  author={Uziahu, Gilad Ben and Feige, Uriel},
  journal={arXiv preprint arXiv:2303.12444},
  year={2023}
}

@article{BF25,
  title={Fair shares: Feasibility, domination, and incentives},
  author={Babaioff, Moshe and Feige, Uriel},
  journal={Mathematics of Operations Research},
  volume={50},
  number={3},
  pages={1901--1934},
  year={2025},
  publisher={INFORMS}
}

@article{plaut2020almost,
  title={Almost envy-freeness with general valuations},
  author={Plaut, Benjamin and Roughgarden, Tim},
  journal={SIAM Journal on Discrete Mathematics},
  volume={34},
  number={2},
  pages={1039--1068},
  year={2020},
  publisher={SIAM}
}

@article{survey2022,
title = {Fair division of indivisible goods: Recent progress and open questions},
journal = {Artificial Intelligence},
volume = {322},
pages = {103965},
year = {2023},
author = {Georgios Amanatidis and Haris Aziz and Georgios Birmpas and Aris Filos-Ratsikas and Bo Li and Hervé Moulin and Alexandros A. Voudouris and Xiaowei Wu}
}

@article{budish2011combinatorial,
  title={The Combinatorial Assignment Problem: Approximate Competitive Equilibrium from Equal Incomes},
  author={Eric Budish},
  journal={Journal of Political Economy},
  volume={119},
  number={6},
  pages={1061--1103},
  year={2011},
  publisher={University of Chicago Press}
}

@book{foley1966resource,
  title={Resource allocation and the public sector},
  author={Duncan Karl Foley},
  year={1966}, 
  volume={7:45-98},
  publisher={Yale Economic Essays},
  address = {New Haven, CT, USA}
}

@article{guo2023survey,
  title={A survey on fair allocation of chores},
  author={Hao Guo and Weidong Li and Bin Deng},
  journal={Mathematics},
  volume={11},
  number={16},
  pages={3616},
  year={2023},
  publisher={MDPI}
}

@inproceedings{halpern2020fair,
  author       = {Daniel Halpern and
                  Ariel D. Procaccia and
                  Alexandros Psomas and
                  Nisarg Shah},
  title        = {Fair Division with Binary Valuations: One Rule to Rule Them All},
  booktitle    = {Web and Internet Economics - 16th International Conference, {WINE}},
  series       = {Lecture Notes in Computer Science},
  volume       = {12495},
  pages        = {370--383},
  publisher    = {Springer},
year = {2020},
address = {Beijing, China}
}

@inproceedings{lipton2004approximately,
  title={On approximately fair allocations of indivisible goods},
  author={Richard J. Lipton and
                  Evangelos Markakis and
                  Elchanan Mossel and
                  Amin Saberi},
  publisher = {{ACM}},
  booktitle = {Proceedings of the 5th {ACM} Conference on Electronic Commerce ({EC})},
  pages={125--131},
  year={2004},
  address = {New York, NY, USA}
}

@book{moulin2004fair,
  author       = {Herv{\'{e}} Moulin},
  title        = {Fair division and collective welfare},
  publisher    = {{MIT} Press},
  year         = {2003},
  address = {Cambridge, MA, USA}
}

@article{stromquist1980cut,
  title={How to Cut a Cake Fairly},
  author={Walter Stromquist},
  journal={The American Mathematical Monthly},
  volume={87},
  number={8},
  pages={640--644},
  year={1980}
}

@article{Su1999rental,
  title={Rental Harmony: Sperner's Lemma in Fair Division},
  author={Francis Edward Su},
  journal={The American Mathematical Monthly},
  volume={106},
  number={10},
  pages={930--942},
  year={1999}
}

@inproceedings{Caragiannis2023,
  title     = {New Fairness Concepts for Allocating Indivisible Items},
  author    = {Caragiannis, Ioannis and Garg, Jugal and Rathi, Nidhi and Sharma, Eklavya and Varricchio, Giovanna},
  booktitle = {Proceedings of the Thirty-Second International Joint Conference on
               Artificial Intelligence, {IJCAI-23}},
  publisher = {International Joint Conferences on Artificial Intelligence Organization},
  pages     = {2554--2562},
  year      = {2023}
}

@inproceedings{EpistemicMonotone,
  author       = {Hannaneh Akrami and
                  Nidhi Rathi},
  title        = {Epistemic {EFX} Allocations Exist for Monotone Valuations},
  booktitle    = {AAAI-25, Sponsored by the Association for the Advancement of Artificial
                  Intelligence},
  pages        = {13520--13528},
  publisher    = {{AAAI} Press},
  year         = {2025}
}

@article{chaudhury2024efx,
  title={{EFX} exists for three agents},
  author={Chaudhury, Bhaskar Ray and Garg, Jugal and Mehlhorn, Kurt},
  journal={Journal of the ACM},
  volume={71},
  number={1},
  pages={1--27},
  year={2024},
  publisher={ACM New York, NY}
}

@inproceedings{Ashuri2025,
  author       = {Arash Ashuri and
                  Vasilis Gkatzelis},
  editor       = {Itai Ashlagi and
                  Aaron Roth},
  title        = {Simultaneously Satisfying {MXS} and {EFL}},
  booktitle    = {Proceedings of the 26th {ACM} Conference on Economics and Computation,
                  {EC} 2025, Stanford University, Stanford, CA, USA, July 7-10, 2025},
  pages        = {689--718},
  publisher    = {{ACM}},
  year         = {2025},
  url          = {https://doi.org/10.1145/3736252.3742613}
}

@inproceedings{AkramiRathi2025simultaneous,
author = {Akrami, Hannaneh and Rathi, Nidhi},
title = {Achieving maximin share and {EFX}/{EF1} guarantees simultaneously},
year = {2025},
isbn = {978-1-57735-897-8},
publisher = {AAAI Press},
url = {https://doi.org/10.1609/aaai.v39i13.33477},
doi = {10.1609/aaai.v39i13.33477},
booktitle = {Proceedings of the Thirty-Ninth AAAI Conference on Artificial Intelligence and Thirty-Seventh Conference on Innovative Applications of Artificial Intelligence and Fifteenth Symposium on Educational Advances in Artificial Intelligence},
articleno = {1504},
numpages = {9},
}

@article{mahara2024extension,
  title={Extension of additive valuations to general valuations on the existence of {EFX}},
  author={Mahara, Ryoga},
  journal={Mathematics of operations research},
  volume={49},
  number={2},
  pages={1263--1277},
  year={2024},
  publisher={INFORMS}
}

@article{S48problem,
  title={{The Problem of Fair Division}},
  author={Steinhaus, Hugo},
  journal={Econometrica},
  volume={16},
  pages={101--104},
  year={1948}
}

@article{AACGMM25,
  author       = {Hannaneh Akrami and
                  Noga Alon and
                  Bhaskar Ray Chaudhury and
                  Jugal Garg and
                  Kurt Mehlhorn and
                  Ruta Mehta},
  title        = {{EFX:} {A} Simpler Approach and an (Almost) Optimal Guarantee via
                  Rainbow Cycle Number},
  journal      = {Oper. Res.},
  volume       = {73},
  number       = {2},
  pages        = {738--751},
  year         = {2025},
  url          = {https://doi.org/10.1287/opre.2023.0433},
  doi          = {10.1287/OPRE.2023.0433},
  bibsource    = {dblp computer science bibliography, https://dblp.org}
}

@inproceedings{berger2021almost,
  title={Almost Full {EFX} Exists for Four Agents},
  author={Berger, Ben and Cohen, Avi and Feldman, Michal and Fiat, Amos},
  booktitle={Proceedings of the 36th AAAI Conference on Artificial Intelligence ({AAAI})},
  volume={36(5)},
  year={2022},
  pages={4826-4833}
}

@article{amanatidis2020multiple,
  title={Multiple birds with one stone: Beating 1/2 for {EFX} and {GMMS} via envy cycle elimination},
  author={Amanatidis, Georgios and Markakis, Evangelos and Ntokos, Apostolos},
  journal={Theoretical Computer Science},
  volume={841},
  pages={94--109},
  year={2020},
  publisher={Elsevier}
}

@article{amanatidis2021maximum,
  title={Maximum {Nash} welfare and other Stories about {EFX}},
  author={Amanatidis, Georgios and Birmpas, Georgios and Filos-Ratsikas, Aris and Hollender, Alexandros and Voudouris, Alexandros A},
  journal={Theoretical Computer Science},
  volume={863},
  pages={69--85},
  year={2021},
  publisher={Elsevier}
}

@article{AIGNERHOREV2022164,
title = {Envy-free matchings in bipartite graphs and their applications to fair division},
journal = {Information Sciences},
volume = {587},
pages = {164-187},
year = {2022},
issn = {0020-0255},
doi = {https://doi.org/10.1016/j.ins.2021.11.059},
url = {https://www.sciencedirect.com/science/article/pii/S0020025521011816},
author = {Elad Aigner-Horev and Erel Segal-Halevi},
}

@article{Hummel2025,
author = {Hummel, Halvard},
title = {Maximin Shares in Hereditary Set Systems},
year = {2025},
issue_date = {September 2025},
publisher = {Association for Computing Machinery},
address = {New York, NY, USA},
volume = {13},
number = {3},
issn = {2167-8375},
url = {https://doi.org/10.1145/3727149},
doi = {10.1145/3727149},
journal = {ACM Trans. Econ. Comput.},
month = jun,
articleno = {12},
numpages = {33}
}

@article{chaudhury2021little,
  title={A Little Charity Guarantees Almost Envy-Freeness},
  author={Chaudhury, Bhaskar Ray and Telikepalli, Kavitha and Mehlhorn, Kurt and Sgouritsa, Alkmini},
  journal={SIAM Journal on Computing},
  volume={50},
  number={4},
  pages={1336--1358},
  year={2021}
}

@article{KPW18,
  title={Fair enough: Guaranteeing approximate maximin shares},
  author={Kurokawa, David and Procaccia, Ariel D. and Wang, Junxing},
  journal={Journal of the ACM},
  volume={65},
  number={2},
  pages={1--27},
  year={2018},
  publisher={ACM New York, NY, USA}
}

@article{caragiannis2019unreasonable,
  title={The unreasonable fairness of maximum {N}ash welfare},
  author={Caragiannis, Ioannis and Kurokawa, David and Moulin, Herv{\'e} and Procaccia, Ariel D and Shah, Nisarg and Wang, Junxing},
  journal={ACM Transactions on Economics and Computation (TEAC)},
  volume={7},
  number={3},
  pages={1--32},
  year={2019},
  publisher={ACM New York, NY, USA}
}

@inproceedings{barman2018groupwise,
  title={Groupwise maximin fair allocation of indivisible goods},
  author={Barman, Siddharth and Biswas, Arpita and Krishnamurthy, Sanath and Narahari, Yadati},
  booktitle={Proceedings of the AAAI Conference on Artificial Intelligence},
  volume={32},
  number={1},
  year={2018}
}

@inproceedings{chaudhury2020efx,
  title={{EFX} Exists for Three Agents},
  author={Chaudhury, Bhaskar Ray and Garg, Jugal and Mehlhorn, Kurt},
  publisher = {{ACM}},
  booktitle = {Proceedings of the 21st ACM Conference on Economics and Computation ({EC})},
  pages={1--19},
  year={2020}
}

@article{amanatidis2017approximation,
  title={Approximation algorithms for computing maximin share allocations},
  author={Amanatidis, Georgios and Markakis, Evangelos and Nikzad, Afshin and Saberi, Amin},
  journal={ACM Transactions on Algorithms},
  volume={13},
  number={4},
  pages={1--28},
  year={2017},
  publisher={ACM New York, NY, USA}
}

@inproceedings{ghodsi2018fair,
  title={Fair allocation of indivisible goods: Improvements and generalizations},
  author={Ghodsi, Mohammad and Hajiaghayi, MohammadTaghi and Seddighin, Masoud and Seddighin, Saeed and Yami, Hadi},
  booktitle={Proceedings of the 19th ACM Conference on Economics and Computation ({EC})},
  pages={539--556},
  year={2018}
}

@inproceedings{garg2020improved,
  title={An improved approximation algorithm for maximin shares},
  author={Garg, Jugal and Taki, Setareh},
  booktitle={Proceedings of the 21st ACM Conference on Economics and Computation ({EC})},
  pages={379--380},
  year={2020}
}

@inproceedings{FST21,
  title={A tight negative example for {MMS} fair allocations},
  author={Feige, Uriel and Sapir, Ariel and Tauber, Laliv},
  booktitle={Proceedings of the 17th International Conference on Web and Internet Economics {(WINE)}},
  pages={355--372},
  year={2021},
  organization={Springer}
}

@inproceedings{simple,
  title={Simplification and Improvement of {MMS} Approximation},
  author={Hannaneh Akrami and Jugal Garg and Eklavya Sharma and Setareh Taki},
  booktitle={Proceedings of the 32nd International Joint Conference on Artificial Intelligence (IJCAI)},
  year={2023}
}

@inproceedings{akrami2024breaking,
  title={Breaking the 3/4 barrier for approximate maximin share},
  author={Akrami, Hannaneh and Garg, Jugal},
  booktitle={Proceedings of the 2024 Annual ACM-SIAM Symposium on Discrete Algorithms (SODA)},
  pages={74--91},
  year={2024},
  organization={SIAM}
}

@article{heidari2025improved1013,
  title={Improved maximin share guarantee for additive valuations},
  author={Heidari, Ehsan and Kaviani, Alireza and Seddighin, Masoud and Shahrezaei, AmirMohammad},
  journal={arXiv preprint arXiv:2510.10423},
  year={2025}
}

@article{huang2025fptas79,
  title={An FPTAS for 7/9-Approximation to Maximin Share Allocations},
  author={Huang, Xin and Zhou, Shengwei},
  journal={arXiv preprint arXiv:2511.13056},
  year={2025}
}

@inproceedings{VishwaEFX,
author = {Hv, Vishwa Prakash and Ghosal, Pratik and Nimbhorkar, Prajakta and Varma, Nithin},
title = {EFX Exists for Three Types of Agents},
year = {2025},
isbn = {9798400719431},
publisher = {Association for Computing Machinery},
address = {New York, NY, USA},
url = {https://doi.org/10.1145/3736252.3742509},
doi = {10.1145/3736252.3742509},
booktitle = {Proceedings of the 26th ACM Conference on Economics and Computation},
pages = {101–128},
numpages = {28},
location = {Stanford University, Stanford, CA, USA},
series = {EC '25}
}

@inproceedings{chan2019maximin,
  title={Maximin-aware allocations of indivisible goods},
  author={Chan, Hau and Chen, Jing and Li, Bo and Wu, Xiaowei},
  booktitle={Proceedings of the 28th International Joint Conference on Artificial Intelligence ({IJCAI})},
  pages={137--143},
  year={2019}
}

@inproceedings{farhadi2021almost,
  title={Almost envy-freeness, envy-rank, and Nash social welfare matchings},
  author={Farhadi, Alireza and Hajiaghayi, MohammadTaghi and Latifian, Mohamad and Seddighin, Masoud and Yami, Hadi},
  booktitle={Proceedings of the AAAI Conference on Artificial Intelligence},
  volume={35},
  number={6},
  pages={5355--5362},
  year={2021}
}

@inproceedings{ef2x-4-agents,
  author       = {Arash Ashuri and
                  Vasilis Gkatzelis and
                  Alkmini Sgouritsa},
  editor       = {Toby Walsh and
                  Julie Shah and
                  Zico Kolter},
  title        = {{EF2X} Exists for Four Agents},
  booktitle    = {AAAI-25, Sponsored by the Association for the Advancement of Artificial
                  Intelligence, February 25 - March 4, 2025, Philadelphia, PA, {USA}},
  pages        = {13555--13563},
  publisher    = {{AAAI} Press},
  year         = {2025}
}

@inproceedings{ef2x-restricted,
  author       = {Hannaneh Akrami and
                  Rojin Rezvan and
                  Masoud Seddighin},
  editor       = {Luc De Raedt},
  title        = {An {EF2X} Allocation Protocol for Restricted Additive Valuations},
  booktitle    = {Proceedings of the Thirty-First International Joint Conference on
                  Artificial Intelligence, {IJCAI} 2022, Vienna, Austria, 23-29 July
                  2022},
  pages        = {17--23},
  publisher    = {ijcai.org},
  year         = {2022}
}

@inproceedings{conitzer2017fair,
  title={Fair public decision making},
  author={Conitzer, Vincent and Freeman, Rupert and Shah, Nisarg},
  booktitle={Proceedings of the 2017 ACM Conference on Economics and Computation},
  pages={629--646},
  year={2017}
}

@article{aziz2020polynomial,
  title={A polynomial-time algorithm for computing a {P}areto optimal and almost proportional allocation},
  author={Aziz, Haris and Moulin, Herv{\'e} and Sandomirskiy, Fedor},
  journal={Operations Research Letters},
  volume={48},
  number={5},
  pages={573--578},
  year={2020},
  publisher={Elsevier},
  doi={10.1016/j.orl.2020.07.005}
}

@inproceedings{barman2019proximity,
  title={On the proximity of markets with integral equilibria},
  author={Barman, Siddharth and Krishnamurthy, Sanath Kumar},
  booktitle={Proceedings of the AAAI Conference on Artificial Intelligence},
  volume={33},
  number={01},
  pages={1748--1755},
  year={2019}
}

@misc{akrami2025matroidsequitable,
      title={Matroids are Equitable}, 
      author={Hannaneh Akrami and Siyue Liu and Roshan Raj and László A. Végh},
      year={2025},
      eprint={2507.12100},
      archivePrefix={arXiv},
      primaryClass={math.CO},
      url={https://arxiv.org/abs/2507.12100}, 
}

@misc{akrami2026counterexample,
      title={A Counterexample to EFX $n \ge 3$ Agents, $m \ge n + 5$ Items, Submodular Valuations via SAT-Solving}, 
      author={Hannaneh Akrami and Alexander Mayorov and Kurt Mehlhorn and Shreyas Srinivas and Christoph Weidenbach},
      year={2026},
      eprint={2604.18216},
      archivePrefix={arXiv},
      primaryClass={cs.GT},
      url={https://arxiv.org/abs/2604.18216}, 
}

@misc{mackenzie2026,
      title={Counterexamples to EFX for Submodular and Subadditive Valuations}, 
      author={Simon Mackenzie and Mashbat Suzuki},
      year={2026},
      eprint={2605.06451},
      archivePrefix={arXiv},
      primaryClass={cs.GT},
      url={https://arxiv.org/abs/2605.06451}, 
}

@article{GhodsiHSSY22,
title = {Fair allocation of indivisible goods: Beyond additive valuations},
journal = {Artificial Intelligence},
volume = {303},
pages = {103633},
year = {2022},
issn = {0004-3702},
doi = {https://doi.org/10.1016/j.artint.2021.103633},
url = {https://www.sciencedirect.com/science/article/pii/S0004370221001843},
author = {Mohammad Ghodsi and MohammadTaghi HajiAghayi and Masoud Seddighin and Saeed Seddighin and Hadi Yami}
}

@misc{CCMS25,
      title={Maximin Share Guarantees for Few Agents with Subadditive Valuations}, 
      author={George Christodoulou and Vasilis Christoforidis and Symeon Mastrakoulis and Alkmini Sgouritsa},
      year={2025},
      eprint={2502.05141},
      archivePrefix={arXiv},
      primaryClass={cs.GT},
      url={https://arxiv.org/abs/2502.05141}, 
}

@misc{FH25,
      title={Concentration and maximin fair allocations for subadditive valuations}, 
      author={Uriel Feige and Shengyu Huang},
      year={2025},
      eprint={2502.13541},
      archivePrefix={arXiv},
      primaryClass={cs.GT},
      url={https://arxiv.org/abs/2502.13541}, 
}

@inproceedings{SS25,
author = {Seddighin, Masoud and Seddighin, Saeed},
title = {Beating the Logarithmic Barrier for the Subadditive Maximin Share Problem},
year = {2025},
isbn = {9798400719431},
publisher = {Association for Computing Machinery},
address = {New York, NY, USA},
url = {https://doi.org/10.1145/3736252.3742621},
doi = {10.1145/3736252.3742621},
booktitle = {Proceedings of the 26th ACM Conference on Economics and Computation},
pages = {764–782},
numpages = {19},
location = {Stanford University, Stanford, CA, USA},
series = {EC '25}
}

@misc{feige25,
      title={From multi-allocations to allocations, with subadditive valuations}, 
      author={Uriel Feige},
      year={2025},
      eprint={2506.21493},
      archivePrefix={arXiv},
      primaryClass={cs.GT},
      url={https://arxiv.org/abs/2506.21493}, 
}
